\newtheorem{theorem}{Theorem} 
\newtheorem{lemma}{Lemma}
\newtheorem{proof}{Proof}
\newtheorem{corollary}{Corollary}
\begin{document}

\title{Online Estimation and Community Detection
of Network Point Processes for Event Streams}
\date{\today}


\author[1]{Guanhua Fang\thanks{These authors contributed equally.}}
\author[2]{Owen G. Ward*}
\author[3]{Tian Zheng}
\affil[1]{Department of Statistics and Data Science, School of Management, Fudan University}
\affil[2]{Department of Statistics and Actuarial Science, Simon Fraser University}
\affil[3]{Department of Statistics, Columbia University}
\date{October 2023}

\maketitle

\begin{abstract}
A common goal in network modeling is to uncover the latent community structure
present among nodes. For many real-world networks, 
the true
connections consist of events arriving as streams, which are then aggregated 
to form edges, ignoring the dynamic temporal component. 
A natural way to take account of these temporal dynamics of  
interactions is to use point processes as the foundation of network models
for community detection. 
Computational complexity hampers the 
scalability of such approaches
to large sparse networks. 
To circumvent this challenge, we propose a fast online variational
inference algorithm for
estimating the latent structure
underlying dynamic event arrivals on a network,
using continuous-time point process latent network models.
We describe this procedure for network models capturing community structure.
This structure can be learned as new events are observed on the network,
updating the inferred community assignments.
We investigate the theoretical properties of such an inference scheme,
and provide regret bounds on the loss function of this procedure.
The proposed inference procedure
is then thoroughly compared, using both
simulation studies and real data, to non-online variants. 
We demonstrate that online inference can obtain comparable 
performance, in terms of community recovery, to non-online
variants, while realising computational gains.
Our proposed inference
framework can also be readily modified to incorporate other popular 
network structures.

\end{abstract}


\section{Introduction}

\label{Introduction}
Network models are widely used to capture the structure in large complex data.
One common goal of many statistical network models is
\textit{community detection} \citep{zhao2012consistency, amini2013pseudo}, 
which aims to
uncover latent clusters of nodes in a network based on observed 
relationships between these 
nodes \citep{fortunato2016community}.
However, many of these models assume that the edges,
describing the relationship between these nodes,
are simple, i.e., with 
interactions between nodes described by binary edges
or weighted edges of counts. In 
reality, for many real networks,
activities between nodes occur as streams of interaction 
events
which may evolve over time and exhibit non-stationary patterns.
For example, social network data is commonly aggregated into 
binary edges describing 
whether there is a connection between two actors, when in reality the true 
underlying 
data could have consisted of multiple messages or other interactions
over a period of time. The binary 
edge might be constructed by considering if the number of such 
interactions is
above an 
arbitrary cut-off.
Aggregating these event streams and ignoring the time component 
to these interactions leads to an obvious loss of information.
Models which take advantage 
of the temporal dynamics of 
event streams
therefore
hold the potential to reveal richer latent structures behind these dynamic 
interactions
\citep{matias_semiparametric_2018}.



{\color{black}
To illustrate the role the event times can play in community detection, 
we simulate 50 replications of a small dense 
network of $n=100$ nodes with $K=2$ communities from the block inhomogeneous
Poisson process model described in Section~\ref{Background},
with the underlying intensity being
a simple step function which we include in Appendix~\ref{appendix:add_sims}.
Here the correct 
community structure 
is clear if each individual point process between a node pair $(i,j)$, $\lambda_{ij}(t)$, is known.
If we instead treated this as a traditional network community detection problem, ignoring
the presence of the event times, the following two 
approaches could be considered:
\begin{itemize}
    \item Aggregate the event data to form a single adjacency matrix $A$. This could
    be a weighted adjacency matrix, with the weights corresponding to the number of events
    observed between each node pair, resulting in a count matrix. Spectral clustering 
    could then be applied to this count matrix to infer the community structure.
    We show the performance of this clustering scheme across repeated simulations,
    in terms of adjusted rand index 
    (ARI) in the first boxplot in Figure~\ref{fig:agg}.
    \item An alternative approach would instead construct a series of adjacency matrices
    $A_1,\ldots,A_{M}$, where each such matrix is formed by aggregating data in consecutive
    time windows of some fixed length. For example, here we construct an edge between two nodes if there is at least one event between them in that time window.
    \citet{pensky2019spectral} provide a procedure
    with strong theoretical guarantees for such a series of adjacency matrices.
    In Figure~\ref{fig:agg} we consider
    window lengths $(T/100,2T/100,\ldots, T/10)$. In each case we use the method of
    \citet{pensky2019spectral} to then estimate the community structure, which is shown
    in the second boxplot of Figure~\ref{fig:agg}.
\end{itemize}
We see that both aggregation methods are unable to estimate the community
structure. In particular,
the method of \citet{pensky2019spectral} fails regardless of how we aggregate the
data to form a sequence of adjacency matrices. Similarly, performing spectral clustering
on the count matrix of events does not identify the communities.
In contrast, if we instead apply the 
network point process model of this work, which incorporates the event times when
performing community detection, we can correctly recover the true communities, as seen in the third boxplot of
Figure~\ref{fig:agg}.
}

\begin{figure}[ht]
    \centering
    \includegraphics[width=0.65\textwidth]{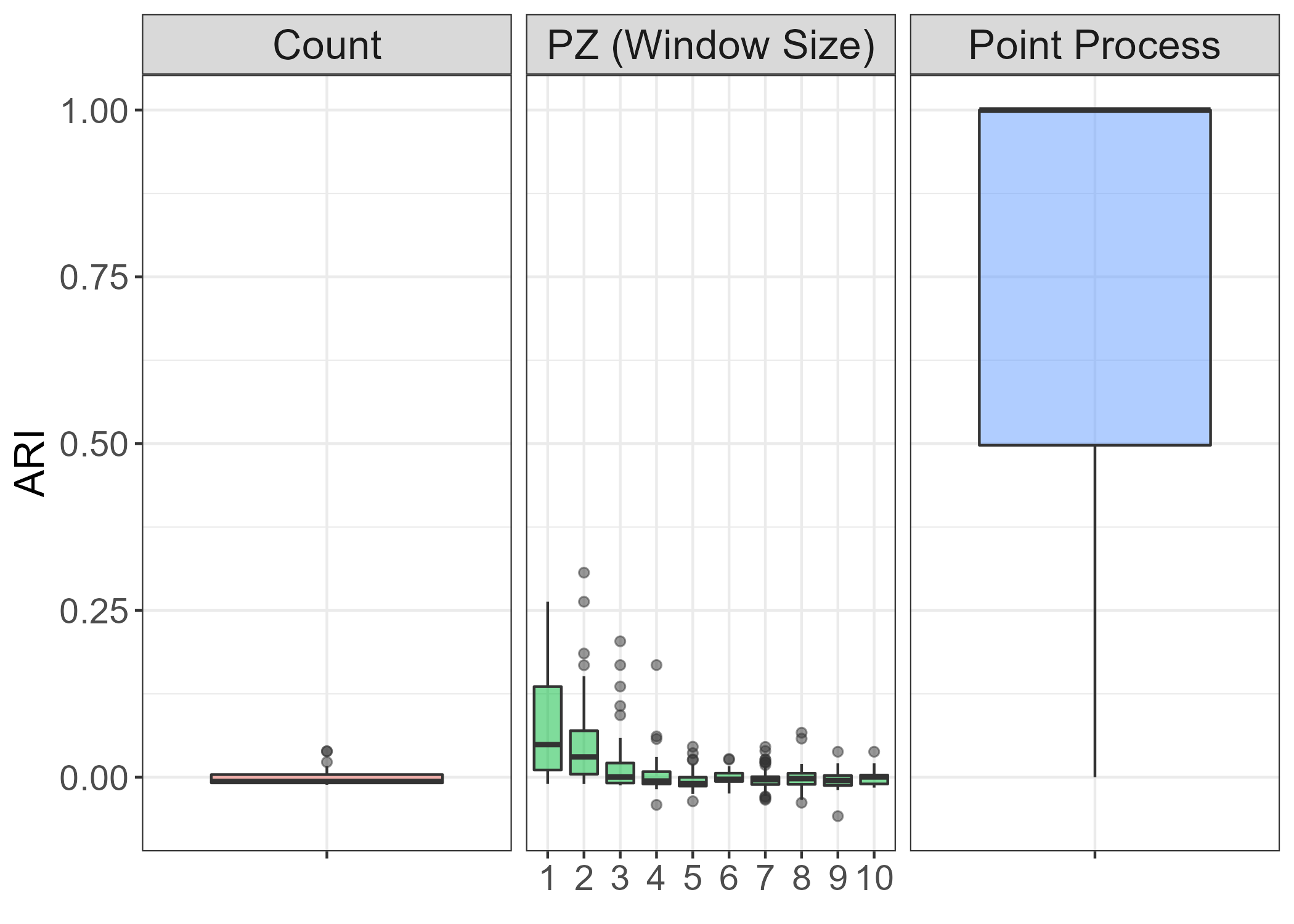}
    \caption{Community recovery in terms of
     Adjusted Rand Index (ARI) for events simulated from a point process
     block model. 
     Aggregate methods which look at the overall count between nodes,
     or bin the data (PZ)\citep{pensky2019spectral} cannot recover
     the community structure. This is the case regardless of the choice
     of window used to 
     aggregate the data to form the PZ estimator (as given by the multiple 
     box-plots).
     Modelling the exact event times 
     through a point process can recover the true community structure
     well.}
    \label{fig:agg}
  \end{figure}


Point processes are commonly used to model event streams, which can then 
be incorporated 
into network models to provide a community detection method
which accounts for the dynamics of these event streams on the network. 
Notably, these 
models are able to characterize 
sporadic and bursty dynamics, which are 
ubiquitous in event streams on networks. Network
models of this form have recently been
developed, uncovering more expressive community structure.
However, these methods suffer from the computational challenges 
associated with both 
network data and point process methods, 
and it is computationally difficult to scale them to large networks. 
Further, to truly account for the streaming nature of edges, 
we would like to be able to perform community detection as 
events are observed on the network, updating our 
model with the arrival of new data.
To do this, we propose an online variational inference framework and 
corresponding algorithms 
to learn the structure of these networks as interactions between
nodes arrive as event streams.

We derive theoretical
results for 
the proposed online variational algorithm.
These include a regret bound for the online
estimator,
along with convergence rates for parameter recovery 
and recovery of latent community assignments.
These results
demonstrate that our procedure is 
comparable to more expensive
non-online methods. 
We are not aware of comparable 
existing 
theoretical results in the 
context of online variational inference.
We then analyze the empirical performance of 
this algorithm and find that the proposed method 
performs well under various simulation settings,
in comparison to more computationally intensive methods 
which 
process the entire data set.
Finally, we compare our online estimation procedure with 
corresponding methods which process all data at once, and 
show that we obtain comparable results without 
repeated expensive computations over all events. We also 
discuss the potential to use online variational 
procedures of this form in other contexts
and for different types of network structure.

To the best of our knowledge, this is the first work on 
online estimation and 
online community detection for event streams on networks.
Existing counting process
models can be
readily
incorporated into our proposed framework. 
The computational issues
present in models of this form on networks
are resolved 
by introducing a new
online variational inference-based algorithm, which 
recursively updates the model parameters and nodes' 
latent memberships and has low memory cost.
Compared 
with the classical batch methods,
our algorithm is 
scalable with data size and can achieve similar 
prediction performance.
We also develop the first
corresponding theoretical results
in the context of online latent network 
models.
The performance of the proposed online method is 
guaranteed when the network structure is
sufficiently dense over time.

This paper is organized as follows. In 
Section~\ref{Background} we first
formally define the required notation for modeling event 
streams using 
point processes and consider existing work which posits block
type
models of point processes to model event streams on networks. 
We also review existing results for online variational inference.
In Section~\ref{Computation} we propose an online learning 
framework for
models of this form. We outline the main theoretical 
results for
this procedure in Section~\ref{Theory}. 
Section~\ref{Simulation} outlines
simulation studies comparing the performance of our procedure
to more expensive batch methods. In Section~\ref{Realdata} 
we implement our algorithm on multiple data sets of streaming
events 
on networks. Finally, in Section~\ref{Discussion}, we briefly
describe how this procedure could be modified and applied in 
other contexts,
demonstrating the usefulness of our developments more 
generally.

\section{Background}
\label{Background}
We first review the required framework of modeling event streaming data
using network point processes and describe previous work which
has been done to incorporate such structure into existing network models.
We then review existing work and results in online variational inference.

\subsection{Network Point Processes}
We wish to model pairwise directed interactions 
between $n$ nodes in a network over some time interval.
We observe events $\{(e_1,t_1), \ldots, (e_l, t_l)$, $\ldots, (e_L, t_L)\}$,
where $e_l$ is the $l$-th event 
and $t_l$ is its corresponding time stamp. We have 
$e_l \in \mathcal E$ for $l = 1, \ldots, 
L$ and $0 < t_1 < \ldots < t_L$,
where $\mathcal E$ is the set of all possible different event types. 
Specifically, for event data on a network, we have
$\mathcal{E} = \{(i,j) \in A ~ \vert ~ i,j \in [n]\} $ 
where $(i,j)$ represents a directed event from node $i$ to node $j$; $n$ is the 
size of the population, $[n] = \{1, \ldots, n\}$ and $A$ is the edge list, which encodes 
the network structure present. We use $\vert A \vert$ to denote the total number of 
node pairs which interact at least once in the network.
We can therefore equivalently represent these interactions as 
$$
\mathcal{D} = \{ (i_l, j_l, t_l): l=1,\ldots, L \},
$$
consisting of triples where $(i_l, j_l, t_l)$ denotes a directed interaction from 
the $i_l$-th node to the $j_l$-th node at time $t_l$.
Had only the event times been observed, without knowing 
the nodes involved in the interactions,
a natural way for modeling this type of streaming 
data is to use the machinery of counting processes. Under this framework,
$N(t)$ is used to denote the counting process, the number of events observed up to time 
$t$.
Along with this, the conditional intensity function is defined as 
\begin{eqnarray}
\lambda(t) = \lim_{dt \rightarrow 0} 
\frac{\mathbb E \left(N[t, t + dt) \vert \mathcal 
H(t)\right )}{dt},
\end{eqnarray}
where $N[t, t+ dt)$ represents the number of events between time $t$ and $t+dt$ and 
$\mathcal H(t)$ is the history filtration which is mathematically defined as 
$\sigma(\{N(s), s < t\})$ \citep{daley2003introduction}. 
The simplest counting process is the homogeneous 
Poisson process, where the intensity 
function does not depend on time $t$, i.e., $\lambda(t) \equiv \lambda$. 
Another common type of counting process is the
class of self-exiting processes, where the intensity 
function is positively influenced by historical events. Among self-exciting processes, the 
Hawkes process has been widely used, including 
for modeling earthquake occurrences
and financial data
\citep{ogata1988statistical,hawkes_hawkes_2018}.

Similarly, if we only observed the network structure $\mathcal{E}$ and
not the event times, traditional statistical network models could be 
applied to this data. 
Network models consisting of binary or discrete 
edges between nodes are extensively 
studied in the
statistical and machine learning literature. 
Perhaps the most widely used network model for binary edge networks 
is the stochastic block 
model (SBM). Stochastic block models assume that each node belongs to some latent 
cluster, with 
edges between nodes depending only on their latent cluster assignment 
\citep{nowicki_estimation_2001}.

When describing interactions between nodes in a network, it is often true
that the underlying interactions are in fact observed in continuous time
before then being aggregated into some discrete representation. For example,
repeated interactions between nodes in a social network could
be simply counted,
with a binary link formed if the number of (directed) interactions is 
above some threshold. One extension of these models for
static networks that has been considered
is to split the observations into multiple time windows
with a static network constructed for each of these windows.
In the context of messages on a social network, this would consist
of constructing a static network based on the interactions between nodes
in some time period (say, every week).
Community
detection methods have been developed for block models in this
context also \citep{pensky2019spectral}. However, these methods still require
compression of continuous time interactions into a static representation,
which can fail to capture the true expressive dynamics between nodes. 
Similarly, the length of window used is subjective, and it is not clear how
to choose the level of aggregation required.
The direct modeling of repeated event streams on a network has not been as
widely studied \citep{rossetti2018community}.

Recent extensions of stochastic block models have been used
to model events on networks
using point processes, the setting we consider here. 
This allows for community detection of nodes in a network
which captures the temporal dynamics which describe
events between nodes.
Suppose that $z = (z_1, \ldots, z_n)$ is a vector 
representing the latent class memberships of $n$ nodes in a network, where each node 
belongs to one of $K$ possible classes.
The latent classes are drawn 
from some vector $\pi$ 
which gives the latent probability of each of the $K$ classes. 
We assume that (directed) 
interactions between any two nodes in the network form a 
point process, which has 
intensity $\lambda_{ij}(t)$. We impose a block model 
structure on these intensities, in 
that the intensity between two nodes will depend on 
the latent class of both nodes. 
For a given node pair 
$(i,j)\in \mathcal{E}$ we have a counting process $N_{ij}(t)$.
Given node $i$ in latent class $z_i$ and node $j$ in latent class $z_j$ then we have
$$
\lambda_{ij}(t) = \lambda_{z_i z_j}(t).
$$
This model was first considered by \cite{matias_semiparametric_2018}.
In that setting, a block model was proposed where, conditional on the latent 
groups, interactions from any one node in the network to another follow an 
inhomogeneous Poisson process. The usual variational EM estimation procedure for 
binary networks was then extended to this setting, resulting in a variational 
semi-parametric EM type algorithm. Given the current estimate of the cluster 
assignments, the conditional intensities are then estimated 
using a non-parametric 
M-step, consisting of either a histogram or kernel based estimate.
A similar
model has been proposed elsewhere \citep{miscouridou2018modelling}, where edge 
exchangeable models for binary graphs are extended to this setting. 
Here, the baseline 
of a Hawkes process encodes the affiliation of each node to
the $K$ latent communities, with a common exponential kernel for all interactions. 
Inference for this model is carried out using Markov chain Monte Carlo (MCMC) 
\citep{gilks1995markov}.

While both these models are flexible and have been demonstrated 
to work well on real networks, they are both computationally intensive to fit.
Each method requires multiple iterations over all events in the network
to learn the 
community structure. 
Similarly, given the estimation procedures for these models, there 
is no immediate way to update these parameters in
the context of streaming events, 
to readily incorporate the observation of new events.
Given the continuous time nature of event streams we would like
to be able to update our estimated community structure either in real time
or, at least, without repeatedly using the entire event history. 
Online variational inference is one approach for this task, which we will first review.
We then provide an online learning procedure for models of this form which
avoids much of this computational burden and can more readily update the
community structure given new observations.

We will consider point process block models of this form
in this paper. In particular, we will consider several possible 
formulations of the conditional intensity, listed here.
Throughout this paper, we will use $\lambda$ to denote 
the generic parameters of a particular point process.

\begin{itemize}
    \item \textbf{Block Homogeneous Poisson Process Model}
    The intensity function of block homogeneous Poisson process model postulates the 
    following form
    \begin{eqnarray}\label{poisson:homo}
    \lambda_{ij}(t) = B_{z_i z_j}
    \end{eqnarray}
    The intensity function only depends on individuals' latent profile and does not depend 
    on time.
    \item \textbf{Block Inhomogeneous Poisson Process Model}
    The intensity function of block inhomogeneous Poisson process model postulates the 
    following form
    \begin{eqnarray}\label{poisson:nonhomo}
    \lambda_{ij}(t) = \sum_h a_{z_i z_j}(h) f_h(t)
    \end{eqnarray}
    where $f_h(t) \in \mathcal H$ with $\mathcal H$ being some functional space,
    {\color{black} where, throughout this paper, we use $H$ to denote the number of basis functions 
    in $\mathcal H$}. The 
    intensity function has the additive form, characterized by the linear combination 
    of basis functions. Under this case, the intensity function depends not only on an 
    individuals' latent profile but also on time.
    \item \textbf{Block Homogeneous Hawkes Process Model}
    The block homogeneous Hawkes is the exention of the original Hawkes model 
    \citep{hawkes1974cluster}.
    The intensity function postulates the following form
    \begin{eqnarray}
    \lambda_{ij}(t) = \mu_{z_i z_j} + b_{z_iz_j}\int_0^t f(s) d N_{ij}(s),
    \end{eqnarray}
    where $\mu$ represents the baseline intensity, $b$ represents the magnitude of impact 
    function and $f$ is the impact function, which indicates the influence of previous 
    events on the current intensity. A classical choice of $f$ is $f(s) = \lambda 
    \exp\{-\lambda s\}$ \citep{rizoiu2017tutorial}. {\color{black} This leads to a common
    $f(s)$ across all nodes (and communities) in the network. Note that other work 
    in the literature has considered other restrictions on $f$, with \citet{huang2022mutually}
    using a sum of known exponential kernels, to aid computational efficiency.}
    \item \textbf{Block Inhomogeneous Hawkes Process Model}
    The intensity function of the block inhomogeneous Hawkes process model postulates the 
    following form
    \begin{eqnarray}
    \lambda_{ij}(t) = \mu_{z_i z_j}(t) + b_{z_iz_j}\int_0^t f(s) d N_{ij}(s),
    \end{eqnarray}
    where $\mu$ is no longer constant over time. Instead, 
    $\mu_{kl}(t) = \sum_{h} a_{kl}(h) f_h(t)$ with $f_h(t)\in \mathcal H$ 
    with $\mathcal H$ being some functional space. 
    That is, we assume the baseline function can be characterized by the linear 
    combination of certain basis functions to capture different time patterns.
    
\end{itemize}

{\color{black} 
We here would like to highlight that this paper only focuses on
a network 
structure where a 
node behaves similarly as both source or as a 
destination and its latent membership does 
not change as the origin of an event changes.
In future work,
we could consider a ``double" group membership model as one possible 
extension. In such a formulation,
$z_i^{(s)}, z_i^{(d)}$ would be used to denote class labels of 
source and destination effects for node $i$, respectively. 
Another possible extension is to consider data with co-clustering structures (bipartite 
graph), which is a very common data format in gene expression data 
\citep{cheng2000biclustering, pontes2015biclustering} and
user-item recommendation systems
\citep{george2005scalable, wang2019solving}. In these settings,
the class labels of two 
groups of nodes are no longer symmetric. 
}

\subsection{Online Variational Inference}
In this paper we consider online variational inference 
for estimating community structure in event data on networks. 
In the general formulation of online learning,
data is observed sequentially in time with 
$\mathcal{D}_m$ being the $m$-th such observation 
and there is a loss function $\ell(\mathcal{D}_m,\hat{\theta}_m)$
for a given parameter estimate $\hat{\theta}_m$. This estimate 
$\hat{\theta}_m$
will be based on the past data 
$\mathcal{D}_{1:(m-1)}\vcentcolon= \{\mathcal{D}_1,\ldots,\mathcal{D}_{m-1}\}$
This loss function may vary depending on the inference procedure 
considered, but a natural choice is often the negative log 
likelihood, corresponding to online maximum likelihood estimation.

The aim of online inference is to find an estimate 
of the parameter $\theta$
which
is close to the best overall estimate, had all the data been observed.
We will denote such an estimate as $\theta^*$ which would 
minimize the generalization error 
$\mathcal{E}_{*}(\theta) = 
\mathbb{E}_{\mathcal{D}\sim P_{*}}\ell(\mathcal{D},\theta)$,
where $P_{*}$ is the true data distribution.
This quantity is unknown in practice and so interest instead 
lies in minimizing the cumulative error over time, 
$\sum_{i=1}^{M}\ell(\mathcal{D}_m,\hat{\theta}_m)$, 
where we observe data in $M$ sequential observations.
Given this estimate, a commonly studied quantity is the 
\emph{regret}, which is the difference between this 
cumulative error and the minimum cumulative error for a fixed 
estimate of the parameter, $\theta$,

$$
\sum_{m=1}^{M}\ell(\mathcal{D}_m,\hat{\theta}_m) 
- \inf_{\theta\in \Theta}\sum_{m=1}^{M}\ell(\mathcal{D}_m,\theta).
$$

Regret bounds can quantify the values of this quantity 
and have been obtained in certain settings \citep{shalev2012online}.
Such bounds can also
then be used to compute corresponding bounds on the 
generalization gap.
Online learning has been considered for Bayesian inference, 
which is in some sense a natural setting for such a scheme.
\citet{cherief2019generalization} describe such a setting,
where the ``online'' posterior can be written as 
$$
p_{m}^{\eta}(\theta) \vcentcolon= \frac{1}{Z_{m}^{\eta}}\pi(\theta)
e^{-\eta\sum_{i=1}^{m-1}\ell(\mathcal{D}_m, \theta)},
$$
for some learning rate $\eta$, prior $\pi$ and normalizing 
constant $Z_{m}^{\eta}$. In particular, if the loss function 
chosen is the 
log likelihood and $\eta=1$ then this is exactly standard Bayesian
inference, observing the data sequentially and updating the 
posterior. For $\eta<1$ then this is tempered Bayesian inference 
\citep{alquier2020concentration}.
\citet{cherief2019generalization} extend this 
idea to variational inference, utilising gradient 
updates based on the loss for the parameter estimates.
This is similar
to the setting we propose here. 

In particular, they 
consider three such formulations
for gradient based updates to the variational approximation
as data is observed in an online fashion. 
These differ in the objective function, which is composed 
of the gradient of the loss function and a KL term. 
In the first two cases, a regret bound can be derived for  
the corresponding updates, with one requiring the restriction 
that the variational family is mean field Gaussian.
For the gradient update corresponding to natural gradient
variational inference, a corresponding theoretical
regret bound cannot be obtained.
These results are investigated empirically on classical regression 
and classification problems, where there is \textbf{no latent 
structure} in the variational approximation.

\section{An Online Learning Framework for Event Streams}
\label{Computation}
We now outline the online learning framework we will 
utilise to perform online estimation of network community structure.
Many methods in statistics and machine learning process large data in batches. This often 
involves processing large volumes of data at the same time and repeatedly, with long 
periods of latency. 
More recently, data streaming is widely used for real-time aggregation, 
filtering, and testing. 
This allows for real time analysis of data as it is collected and can be used to gain 
insights in a wide
range of applications, such as 
social network data \citep{bifet2010sentiment} and transit data 
\citep{moreira2013predicting}.
Motivated by the aim of computational efficiency, 
in this work 
we propose a scalable online learning method for network point processes
with latent block structure to describe interactions between nodes.

\subsection{Online Learning Algorithms for Network Point Processes}
We first denote by $\theta$ the model parameters we 
wish to learn and by $l(\theta)$ 
the objective function, i.e. the log-likelihood function in our setting.   
Let $dT$ be a time window
such that $T$, the total time for which the event stream is observed,
can be subdivided into $M = T/dT$ time windows 
(we suppose $T/dT$ is an integer without loss of generality). 
For exposition we assume $T$ is known but it is not required, as we discuss later.
Following this subdivision into $M$ time intervals,
$l(\theta)$ can be rewritten as $l(\theta) = \sum_{m=1}^M l_m(\theta)$,
where $l_m(\theta)$ is the objective corresponding to log-likelihood of observed data in
$m$-th time window 
(in what follows, we use subscript $m$ to denote the quantity computed
 in $m$-th time window).

In a batch algorithm, the estimator $\hat \theta^{b}$ is defined as 
$\arg\max_{\theta} l(\theta)$, i.e. the best parameter estimate to achieve the 
maximum value of the objective function.
When $l(\theta)$ is taken as the log-likelihood 
function, $\hat \theta^{b}$ is also known as the maximum likelihood estimator 
(MLE).
Unfortunately, such optimization can become intolerably slow when
the data size becomes large and $l(\theta)$ contains latent discrete variables,
as in a SBM. 
Hence, we aim to construct an estimator $\hat \theta^{o}$ to 
approximate $\hat \theta^b$ with less computational burden,
while also
hopefully possessing the 
same properties as $\hat \theta^b$.
To this end,
we consider an online method for this optimization problem. The general scheme is described 
{\color{black} in Algorithm~\ref{gen_online}}.

\begin{algorithm}
    \caption{General Online Optimization}
    \label{gen_online}
    \begin{algorithmic}[] 
        \STATE Initialize parameters $\theta^{(0)} = \theta_0$.
        \FOR{$m = 1$ to $M$}
        \STATE \textbf{Update} $\theta$ by 
            $\theta^{(m)} = \theta^{(m-1)} + \eta_m \frac{\partial l_m(\theta)}{\partial \theta}$.
        \ENDFOR
        \STATE \textbf{Output} Set $\hat \theta^{o} = \theta^{(M)}$
    \end{algorithmic}
  \end{algorithm}

However, under our setting, the general online scheme does not apply by 
noticing that the true latent class label assignment is unknown to us. 
In other words, we need to integrate over all possible latent class 
configurations for computing the log-likelihood function, 
which is often intractable. 
In particular, for the class of models we consider here, 
$l(\theta) = \log\{\sum_z \pi_z \exp(l(\theta \vert z)) \} = 
\log \{ \sum_z \pi_z \exp (\sum_{m=1}^M l_m(\theta \vert z)) \}$, 
indicating that $l(\theta)$ can not be simply rewritten as 
$l(\theta) = \sum_{m=1}^M l_m(\theta)$.

We therefore use a variational approximation for the 
latent community assignments, which allows us to 
derive temporally evolving estimates of the 
community structure, and the corresponding 
point process intensity functions.
We take 
$$
q(z) := \prod_i q_i(z_i)
$$ 
and 
$q_i(z) = \mathcal M(\tau_i)$ where
$\tau_i = (\tau_{i1}, \ldots, \tau_{iK})$ and 
$\mathcal{M}(\tau)$ represents a multinomial distribution with 
parameter $\tau$. This is the standard mean field variational
approximation used for network models with latent 
community assignment
\citep{celisse2012consistency, matias_semiparametric_2018}
Given this, the remaining global 
parameters of our model are $\theta = (\pi, \lambda)$,
where $\lambda$ captures the parameters of the group level point processes
and $\pi$ the overall 
group proportions.
Our proposed online method for network point 
processes with latent community structure 
{\color{black} is described in Algorithm~\ref{online_point_process}}.

\begin{algorithm}
    \caption{Online Inference for Point Processes on Networks}
    \label{online_point_process}
    \begin{algorithmic}[] 
        \STATE Set initialization of $\theta^{(0)} = \theta_0$
        {\color{black} and $S^{(0)}(z_i) = 1/K$ for $z_i = 1, \ldots, K$, for each node $i$.}
        \FOR{$m = 1$ to $M$}
        \STATE \textbf{Update} the latent distribution 
            $q^{(m)}(z) = \prod_{i=1}^n q_i^{(m)} (z_i)$ by 
            \begin{equation}\label{key:step}
            q_i^{(m)} (z_i) \propto \pi^{(m-1)} \times 
            \exp\left\{\mathbb E_{q^{(m-1)}(z_{-i})} l_m(\theta^{(m-1)}\vert z)\right\} 
            \cdot S^{(m-1)}(z_i), 
            \end{equation}
            {\color{black} where 
            \begin{equation*}
            S^{(m)}(z_i) = S^{(m-1)}(z_i) \times  
            \exp\{\mathbb E_{q^{(m-1)}(z_{-i})} 
            l_m(\theta^{(m-1)}\vert z)\}
            \end{equation*}
            }
            giving estimates $\hat{\tau}^{m}$.
        \STATE \textbf{Update} the point process parameters by 
            \begin{align}\label{eq:alg:gd}
            \lambda^{(m)} = \lambda^{(m-1)} + \eta_m \frac{1}{ \vert A \vert }\frac{\partial \mathbb E_{q^{(m)}(z)} 
            l_m(\theta \vert z)}{\partial \lambda}.
            \end{align}
        \STATE \textbf{Update} the community proportions using 
        $$
        \pi_k =\frac{1}{n}\sum_{i=1}^{n}\tau_{ik}, \mbox{ for } k=1,\ldots,K.
        $$
        \ENDFOR
        \STATE \textbf{Output} Set $\hat \lambda^{o} = \lambda^{(M)}$,
        $\hat{\tau}^{o}=\hat{\tau}^{(M)}$ and $\hat{\pi}^{o}=\hat{\pi}^{(M)}$.
    \end{algorithmic}
  \end{algorithm}

{\color{black} To further expand on Algorithm~\ref{online_point_process},}
the 
quantity $S^{(m)}$ can be viewed as an $n$ by $K$ matrix 
which stores personal cumulative group evidence up to the
current time window for each individual $i$ and latent 
class $k$.\footnote{Here $z_{-i}$ is a sub-vector of $z$ with the
$i$th entry removed.}
The step size $\eta_m$ is the adaptive learning speed, 
which may depend on $m$.

One of the main contributions of our algorithm is that we update the 
distribution of latent profiles adaptively by using cumulative historical information. 
An individual's latent profile is approximated by a sequence of probability distributions,
$q^{(m)}(z) = \prod_{i=1}^n q_i^{(m)}(z_i)$, by 
assuming there is no dependence structure between the latent assignment of nodes.
In the update of $q^{(m)}$ we do not need to go through past events, as all group 
information has been compressed into the cumulative matrix $S^{(m)}$. 
Under mild assumptions and in suitable settings, this approximation works well and 
leads to consistent parameter estimation. We discuss 
the details of
this approximation 
further in Section~\ref{vi_approx}  

This model is of a similar form to that proposed for online estimation of LDA, 
where documents arrive as streams \citep{hoffman2010online}. In that setting, 
each document of $D$ known documents in the corpus is observed sequentially. 
After word counts of an
individual document are observed, an E-step 
is performed to determine the optimal local parameters for the per-document 
topic weights and per word topic assignments. Then an estimate of the optimal 
global of the topic weights is computed, $\tilde{\lambda}$, as if the total
corpus consisted of the current document observed $D$ times. 
The actual estimate of $\lambda$,
which parameterizes the posterior distribution over the topics,
is estimated using a weighted average of
the previous estimate and $\tilde{\lambda}$. This is similar 
in spirit to our proposed method, where we compute optimal values 
given the current observation data and update our overall estimates using 
these estimates from our current window.
Another related procedure was proposed by \citet{broderick2013streaming}, 
who highlights that the posterior targeted by SVI is generally based on 
the existence of a full dataset involving $D$ data points (e.g, documents).
A similar scenario occurs here, where standard SVI could be applied if we knew 
$T$, the total observation period, in advance. However, like 
\citet{broderick2013streaming}, we wish to consider a model 
which is flexible to the total observation period and so 
can be applied in a true streaming setting, as new data is being 
observed and consequently, the total observation period is 
changing. Our inference procedure is defined in terms of a fixed $T$,
however, our updates are only dependent on the data that has been observed 
up to the current time window. We can continue to 
apply our method as new data are observed, resulting in 
an increased $M$.

In the framework described by \cite{cherief2019generalization}, 
our proposed variational procedure is of the form denoted ``streaming variational
Bayes", where we seek to optimize a loss function in terms of
the current parameter estimates and the current variational
family, given previously observed data.

We provide detailed algorithms for learning Poisson processes and Hawkes processes
on networks of event streams. Specifically, Algorithm~\ref{algo:hom_pois}
describes the detailed online estimation procedure for the homogeneous Poisson process.
This is the simplest case but illustrates the main components of our inference scheme.
\footnote{The algorithm for the non-homogeneous Poisson process is similarly constructed.} 
It only requires storing the
cumulative number of events without storing any event history. 
This largely reduces memory cost.
Similarly, the appendix
describes the detailed online estimation procedure for the homogeneous Hawkes process 
with exponential-type impact function.\footnote{The corresponding algorithm for the 
non-homogeneous Hawkes process is similarly constructed.}
Also included is a support algorithm which describes the detailed procedure for keeping 
historical data by creating a hash map with the key being the pair of nodes and their 
history information.
From the view of statistical discipline, we only need to store the \textit{sufficient 
statistics} \citep{lehmann2006theory} which already contains all information about model 
parameters.
Specifically, we create a hashmap $\mathcal H$, whose key is `$(i,j)$" ($i,j \in [n]$) 
and corresponding value is the sufficient statistic of the specific model. These
values will be updated by incorporating new information, as new data in the current 
time window is processed.
Hence, the proposed algorithm effectively optimizes computational memory costs.

\subsection{Approximation via Variational Inference}
\label{vi_approx}
Before continuing, we first wish to expand the discussion of the 
streaming variational approximation being considered here.
When the labels of individuals are known, the conditional log likelihood can be written 
explicitly as
\begin{equation*}
l(\theta \vert z)   = \sum_{(i,j) \in A} \bigg\{\int_0^T \log \lambda_{ij}(t\vert z) dN_{ij}(t)
 - \int_0^T \lambda_{ij} (t\vert z) dt \bigg \}.
\end{equation*}
Then the complete log likelihood is
\begin{eqnarray}\label{likelihood_complete}
l(\theta, z) = \sum_{i=1}^n \log \pi_{z_i} + l(\theta\vert z).
\end{eqnarray}
Furthermore, the marginal log likelihood can be written as 
\begin{eqnarray}\label{likelihood}
l(\theta) = \log \left \{ \sum_z \left [ \prod_{i=1}^n \pi_{z_i} L(\theta\vert z)
\right ] \right \},
\end{eqnarray}
where $L(\theta \vert z) = \exp\{l(\theta \vert z)\}$ is the conditional likelihood.

As seen in \eqref{likelihood}, it is difficult to compute this likelihood directly,
which requires summation over exponentially many terms.
An alternative approach is by using variational inference \citep{hoffman2013stochastic} 
methods to optimize the evidence lower bound (ELBO) instead of the log likelihood.
The ELBO is defined as 
\begin{eqnarray}
\textrm{ELBO}(\theta) = \mathbb E_{q(z)} l(\theta, z) - \mathbb E_{q(z)} \log q(z),
\end{eqnarray}
where this expectation is taken with respect to $z$ and $q(z)$ is some approximate 
distribution for $z$.  
For computational feasibility, we take $q(z) := \prod_i q_i(z_i)$ and 
$q_i(z) = \mathcal{M}(\tau_i)$ with 
$\tau_i = (\tau_{i1}, \ldots, \tau_{iK})$.

By calculation, the ELBO can be obtained,
\begin{equation}
\textrm{ELBO}  = \sum_{(i,j) \in A} \sum_{k,l} \tau_{ik} \tau_{jl} \bigg \{\int_0^T \log \lambda_{kl}(t) dN_{ij}(t) 
  - \int_0^T \lambda_{kl} (t) dt \bigg \} 
  + \sum_i \sum_k \tau_{ik} \log{\pi_k / \tau_{ik}}.
\label{full_elbo}
\end{equation}
We can then define
\begin{equation*}
\mathbb E_{q(z)} l_m(\theta \vert z)  = 
\sum_{(i,j) \in A} \sum_{k,l} \tau_{ik} \tau_{jl} 
\bigg \{\int_{(m-1)\cdot dT}^{m \cdot dT} \log \lambda_{kl}(t) dN_{ij}(t) 
 - \int_{(m-1) \cdot dT}^{m \cdot dT} \lambda_{kl} (t) dt \bigg \},
\end{equation*}
and therefore, the ELBO can be rewritten as 
\begin{eqnarray*}
\textrm{ELBO} &=& \sum_{m=1}^M \mathbb E_{q(z)} l_m(\theta \vert z) +
 \sum_i \sum_k \tau_{ik} \log \pi_k/\tau_{ik}.
\end{eqnarray*}
Hence, the new representation is in additive form, which is more amenable to 
online optimization.  

Define the estimator $\hat \tau_i^{(m)}$ to be the maximizer for $m$-th time 
window of individual $i$ as
\begin{equation}
\label{objective:tau}
\hat\tau_i^{(m)}  \equiv 
\textrm{argmax}_{\tau_i}
\bigg \{\sum_{w=1}^{m} \mathbb E_{q_i(z_i)} \mathbb E_{q^{(w-1)}(z_{-i})}
l_w(\theta^{(w-1)} \vert z) 
  + \sum_i \sum_k \tau_{ik} \log \pi_k^{(m-1)}/\tau_{ik} \bigg \} .
\end{equation}
We {\color{black} utilise} Theorem \ref{thm:approx} 
to explain that the approximation step in our proposed algorithm is aiming
to find the best approximate posterior distribution for each individual at each time window.
\begin{theorem}\label{thm:approx}
The optimizer of \eqref{objective:tau} is given by equation \eqref{key:step}.
\end{theorem}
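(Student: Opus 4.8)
The plan is to recognize that, for a fixed node $i$ and window index $n$, the objective in \eqref{objective:tau} is an entropy-regularized \emph{linear} functional of the multinomial parameter $\tau_i$, whose maximizer over the probability simplex is a Gibbs distribution, and then to check that this Gibbs distribution coincides with the right-hand side of \eqref{key:step} once the recursion defining $S^{(n)}$ is unrolled.

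First I would rewrite the objective. Because $q_i(z_i) = \mathrm{multinom}(\tau_i)$ and, for each $w$, the inner expectation $\mathbb{E}_{q^{(w-1)}(z_{-i})}\, l_w(\theta^{(w-1)}\mid z)$ depends on $z$ only through $z_i$ after the remaining coordinates are integrated against the frozen distributions $q^{(w-1)}(z_{-i})$, I set
\[
g_w(k) := \mathbb{E}_{q^{(w-1)}(z_{-i})}\, l_w\big(\theta^{(w-1)} \,\big|\, z_i=k,\, z_{-i}\big).
\]
Then $\sum_{w=1}^n \mathbb{E}_{q_i(z_i)}\mathbb{E}_{q^{(w-1)}(z_{-i})} l_w(\theta^{(w-1)}|z) = \sum_k \tau_{ik}\sum_{w=1}^n g_w(k)$, and, discarding the terms of \eqref{objective:tau} that do not involve $\tau_i$, the objective reduces to
\[
F(\tau_i) = \sum_k \tau_{ik}\Big(\textstyle\sum_{w=1}^n g_w(k) + \log \pi_k^{(n-1)}\Big) - \sum_k \tau_{ik}\log \tau_{ik},
\]
a linear term plus the Shannon entropy of $\tau_i$, subject to $\tau_{ik}\ge 0$ and $\sum_k \tau_{ik}=1$.

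Next I would solve this constrained problem. Writing $c_k := \sum_{w=1}^n g_w(k) + \log \pi_k^{(n-1)}$, $w_k := e^{c_k}$ and $Z := \sum_l w_l$, one has $F(\tau_i) = \log Z - \mathrm{KL}\big(\tau_i \,\|\, (w_k/Z)_k\big)$, which over the simplex is uniquely maximized at $\tau_{ik} = w_k/Z \propto \exp(c_k)$ (the same conclusion follows from a Lagrange multiplier for the normalization constraint). Hence
\[
\hat\tau_{ik}^{(n)} \;\propto\; \pi_k^{(n-1)} \exp\Big\{\textstyle\sum_{w=1}^n \mathbb{E}_{q^{(w-1)}(z_{-i})}\, l_w(\theta^{(w-1)}|z)\Big\},
\]
with the exponent evaluated at $z_i = k$. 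Finally, unrolling $S^{(n)}(z_i) = S^{(n-1)}(z_i)\exp\{\mathbb{E}_{q^{(n-1)}(z_{-i})} l_n(\theta^{(n-1)}|z)\}$ from $S^{(0)}(z_i)=1/K$ gives $S^{(n-1)}(z_i) = \frac1K \prod_{w=1}^{n-1}\exp\{\mathbb{E}_{q^{(w-1)}(z_{-i})} l_w(\theta^{(w-1)}|z)\}$, so the right-hand side of \eqref{key:step} is proportional to $\pi_{z_i}^{(n-1)}\exp\{\sum_{w=1}^n \mathbb{E}_{q^{(w-1)}(z_{-i})} l_w(\theta^{(w-1)}|z)\}$ as well — the factor $1/K$ being absorbed into the normalizing constant — which matches $\hat\tau_i^{(n)}$ exactly.

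The computation is essentially routine; the one point needing care — and the nearest thing to an obstacle — is seeing \emph{why} $F$ is linear rather than quadratic in $\tau_i$, unlike the full ELBO \eqref{full_elbo} with its bilinear terms $\tau_{ik}\tau_{jl}$. This is precisely because \eqref{objective:tau} freezes the complementary coordinates at their previously computed values $q^{(w-1)}(z_{-i})$ rather than at the current $q_i$; this decoupling is what makes the Gibbs/KL characterization apply verbatim and produces the closed form \eqref{key:step}.
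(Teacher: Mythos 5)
Your argument is correct and follows essentially the same route as the paper's proof: reduce the objective for fixed $i$ to a linear-plus-entropy functional of $\tau_i$, recognize it as $-\mathrm{KL}$ to a Gibbs distribution (up to constants), and identify the exponent with the unrolled recursion for $S^{(n)}$. The extra remarks on the Lagrange-multiplier alternative and on why the objective is linear rather than bilinear are fine but do not change the substance.
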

\begin{proof}
By simplification, we have that 
\begin{multline*}
 \sum_{w=1}^{m} \mathbb E_{q_i(z_i)} \mathbb E_{q^{(w-1)}(z_{-i})} l_w(\theta^{(w-1)} \vert z) 
+ \sum_i \sum_k \tau_{ik} \log \pi_k^{(m-1)}/\tau_{ik} \\
= \sum_{k = 1}^K \tau_{ik} \sum_{w=1}^m 
\mathbb E_{q^{(w-1)}(z_{-i})} l_w(\theta^{(w-1)} \vert z_{-i}, z_i = k) 
+ \sum_k \tau_{ik} \log \pi_k^{(m-1)} - \sum_k \tau_{ik} \log \tau_{ik} + C_1 
\end{multline*}
\begin{multline*}
= \sum_{k=1}^K \tau_{ik} \log \bigg \{  \pi_k^{(m-1)} \exp \bigg [ 
\sum_{w=1}^m
 \mathbb E_{q^{(w-1)}(z_{-i})} l_w(\theta^{(w-1)} \vert z_{-i}, z_i = k) \bigg ] \bigg \} 
 -  \sum_{k=1}^K \tau_{ik} \log \tau_{ik} + C_1 
\end{multline*}
\begin{equation*}
 = - KL(q_i \ \vert  p_i) + C_2,
\end{equation*} 
where $C_1, C_2$ are some constants free of $\tau_i$ and $p_i$ is some 
multinomial distribution with 
\begin{equation*}
p_i(z = k) \propto \pi_k^{(m-1)} \times   
\exp\left\{\sum_{w=1}^m \mathbb E_{q^{(w-1)}(z_{-i})} 
l_w(\theta^{(w-1)} \vert z_{-i}, z_i = k)\right\}.
\end{equation*}
Hence, the maximizer is achieved when $q_i = p_i$, that is 
$$
\tau_{ik} \propto \pi_k \exp\{\sum_{w=1}^m \mathbb E_{q^{(w-1)}(z_{-i})} 
l_w(\theta^{(w-1)} \vert z_{-i}, z_i = k)\}.
$$
Lastly, we denote $\exp\{\sum_{w=1}^m 
\mathbb E_{q^{(w-1)}(z_{-i})} l_w(\theta^{(w-1)} \vert z_{-i}, z_i = k)\}$
as $S^{(m)}(k)$, which could be computed recursively by the 
formula
\begin{equation*}
S^{(m)}(k) = S^{(m-1)}(k) \times 
\exp\left\{\mathbb E_{q^{(n-1)}(z_{-i})} {\color{black} l_m}(\theta^{(m-1)} \vert z_{-i}, z_i = k)\right\}.    
\end{equation*}
This completes the proof.
\end{proof}

\begin{algorithm}[ht]
	\caption{Online-Poisson}
	\begin{algorithmic}[1]
	    \STATE Input: $data$, number of groups $K$, window size $dT$, edge list $A$.
	    \STATE Output: $\hat B$, $\hat \pi$.
	    \STATE Initialization: {\color{black}Set $S = 1/K$ for all entries, $\pi=1/K$, $B = [B_{kl}]_{k,l=1}^{K}$ from a uniform distribution. Set $\tau$ from a non-informative prior or using Algorithm~\ref{alg:ini}.}
	    \STATE Set $M = T/dT$ 
		\FOR{window $m = 1$ to $M$}
		\STATE Read new data between $[(m-1)\cdot dT, m \cdot dT]$
		\STATE Create temporary variables $S_p\in \mathbb{R}^{n\times K},B_{p1},B_{p2}\in \mathbb{R}^{K\times K}$.
		\STATE Set learning speed: $\eta = \frac{K^2}{\sqrt{m} n_t}$, where $n_t$ is the number of events between $[(m-1)\cdot dT, m  \cdot dT]$.
		\FOR{events in current window}
		\STATE Compute $B_{p1},B_{p2},S_p$:
		\STATE {\color{black}$S_p(i,k) = S_p(i,k) + \tau_{jl} $} for $i,j$ in events  
		\STATE {\color{black}$S_p(i,k) = S_p(i,k) -\tau_{jl}B_{kl}dT $} \\ \hspace{2.5cm} for $i,j$ in $A$
		\STATE {\color{black}$B_{p1}(k,l) = B_{p1}(k,l) + \tau_{ik}\tau_{jl}$} \\ \hspace{2.6cm}for $i,j$ in events
		\STATE $B_{p1} = B_{p1}/B$
		\STATE {\color{black}$B_{p2}(k,l) = B_{p2}(k,l) + \tau_{ik}\tau_{jl}$} for $i,j$ in $A$
	    \STATE {\color{black}$S\ = S + S_p$.}
	    \ENDFOR
	    \STATE Compute the negative gradient: $grad_B=B_{p1}-B_{p2}$. 
	    \STATE Update the parameters: 
	    \STATE Update $B$ by setting $B = B + \eta\cdot grad_{B}$
	    \STATE Update $\tau$ by setting $\tau_{ik} = \frac{\pi_k S_{ik}}{\sum_{k}\pi_k S_{ik} }$ for $i \in [n]$ and $k \in [K]$.
	    \STATE Update $\pi$ by setting $\pi_k = \frac{1}{n}\sum_{i}\tau_{ik}$ for $k=1,\ldots,K$.
		\ENDFOR
	\end{algorithmic} 
	\label{algo:hom_pois}
\end{algorithm}

\section{Local Convergence Analysis}
\label{Theory}
One natural question is how to better understand the theoretical properties of 
our proposed estimator.
Does the online algorithm provide a consistent estimator? How fast does the 
estimator converge to the true model parameters?
Different from regular online algorithm analysis, the key difficulties 
under the current setting are that the model we consider is a latent 
community
network model with complicated dynamics, and the proposed algorithm involves 
a variational approximation step. 

We present results which aim to address these questions. Specifically, 
Theorem \ref{regret:thm} provides a theoretical guarantee for the regret bound,
and 
Theorem \ref{rate:thm} characterizes the local convergence rate of the 
proposed online estimator.
Theorem \ref{thm:q:community} states that the variational distribution will 
place mass on the true node labels exponentially fast,
as the number of observation windows increases.
In these theoretical results we use $C, C_0$ and $c$ to denote constants which
arise in the statements of these theorems, as required.

\paragraph{Regret Analysis.} Before describing the main results, 
we first introduce some required notation and 
definitions. We define the loss function over the $m$-th time window as the 
negative normalized log-likelihood, i.e.
\begin{equation*}
\tilde l_m(\theta\vert z) = - \frac{1}{\vert A\vert }\sum_{(i,j)\in A} 
\bigg\{ \int_{(m-1)dT}^{m dT} \log \lambda_{ij}(t\vert z)dN_{ij}(t)  
 - \int_{(m-1)dT}^{m dT} \lambda_{ij}(t \vert z)dt \bigg\},
\end{equation*}
and define the regret as 
\begin{eqnarray*}
\textrm{Regret}(T) = 
\sum_{m=1}^M \tilde l_m(\theta^{(m)} \vert z^{\ast}) - 
\sum_{m=1}^M \tilde l_m(\theta^{\ast} \vert z^{\ast}),
\end{eqnarray*}
with $M = T/dT$. In these results we assume $T$ fixed, although as
stated this is not required for our practical implementation.
Regret($T$) quantifies the gap of the 
conditional likelihood, given the true latent membership $z^{\ast}$,
between the online estimator and the true optimal value,
after constructing our online estimates using observations over a fixed time period.

Notice that this problem is not convex, and 
we cannot guarantee the global convergence of the proposed method. 
However, when we take the initial value of $\theta$ sufficiently close 
to the true model parameters, 
we show that
the average regret vanishes with high probability.
The result is stated {\color{black}in Theorem~\ref{regret:thm}},
with the detailed conditions and technical proofs included in 
Appendix~\ref{app:thm_proof}.

\begin{theorem}\label{regret:thm}
Under regularity conditions C1 - C7 \footnote{See Appendix \ref{app:condition}},
for any $\theta^{(0)} \in B(\theta^{\ast}, \delta)$ and 
step size $\eta_m = \frac{c}{\sqrt{T}}$, we have that 
\begin{eqnarray}
\textrm{Regret}(T) \leq C_0 \sqrt{T} (\log (T \vert A\vert ))^2,
\end{eqnarray}
which holds with probability tending to 1 as $n\rightarrow\infty$.
\end{theorem}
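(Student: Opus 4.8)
The plan is to adapt the standard online-gradient-descent regret analysis to our setting, where the per-window loss $\tilde l_n(\cdot\,|\,z^\ast)$ is non-convex but, by the regularity conditions, is strongly convex and smooth on the ball $B(\theta^\ast,\delta)$. The first step is to show the iterates stay in this ball: I would argue by induction that if $\theta^{(n-1)}\in B(\theta^\ast,\delta)$, then since $\eta_n = c/\sqrt T$ is small and the stochastic gradient $\frac{1}{|A|}\partial_\theta \mathbb E_{q^{(n)}(z)} l_n(\theta\,|\,z)$ has norm controlled with high probability (the number of events per window concentrates, and the intensities are bounded away from $0$ and $\infty$ by the regularity conditions), the update moves $\theta$ by at most $O(c\log(T|A|)/\sqrt T)$; summed over $N=T/dT$ windows this is $O(c\sqrt T\,\log(T|A|)/dT)$, which must be shown to stay below $\delta$ — this forces $c$ to be chosen appropriately relative to $\delta, dT$. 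The concentration here is a Bernstein/Poisson-tail argument over $O(T|A|)$ event contributions, which produces the $(\log(T|A|))$ factors.

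The second step is the core telescoping argument. Writing $g_n$ for the gradient of the \emph{expected} (population) loss $\mathbb E\,\tilde l_n(\theta\,|\,z^\ast)$ at $\theta^{(n-1)}$ and $\hat g_n$ for the actual stochastic gradient used, I expand $\|\theta^{(n)} - \theta^\ast\|^2 = \|\theta^{(n-1)} - \theta^\ast\|^2 - 2\eta_n\langle \hat g_n, \theta^{(n-1)} - \theta^\ast\rangle + \eta_n^2\|\hat g_n\|^2$. Rearranging and summing telescopes the distance terms; the inner-product term, after adding and subtracting $g_n$, splits into (i) a curvature term $\langle g_n, \theta^{(n-1)}-\theta^\ast\rangle$ that lower-bounds the excess loss $\tilde l_n(\theta^{(n-1)}|z^\ast) - \tilde l_n(\theta^\ast|z^\ast)$ up to a smoothness remainder, and (ii) a martingale noise term $\langle \hat g_n - g_n, \theta^{(n-1)}-\theta^\ast\rangle$ whose partial sums I bound by Azuma–Hoeffding, giving $O(\sqrt T\,\log(T|A|))$. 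Collecting: $\textrm{Regret}(T) \le \frac{\|\theta^{(0)}-\theta^\ast\|^2}{2\eta} + \frac{\eta}{2}\sum_n \|\hat g_n\|^2 + (\text{noise}) = O(\sqrt T) + O(\eta T (\log(T|A|))^2) + O(\sqrt T \log(T|A|))$, and with $\eta = c/\sqrt T$ the dominant term is $O(\sqrt T (\log(T|A|))^2)$.

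The main subtlety — and the step I expect to be the real obstacle — is that $\theta^{(n)}$ is updated using the \emph{variational} distribution $q^{(n)}(z)$, not the true label $z^\ast$, so the gradient $\hat g_n$ is not an unbiased estimate of $g_n$. I would handle this by controlling the bias $\|\mathbb E[\hat g_n] - g_n\|$ in terms of the discrepancy between $q^{(n)}$ and the point mass at $z^\ast$; the approximation step (Theorem \ref{thm:approx}) together with the density/identifiability regularity conditions should guarantee that, with high probability as $m\to\infty$, $q^{(n)}$ assigns vanishing mass to incorrect labels (the cumulative evidence $S^{(n)}$ separates the classes), so this bias is lower-order and gets absorbed into the $(\log(T|A|))^2$ factor. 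The $m\to\infty$ probability statement in the theorem is exactly what drives this label-recovery guarantee, and making the bias bound uniform over all $N$ windows simultaneously is where a union bound over $T/dT$ windows contributes another logarithmic factor.
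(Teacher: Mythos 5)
Your proposal is correct and follows essentially the same three-part architecture as the paper's proof: an induction keeping $\theta^{(n)}$ in $B(\theta^\ast,\delta)$ while showing $q^{(n)}$ concentrates exponentially on $z^\ast$, the standard telescoping online-gradient bound with per-window gradient norms controlled by Poisson tails (yielding the $(\log(T|A|))^2$ factor), and a final term bounding the discrepancy between the variational objective and the conditional likelihood at $z^\ast$ via that label concentration. The only cosmetic differences are that the paper applies convexity directly to the empirical variational loss $h_n$ rather than routing through the population gradient with an Azuma-type martingale bound, and it controls the variational bias at the level of loss values rather than gradients.
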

Note that in Theorem \ref{regret:thm}, the  ``log" term comes from the fact 
that the number of events is not bounded in any fixed length time window, 
but can be bounded by some large number in log order with high probability.

\paragraph{Parameter Estimation.}
By considering the step size $\eta_m = m^{-\alpha}$, we 
further have 
{\color{black}Theorem~\ref{rate:thm}, describing the rate of local convergence for the model parameters.}
\begin{theorem}\label{rate:thm}
Under the same regularity conditions of 
Theorem~\ref{regret:thm} and $\theta^{(0)} \in B(\theta^{\ast}, \delta)$,
for $0 < \alpha < 1$, we have that 
$\ \vert \theta^{(m)} - \theta^{\ast}\ \vert _2^2 = O_p\left(m^{-\alpha} \log(T\vert A \vert\right)^2 +
 1/\sqrt{\vert A\vert })$ (for $m = 1, \ldots, M$) as $n \rightarrow \infty$.
\end{theorem}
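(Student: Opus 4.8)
\textbf{Proof sketch of Theorem~\ref{rate:thm}.}
The plan is to view the \textbf{Update} step as a perturbed, locally strongly concave stochastic gradient recursion, derive a one–step contraction inequality, and read off the rate by unrolling it together with a martingale tail bound. Write $\Delta_n=\theta^{(n)}-\theta^{\ast}$ and let $g_n=\frac{1}{|A|}\partial_\theta\,\mathbb E_{q^{(n)}(z)}l_n(\theta^{(n-1)}|z)$ be the gradient actually used, so $\Delta_n=\Delta_{n-1}+\eta_n g_n$. I would split $g_n=\bar g_n(\theta^{(n-1)})+\xi_n+r_n$, where $\bar g_n(\theta)=\frac{1}{|A|}\partial_\theta\,\mathbb E[\,l_n(\theta|z^{\ast})\,]$ is the population gradient at the true labels, $\xi_n=\frac{1}{|A|}\partial_\theta l_n(\theta^{(n-1)}|z^{\ast})-\bar g_n(\theta^{(n-1)})$ is mean zero given $\mathcal F_{n-1}$, and $r_n=\frac{1}{|A|}\partial_\theta\big(\mathbb E_{q^{(n)}(z)}l_n-l_n(\cdot|z^{\ast})\big)$ is the bias from replacing the true membership by the running variational posterior $q^{(n)}$ of Theorem~\ref{thm:approx}. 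Under the regularity conditions of Theorem~\ref{regret:thm} (local strong concavity with curvature at least $c$ and smoothness at most $L$ in $B(\theta^{\ast},\delta)$, Fisher consistency of each window's expected log-likelihood so that $\bar g_n(\theta^{\ast})=0$, bounded intensities, and the dense-over-time condition) one gets $\bar g_n(\theta^{(n-1)})=-H_n\Delta_{n-1}$ with $H_n$ having smallest eigenvalue at least $c$ and largest at most $L$; each of the $|A|$ summands in $\xi_n$ fluctuates by order $dT$, so $\mathbb E[\|\xi_n\|_2^2\mid\mathcal F_{n-1}]\le C\,dT/|A|$ with a sub-exponential tail; and $\|r_n\|_2$ is controlled by the membership–recovery error, which the concentration and separation arguments behind Theorem~\ref{regret:thm} bound by $O_p(1/\sqrt{|A|})$ uniformly over windows past a burn-in.

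Next I would establish the one–step inequality. Expanding $\|\Delta_n\|_2^2=\|(I-\eta_n H_n)\Delta_{n-1}+\eta_n\xi_n+\eta_n r_n\|_2^2$, taking conditional expectation (the cross term with $\xi_n$ vanishes), and using $\|I-\eta_n H_n\|\le 1-c\eta_n$ for $\eta_n=n^{-\alpha}$ small, I get on the event $\theta^{(n-1)}\in B(\theta^{\ast},\delta)$
\[
\mathbb E\big[\|\Delta_n\|_2^2\mid\mathcal F_{n-1}\big]\le(1-c\eta_n)\|\Delta_{n-1}\|_2^2+2\eta_n\|\Delta_{n-1}\|_2\|r_n\|_2+\eta_n^2\big(\mathbb E\|\xi_n\|_2^2+\|r_n\|_2^2\big).
\]
To make this rigorous I would introduce the stopping time $\tau=\min\{n:\theta^{(n)}\notin B(\theta^{\ast},\delta)\}$, run the recursion for $n\le\tau$, and show $\Pr(\tau\le N)\to 0$ from the same bound — the standard ``stay in the basin'' argument for locally convergent stochastic recursions, initialized via $\theta^{(0)}\in B(\theta^{\ast},\delta)$.

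Finally I would unroll the recursion. With $A_{j,n}=\prod_{k=j+1}^{n}(1-c\eta_k)$, which is of order $\exp\!\big(-\tfrac{c}{1-\alpha}(n^{1-\alpha}-j^{1-\alpha})\big)$, the transient term $A_{0,n}\|\Delta_0\|_2^2$ decays faster than any power of $n$ and is negligible. The bias contribution is $\sum_{j\le n}A_{j,n}\eta_j\|r_j\|_2$; splitting the sum at $n/2$, the early part is super-polynomially small (so even order-one bias before the burn-in does not matter) while the late part is at most $\tfrac1c\max_{j>n/2}\|r_j\|_2=O_p(1/\sqrt{|A|})$. The stochastic part is the martingale $M_n=\sum_{j\le n}A_{j,n}\eta_j\langle(I-\eta_jH_j)\Delta_{j-1},\xi_j\rangle$, whose predictable quadratic variation is $\sum_{j\le n}A_{j,n}^2\eta_j^2\,\mathbb E\|\xi_j\|_2^2$, of order $\eta_n\,dT/|A|$; Freedman's inequality, with the union bounds over the $N$ windows and the $|A|$ node pairs that already appear in the proof of Theorem~\ref{regret:thm}, gives $|M_n|=O_p\big(\sqrt{\eta_n/|A|}\,\log(T|A|)^2\big)$. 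Collecting terms, $\|\Delta_n\|_2=O_p\big(\sqrt{\eta_n/|A|}\,\log(T|A|)^2+1/\sqrt{|A|}\big)$, and the elementary inequality $\sqrt{\eta_n/|A|}\le\tfrac12\eta_n+\tfrac12/|A|\le\tfrac12\eta_n+\tfrac12/\sqrt{|A|}$ converts this into the claimed $O_p\big(n^{-\alpha}\log(T|A|)^2+1/\sqrt{|A|}\big)$. The main obstacle is the uniform control of the variational bias $r_n$: one must show the recursively updated posterior $q^{(n)}$ (equivalently the evidence matrix $S^{(n)}$) tracks the true communities closely at every window even though it is driven by the still-imperfect iterate $\theta^{(n-1)}$, which couples the parameter and label recursions and requires the concentration/identifiability analysis developed for Theorem~\ref{regret:thm}; keeping the iterates inside $B(\theta^{\ast},\delta)$ throughout is the secondary technical point.
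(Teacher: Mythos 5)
Your proposal follows essentially the same route as the paper's proof: a one-step expansion of $\|\theta^{(n+1)}-\theta^{\ast}\|^2$, a decomposition of the used gradient into the population gradient at the true labels plus a variational-bias term (controlled by the concentration of $q^{(n)}$ around $\delta_{z^{\ast}}$ established in the proof of Theorem~\ref{regret:thm}) plus an $O_p(1/\sqrt{|A|})$ sampling term from the Bernstein-type concentration lemma, followed by the contraction $(1-c\eta_n)$ from Condition C4 and an unrolling of the recursion with the sum split at $n/2$. The only substantive deviations are cosmetic refinements — you treat the noise via a Freedman/martingale bound where the paper simply bounds $\|\nabla h_n\|^2$ by $(\log(N|A|))^2$ with high probability, and your final AM--GM conversion actually leaves a stray $\log(T|A|)^2$ factor on the $1/\sqrt{|A|}$ term that the paper's cruder but more direct bookkeeping avoids.
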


In Theorem \ref{rate:thm}, we show that the proposed estimator convergences 
to the true value under a certain rate with high probability. 
The rate is affected by three factors, (1) the learning speed $\eta_m$, 
(2) the noise term $1/\sqrt{\vert A\vert }$ and 
(3) the additional ``log" term.
Noting that $\vert A\vert $ is the number of active pair of nodes,
we can view this as the level of network sparsity. 
Thus, as is expected, sparser networks may lead to larger sampling errors. 

\paragraph{Community Recovery.}
Moreover, with suitable initialization conditions on 
model parameters $\theta^{(0)}$ and the variational distributions $q^{(0)}$'s,
we show that $q_i$ will concentrate on the true label of the 
$i$-th node at an exponential rate.
We state this result below, deferring further discussion to the appendix.

\begin{theorem}\label{thm:q:community}
Under the regularity conditions of Theorem~\ref{regret:thm}, with 
probability $1 - M \exp\{-C d_n\}$, we have 
$q_i^{(m)}(\hat z_i = z_i^{\ast}) \geq 1 - C \exp\{-c m d_n\}$ for all $i \in \{1, \ldots, n\}$ and $m = 1,\ldots, M$. (Here degree number $d_n = \Theta(n^{r_d})$ with $0 < r_d < 1$.)
\end{theorem}

{\color{black} Before examining the empirical performance of our
proposed algorithm, we first describe some key steps in the 
practical implementation and highlight one important potential extension.}

\paragraph{Initialization.} 
In our theory, we require that the starting point $\theta^{(0)}$ is close to 
the true 
model parameter and the initial variational distribution $q^{(0)}$ satisfies Condition C7.
In practice, we can choose a random starting point 
for our initial values, such as by
sampling from 
uniform priors. 
Although this
initialization can work
well when we simply sample $q_i^{(0)}$ from the non-informative prior distribution
(e.g. the multinomial distribution $\mathcal{M}(1,(1/K, \ldots, 1/K))$), 
we acknowledge that there is a 
theoretical gap between the theory and the algorithm.
Recent work has characterized the 
landscape of the variational 
stochastic block model \citep{mukherjee2018mean},
claiming the futility of random initialization by 
showing that the parameter 
estimate falls in the neighborhood of a local stationary point with high probability.
However, such results are not enough to imply that the algorithm always fails to
find the global optimum. 
It is possible that the parameter estimate may leave the region of the local optimum after several iterations.

To further improve the performance of our online estimation method,
we propose a heuristic initialization procedure as given in Algorithm \ref{alg:ini}.
The high level idea is that we use the data in first $n_0 := C \log M$ windows
for estimating initial parameter and group labels.
We first check whether there are enough nodes with sufficiently large degrees.
If not, we determine that there is insufficient information in this initial 
data and use random initialization.
Otherwise, we get initial estimates of the 
community structure by performing three steps.
In the first step, we isolate nodes with large degrees.
For each such node,
we estimate the node-wise intensity parameters based on the 
events between that node and its neighbours.
We then do clustering on these estimates, obtaining node-wise center parameters.
We next do $K$-means clustering on these node-wise center parameters,
to get initial labels of the large-degree nodes. 
Given these nodes and their label estimate,
we fit the block point process model,
getting initial model parameter estimate $\tilde \lambda$.
In the second step, we estimate an initial group label for the remaining nodes.
To do this, for each remaining node,
we assign it to a community corresponding to the highest log-likelihood type score.
In the third step, given all initial node labels, 
we use all events within the initial time interval $[0, n_0 \cdot dT]$
to get final initial estimates of the network point processes.

Theoretically speaking, Algorithm \ref{alg:ini} works when there
are sufficiently many large-degree nodes,
leading to accurate initial estimates of the community structure and 
resulting estimates of the point process parameters also.
We note that we do not adopt a spectral clustering method
\citep{ng2001spectral, gao2017achieving} as an initializer in our setting for two reasons. One 
is to avoid large memory and computational cost. Secondly, 
there is no universal way to compute a good Laplacian matrix from event stream data
(and doing so may remove the dynamic structure which can be captured by our
point process models).

{\color{black}

\paragraph{Window Size}
One parameter in our model is $dT$, the chosen window size over which the
updates occur. We note that as this $dT$ does not lead to an aggregation
of the underlying data, the choice of $dT$ should not directly influence community 
detection performance.
We consider the choice of $dT$ in terms of community recovery
performance and computational time further in Appendix~\ref{appendix:add_sims}.
These results indicate that our model is reasonably robust to different choices of 
$dT$. However, in practice it makes sense to choose $dT$ sufficiently
large so that each window will contain at least one event, 
avoiding redundant updates.

\paragraph{Learning Rate.}

As suggested by Theorem \ref{rate:thm}, it is sufficient to ensure 
the step size $\eta_m$ has order
$m^{-\alpha}$ with $0 < \alpha < 1$.
When $\alpha > 1$, the learning rate 
is too small,
not allowing sufficient 
exploration over the parameter space. 
When $\alpha = 0$, the learning rate will be too large, 
such that the variance of the data observed in each window will 
dominate the estimation error and 
$\vert \theta^{(m)} - \theta^{\ast}\ \vert _2$  does not go to zero.  
When $\alpha$ approaches $0$, larger weight will be placed on the most recent 
observations in the process.
Furthermore, in the update of point process parameters, 
we could perform the gradient ascent step given in \eqref{eq:alg:gd} 
more 
than one time,
which would allow the data in the $m$-th time window to have more impact on $\theta^{(m)}$.  
}

{\color{black}
\paragraph{Community Switching.}
In the current work, we assume that the underlying latent class label $z_i^{\ast}$'s remains the same 
throughout 
the whole time span.
However, in many real-world situations, $z_i^{\ast}$ could change over time. 
Based on our current algorithm \ref{online_point_process}, 
we could make use of $S^{(m)}(z_i)$ to design an index 
to detect the node-level change, i.e., whether the 
underlying class label has changed or not for node $i$.
To be more specific, we could monitor the index 
$\mathcal D_i(m_1, m_2) := \tilde S^{(m_2)}(\hat z_i^{(m_1)}) - \tilde S^{(m_1)}(\hat z_i^{(m_1)})$,
where $\tilde S^{(m)}(k):= S^{(m)}(k) - \max_{k'} S^{(m)}(k')$ stands for the cumulative evidence gap (if the 
value is higher, the node is more likely to belong to class $k$) and $\hat z_i^{(m_1)} := \arg\max_k 
q_i^{(m_1)}(k)$ is the estimated label for node $i$ at the $m_1$-th window.
If the index $\mathcal D_i(m_1, m_2)$ is far below some threshold, then we could conclude that the 
underlying class label has changed between the $m_1$-th window and the $m_2$-th window. We hope to address
this issue further in future work.
}

\begin{algorithm*}[ht]
	\caption{Initialization Procedure for parametric network point process}\label{alg:ini}
	\begin{algorithmic}[1]
		\STATE Input: $data$, number of groups $K$, window size $dT$, edge list $A$.
		\STATE Output: $\hat z^{(0)}$, $\hat \theta^{(0)}$.
		\STATE Read data between $[0, n_0 \cdot dT]$, where $n_0 = C \log(M)$ with $C$ being a large constant. 
		\STATE We order nodes according to the decreasing order of degree numbers such that $d_{i_1} \geq d_{i_2} \geq \ldots \geq d_{i_m}$.
		\STATE We take out first $m_0$ nodes such that $d_{i_{m_0}} \geq C K$. 
		\STATE If $m_0 < K$, then we do random initialization of $\theta$ and $z$'s.
		\STATE If $m_0 \geq K$, then do the following.
		\STATE [~~~~~\texttt{==== Step 1 =====}]
		\STATE For $i \in \{i_{1}, \ldots, i_{m_0}\}$, we fit a uni-parametric model to pair $(i,i')$ with $i' \in \mathcal N_i$ to get estimated parameter $\theta_{ii'}$.
		We then perform $K$-means method to ($\theta_{ii'}$, $i' \in \mathcal N_{i}$) to get center vector $\overrightarrow{\theta_i}$ (entry of vector is sorted in ascending order).
		\STATE We perform $K$-means method again onto center vectors 
		$(\overrightarrow{\theta_i}, i \in \{i_{1}, \ldots, i_{m_0}\})$ to obtain the estimated memberships of those $i$'s, (i.e., $\hat z_{i_1}^{(0)}, \ldots, \hat z_{i_{m_0}}^{(0)}$).
		\FOR{$k_1$ from 1 to $K$}
		\FOR{$k_2$ from 1 to $K$}
		\STATE Find all pairs, $i'$ and $i''$ $\in \{i_1, \ldots, i_{m_0}\}$ such that
		 $i'$ belongs to group $k_1$ and $i''$ belonging to group $k_2$ and nodes $i'$, $i''$ connect with each other.
		\STATE Set $\tilde \theta_{k_1 k_2}$ be the estimated parameter by fitting the process from all these pairs of nodes.
		\ENDFOR
		\ENDFOR
		\STATE [~~~~~\texttt{==== Step 2 =====}]
		\FOR{$i$ from $i_{m_0 +1}$ to $i_m$}
		\FOR{$k$ from 1 to $K$}
		\STATE We plug in estimated parameter $\tilde \theta = (\tilde \theta_{k_1 k_2})$to get the score $s_{ik}$,
		\[s_{ik} := \sum_{i': i' \in \{i_1, \ldots, i_{m_0}\}, A_{ii'} = 1} l(\tilde \theta\vert z_i = k, \hat z_{i'}^{(0)}) + 
		\sum_{i': i' \in \{i_1, \ldots, i_{m_0}\}, A_{i'i} = 1} l(\tilde \theta\vert\hat z_{i'}^{(0)}, z_i = k). \]
		\ENDFOR
		\STATE We set the initial distribution of $z_i$ as $\tau_{ik} = q_i(k) \propto \exp\{s_{ik}\}$ and let $\hat z_i^{(0)} = \arg\max_k q_i(k)$ (If there is a tie, we break them uniformly randomly).
		\ENDFOR
		\STATE [~~~~~\texttt{==== Step 3 =====}]
		\STATE Given estimated label $\hat z^{(0)}$, we fit all data in [0, $n_0 \cdot dT$] to get initial parameter estimate $\hat \theta^{(0)}$.
		\STATE Return $\hat \theta^{(0)}$ and $\hat z^{(0)}$.
	\end{algorithmic} 
	\label{alg:initial:TPP:sparse}
\end{algorithm*}

\section{Experiments}
\subsection{Evaluation on Synthetic Data}
\label{Simulation}

Given our proposed inference scheme, we first wish to 
thoroughly validate its performance in simulation studies.
We shall evaluate our procedure in terms of both community 
and parameter recovery across a range of experimental settings,
while also investigating 
the empirical regret performance and monitoring the online
loss.
{\color{black}
For each of these experimental settings, unless otherwise stated,
we consider a fixed total
observation period of $T=200$ with $K=2$ equally sized communities
and a network of $n=200$ nodes.
The full details of this simulation setting are provided in 
Appendix~\ref{appendix:add_sims} and in the associated code repository.
We 
repeat each experiment 50 times, allowing us to 
examine the variability in these estimates. 
We demonstrate the performance of our algorithm 
using the block inhomogeneous Poisson process model
with $H=2$ here, (reference equation)
a flexible model for network point processes. We show several simulation settings of interest here, including further 
simulations for this model, and for the block homogeneous Hawkes model in 
Appendix~\ref{appendix:add_sims}.
All code used to create these simulations is available in 
an \href{link to code}{online repository}.
}

One important consideration here is the choice of $dT$, the window
size over which event data is processed. Here, throughout,
we use a fixed window size of $dT=1$. Additional simulations 
in Appendix~\ref{appendix:add_sims}
indicate that in practice, once the number of windows, $M$, 
is sufficiently large, this choice does not impact the overall
algorithm performance. We demonstrate the performance of our algorithm 
using the inhomogeneous Poisson process here, 
a flexible model for network point processes. 
We show several simulation results here, deferring remaining
experiments to Appendix~\ref{appendix:add_sims},
where we also demonstrate many of these results using
Hawkes process network point processes.

\paragraph{Impact of initialization procedure.}
As described in Section~\ref{Theory}, we include a practical initialization
scheme for online community detection, obtaining estimates 
using some of the initial events. To evaluate the effectiveness 
of this initialization procedure. We show the performance 
in terms of community recovery, as we vary the sparsity, 
{\color{black}$\rho$, of the network.
This corresponds to the probability of a node pair interacting, leading to the 
observation of temporal events between those nodes. We consider here small networks with
$n=200$ nodes. In small networks
the initialization can be particularly important, due
to the limited data available. We measure the performance here in terms of
the adjusted rand index (ARI) \citep{hubert1985comparing} between the true community assignments
and the estimates at the end of our online procedure, after random initialization and using 
our procedure.
Fig~\ref{fig_ocd:init} shows the corresponding ARI across 200 simulations for each choice 
of $\rho$ considered.
Except when the network is
very sparse, the initialization scheme leads to much better community recovery, 
after using these initial estimates to fit 
the model to the remaining events. We use $\rho=0.15$ in the following 
experiments unless otherwise stated.
}

\begin{figure}[ht]
  \centering
  \includegraphics[width=0.45\textwidth]{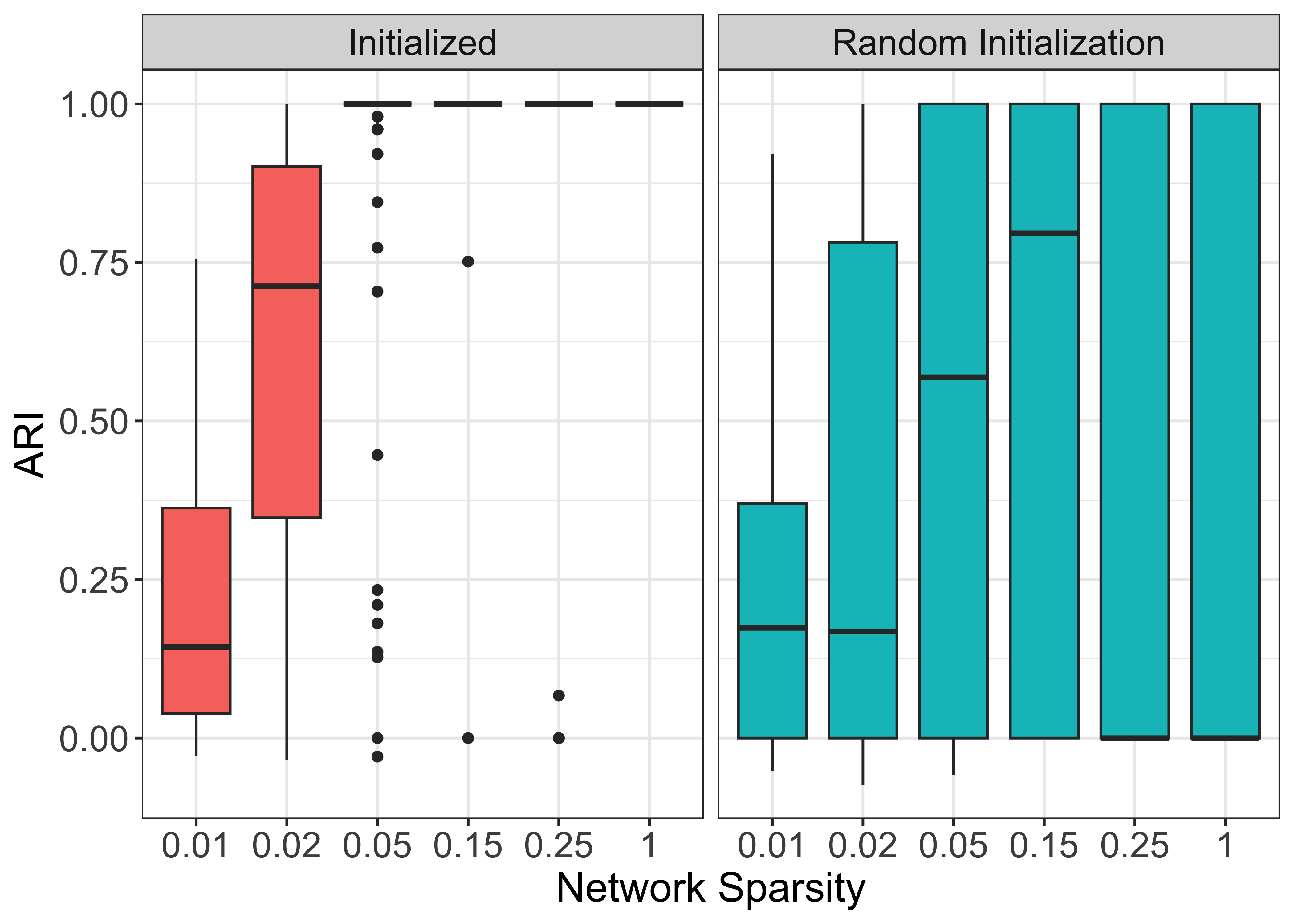}
  \caption{The role of our proposed initialization scheme on final community recovery. 
  While community recovery is challenging in very sparse networks, the initialization scheme leads to
  an improvement over random initialization.}
  \label{fig_ocd:init}
\end{figure}

\paragraph{Recover communities.}
Given this initialization scheme, we wish to demonstrate that
we can correctly recover the true communities 
for simulated data, and that this can be achieved
in an online fashion, as the data is observed. 
{\color{black} Here we consider 
unequal community sizes, with 40\% of nodes in one community and
60\% in the other.
We evaluate the performance of}
our proposed inference scheme on the final estimated community assignments,
having learnt this structure in an online fashion 
taking a single pass through the observed events. 
Figure~\ref{fig_ocd:recov_n} demonstrates the performance, in 
terms of ARI,
as we increase the size of the
network, considering $n=100,200,500,1000$ with
{\color{black} all other}
parameters fixed. 
We see that as the number of nodes in the network grows, with the total observation
period remaining fixed, we can better recover the true community 
structure of the nodes. In particular, for small networks it can be quite
challenging to recover the true structure, however once the number of 
nodes increases,
along with the corresponding number of total 
events which occur on the network,
we are better able to capture the true structure.
Similarly, while we can also investigate community recovery as we vary the
number of communities, for a fixed number of nodes
{\color{black} considering $K$ equally sized communities.}
Figure~\ref{fig_ocd:recov_k}
illustrates that as 
we increase the number of communities {\color{black}$K$ in 
our data, holding all other parameters fixed}, we remain 
able to recover the true community structure quite well, with expected increased 
uncertainty.

\begin{figure}[ht!]
    \centering
    \begin{subfigure}{0.45\textwidth}
     \centering
     \includegraphics[width=\textwidth]{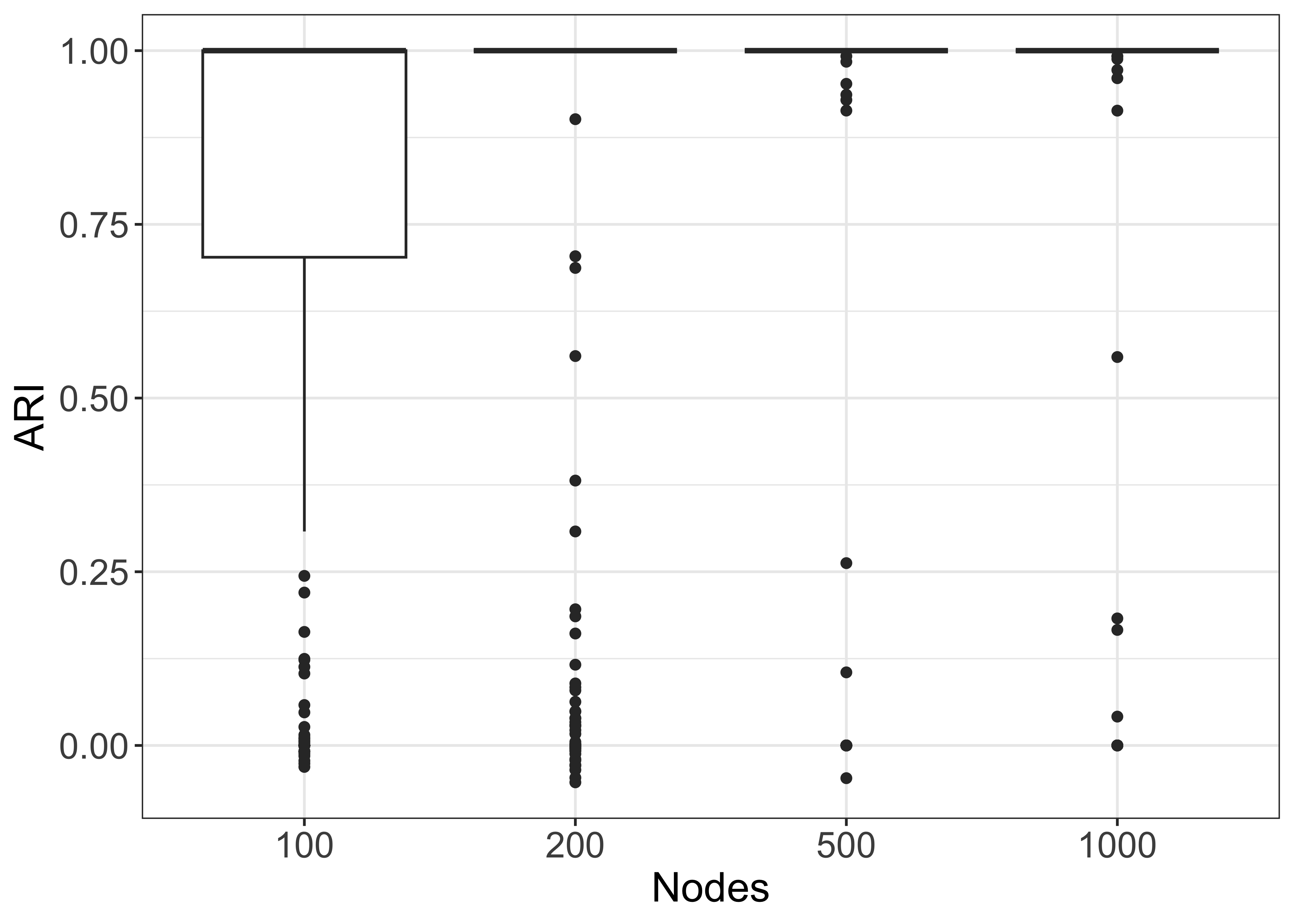}
     \caption{Community recovery, varying the number of nodes.}
     \label{fig_ocd:recov_n}
    \end{subfigure}
    \hfill
    \begin{subfigure}{0.45\textwidth}
     \centering
     \includegraphics[width=\textwidth]{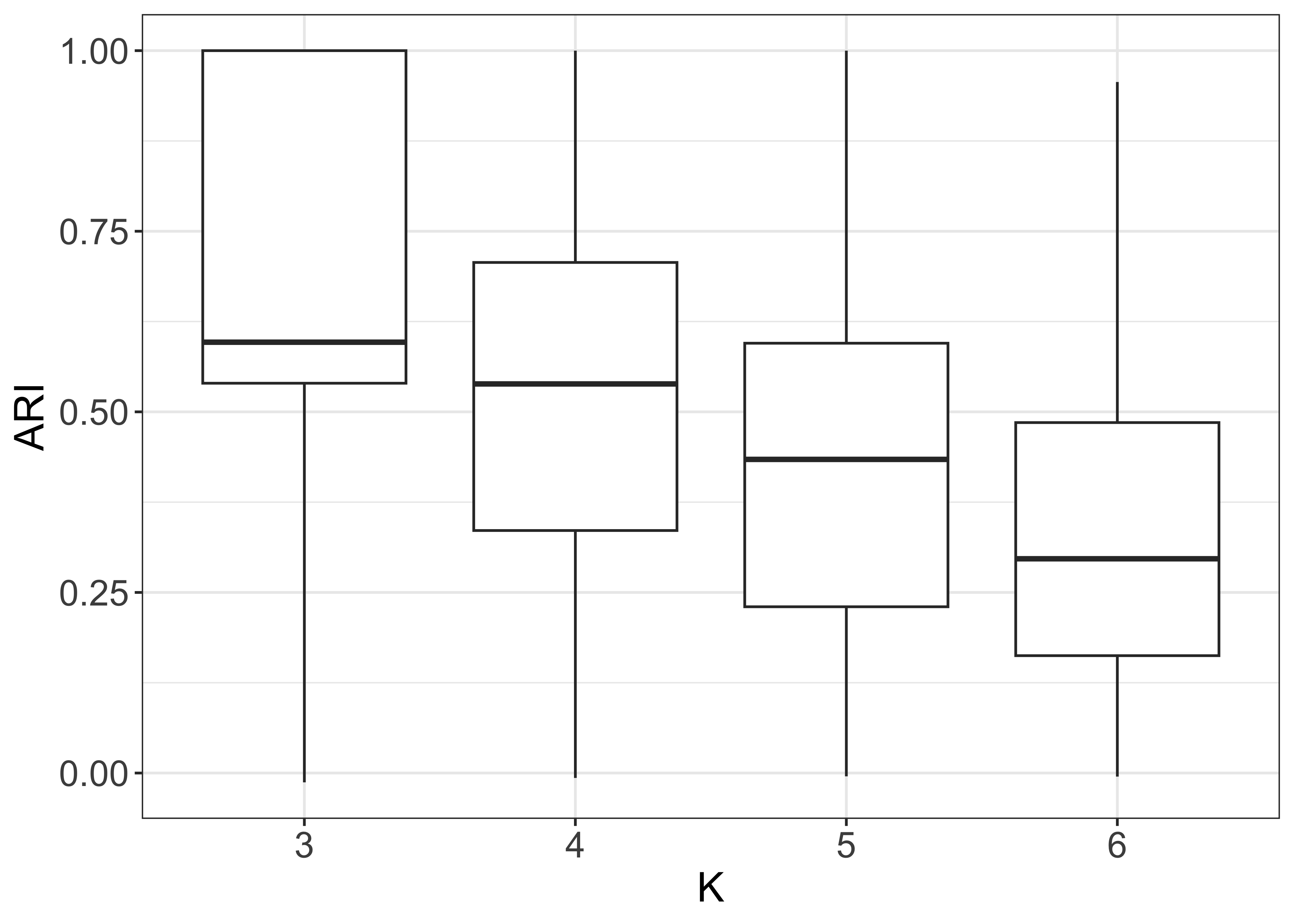}
     \caption{Community recovery while varying $K$, the number of communities.}
     \label{fig_ocd:recov_k}
    \end{subfigure}
    \caption{For constant network parameters, the performance of our procedure 
    for community recovery improves as the number of nodes in the network increases. Similarly, as
    we increase the number of communities, recovery becomes more challenging.}
    \label{overall-label}
\end{figure}

\paragraph{Online Community Recovery.}
Given that we iteratively update our estimates 
as we observe events, we can also examine how the performance
of our estimation scheme evolves over time. For example, we can investigate
how quickly the community structure, $\bm{z}$, is captured as events on the 
network are observed.
The online recovery of the other parameters, $\theta$, 
is discussed below.
Figure~\ref{fig_ocd:online_community}
illustrates this for the same simulation setting considered previously. 
As we observe and process the data 
in an online fashion we store 
$$
\bm{\hat{\tau}}^{(m)}= (\hat{\tau}_{1}^{m},\ldots, \hat{\tau}_{n}^{m}),
$$
the estimated node labels for the $m$-th window. We then compute
the ARI between the true communities and the estimates at these intermediate
time points. Figure~\ref{fig_ocd:online_community} shows this 
performance for a range of network sizes, 
{\color{black} 
where we have $K=2$ communities with 40\% and 60\% of the nodes in
each community respectively,
keeping all other model parameters as previously specified.}
Unsurprisingly, we see that as the number of nodes in the 
network increases, the estimated community structure quickly agrees 
with the true assignments. While all network sizes show a large degree of
variation in the ARI between the estimated and true labels initially, 
this decreases quickly as the number of nodes increases and
as we observe more events. 
In particular, for sufficiently large 
networks we are able to recover the community 
assignments well having observed only a quarter of all events.

\begin{figure}[ht]
  \centering
  \includegraphics[width=0.65\textwidth]{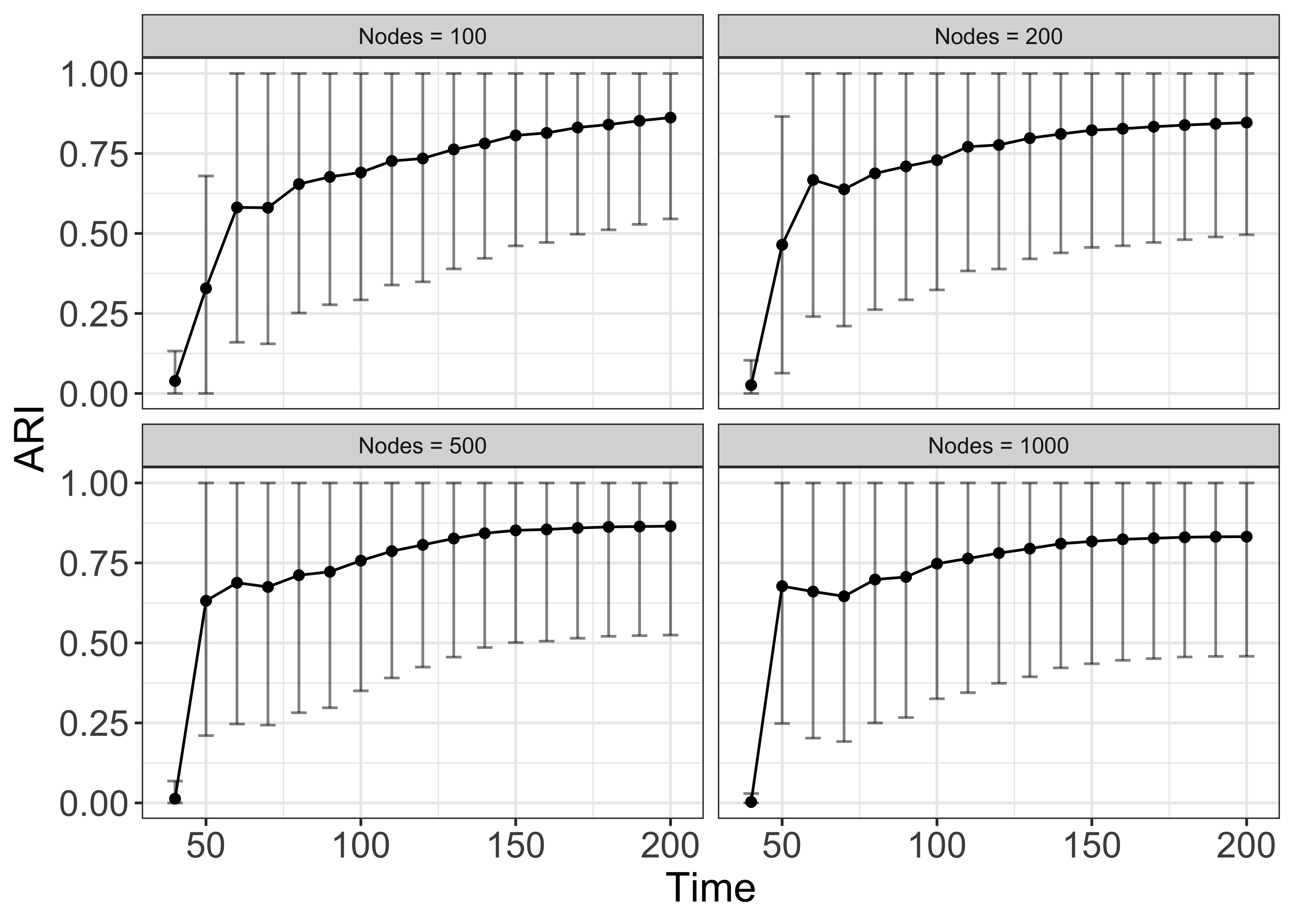}
  \caption{Demonstrating the ARI of the estimated clustering as we
  observe events in time, for varying network sizes. Here we show the mean ARI 
  with error bars corresponding to one standard deviation. We see that 
  for all network sizes, we can identify community structure,
  with the larger networks doing so quickly and with 
  less variability.}
  \label{fig_ocd:online_community}
\end{figure}

\paragraph{Monitoring Convergence.}
A natural question in variational inference is 
how to identify whether the model has converged and whether 
it has converged to a local optima. For coordinate ascent variational 
inference convergence can be 
assessed by monitoring the ELBO, identifying when the change
in this quantity from the previous
iteration of the coordinate ascent scheme falls below some threshold 
\citep{blei2017variational}.
As we are observing the data sequentially here we cannot use this metric 
to assess convergence in practice. The ELBO
will decrease as we observe new events.

Were the total observation period and all event times known 
in advance, we could 
compute the ELBO for the full data set, using the estimates 
we observe in an online manner (i.e, using only the data up to the current 
time point to form the parameter estimates, but then
computing the ELBO for all data). 
In what follows, this quantity will be denoted the \textit{full ELBO}.
An advantage of the online procedure is that 
we obtain good parameter estimates using only a small number 
of events, compared to batch estimates which must use all events
repeatedly. We illustrate the 
typical performance of this metric for several simulated datasets
in 
Figure~\ref{fig_ocd:elbo_events},
{\color{black} using the parameter settings initially specified for all
simulations}. 
We show the full 
ELBO against the percentage of events used to compute
the corresponding
parameter estimates. Similarly, we show the convergence 
of the batch procedure, which repeatedly uses all events to 
obtain a similar optimum. In each case, the online procedure quickly converges to
the same maximum as the batch method, only requiring one pass through all 
data. The batch method requires using all data multiple times.

\begin{figure}[ht]
  \centering
  \includegraphics[width=0.65\textwidth]{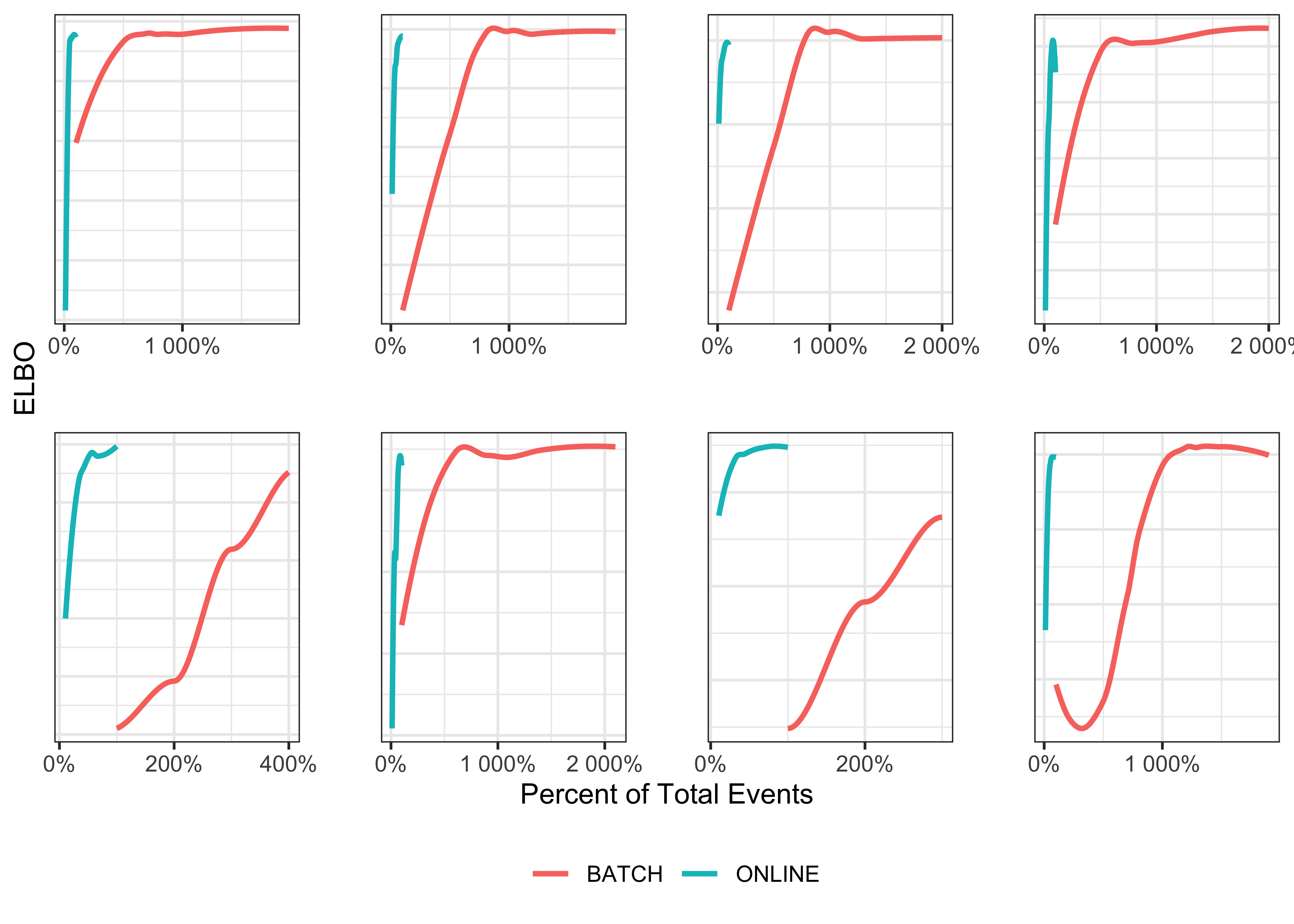}
  \caption{The full ELBO plotted against the percent of 
  total events (on the log scale)
  used to obtain the corresponding parameter estimates, for 8 simulated datasets.
  We see that the online procedure obtains good estimates of the
  full ELBO using all of the events at most once.}
  \label{fig_ocd:elbo_events}
\end{figure}

\paragraph{Parameter Recovery.}
We also assess our ability to recover the parameters of our model in 
an online fashion. {\color{black}For a block inhomogeneous} Poisson
model 
{\color{black} at time $t$ the rate function between communities $k_1$ and $k_2$
is given by $B_{k_1 k_2}(t) \coloneqq\sum_{h}a_{k_1 k_2}(h) f_h(t)$.
We monitor the relationship between $\hat{B}(t)$
and $B(t)$ across all community pairs $k_1,k_2$ by computing
$\frac{1}{H K^2}\vert\sum_{ijh}\hat{a}_{ij}^{(m)}(h)- \sum_{ijh}a_{ij}(h)\vert$,
to account for possible permutation of the node labels. 
}
{\color{black}Here we consider the default experiment setting considered
above, where we now vary the total observation time $T$, from
$T=50$ to $T=500$. We wish to examine how increased observation time
effects parameter recovery. We store the parameter $\hat{a}_{k_1 k_2}$
estimates as we process the data in an online fashion 
and use these to track the relationship between $\hat{B}$ and $B$
as we observe data for larger values of $T$.}
Figure~\ref{fig_ocd:param_recov} shows 
{\color{black}GAM} smoothed estimates and standard errors 
of this quantity across simulations
as we increase $T$.
This difference shrinks quickly across all values of $T$.

\begin{figure}[ht!]
    \centering
    \begin{subfigure}{0.45\textwidth}
     \centering
     \includegraphics[width=\textwidth]{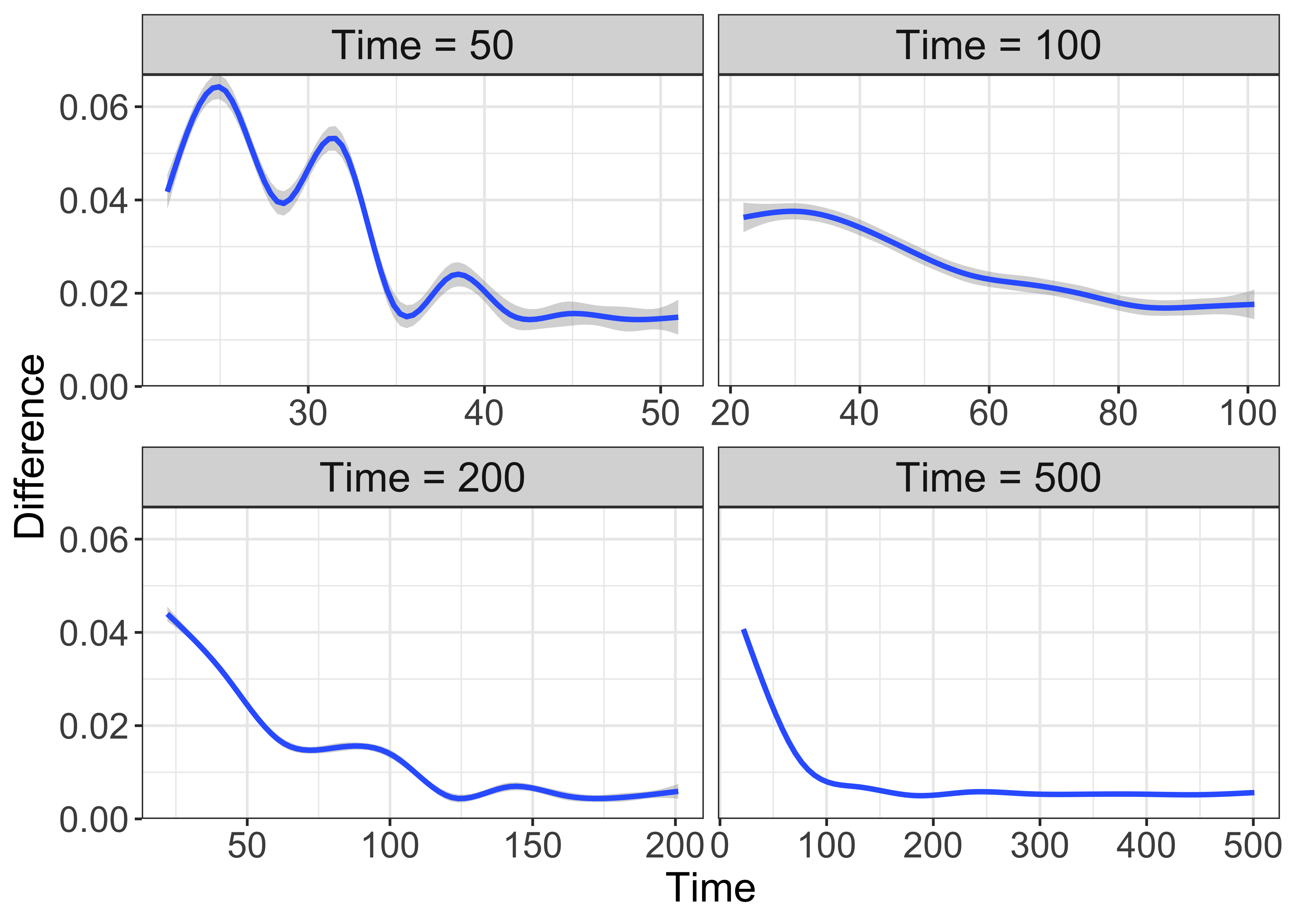}
     \caption{Recovery of rate matrix using online procedure.}
     \label{fig_ocd:param_recov}
    \end{subfigure}
    \hfill
    \begin{subfigure}{0.45\textwidth}
     \centering
     \includegraphics[width=\textwidth]{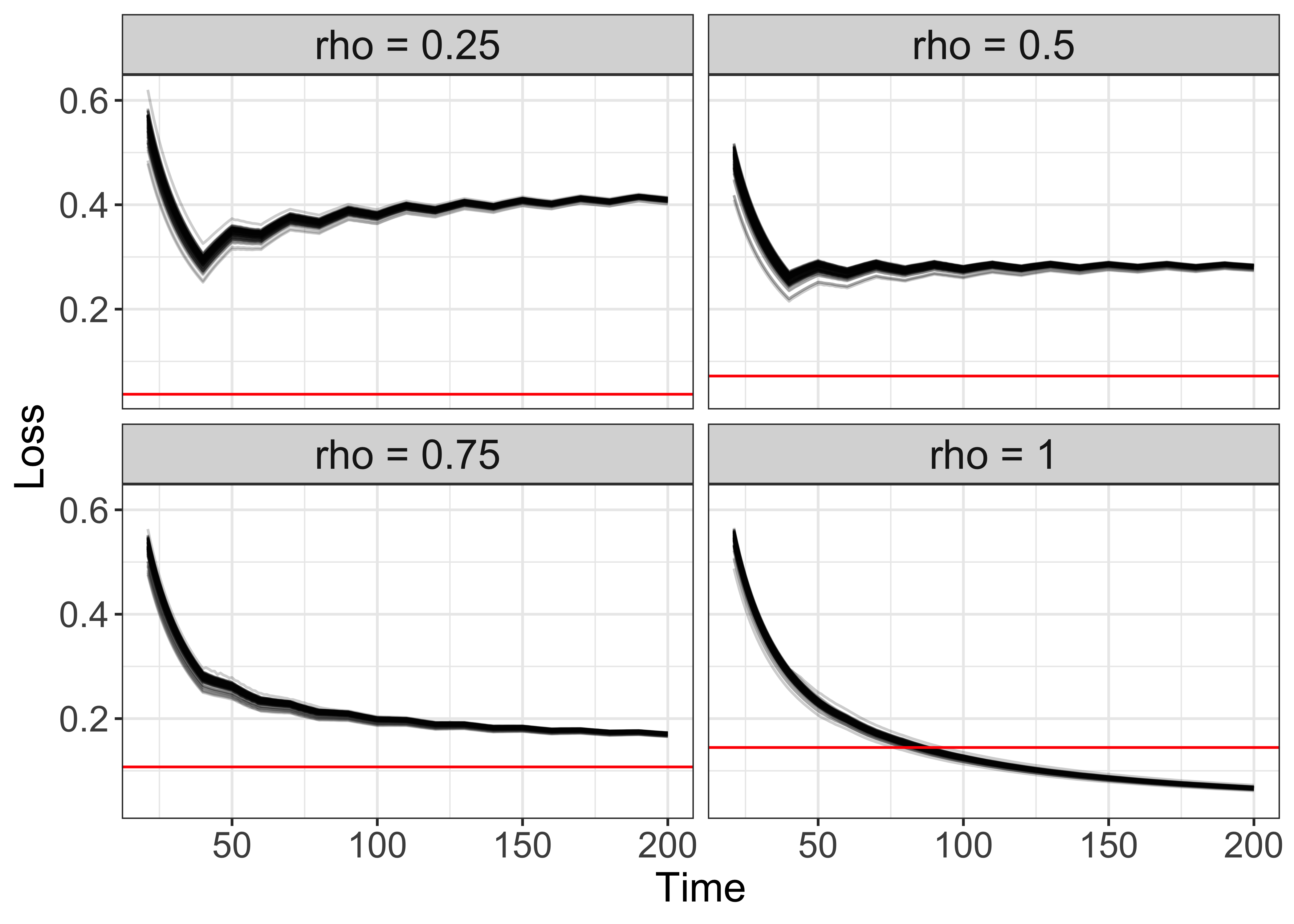}
     \caption{Average cumulative loss for our online inference procedure.}
     \label{fig_ocd:online_loss}
    \end{subfigure}
    \caption{Experiments demonstrating the online recovery of point process 
    parameters (a) for a fixed network setting as we increase the observation time,
    and (b), the corresponding average cumulative online loss, as we vary
    the sparsity of the underlying network. Both
    quantities decrease as we observe events and update our online estimates,
    with the online loss performing well as the density of the network increases.}
    \label{overall-label-2}
\end{figure}

\paragraph{Online Loss.}
A natural task when considering online learning is the ability to make
online predictions and look at the loss of such a procedure 
over time. We consider such a scenario here for event data 
from a block model. In particular, we will use the negative 
log likelihood as our loss function. Given parameter 
estimates from the $(m-1)$-th time window, the negative 
log likelihood for the next window of observed data is
given by
\begin{equation*}
  \ell_{m}(\theta) \vcentcolon= - l_{m}(\theta^{(m-1)}\vert z^{(m-1)}) = 
  - 
 \int_{(m-1)dT}^{m dT} \log \lambda^{(m-1)}_{ij}(t\vert z)dN_{ij}(t)  
 + \int_{(m-1)dT}^{m dT} \lambda^{(m-1)}_{ij}(t \vert z)dt.
\end{equation*}

We then define the average cumulative loss after $m$ observation batches 
as 
$\frac{1}{m}\sum_{i=1}^{m}\ell_{i}(\theta)$. We wish to compare this 
quantity to the best average cumulative loss, without learning the model 
in an online fashion. To do this, we determine the best overall 
batch estimate, 
using all events repeatedly, which we denote 
as $\hat{\theta}, \hat{z}$, and use these estimates to compute 
$\bar{l} \vcentcolon = \frac{1}{M}\sum_{m=1}^{M} - l_{m}(\hat{\theta}\vert\hat{z})$,
the best average cumulative loss in hindsight.
Note that here we are computing the average cumulative 
loss using both the current estimates of $\hat{\theta}$,
the parameters of the point process and also $\hat{z}$, 
the current estimates of the latent community assignments.

{\color{black}For the default simulation setting described previously,
we consider computing these quantities,}
varying $\rho$, the 
{\color{black} sparsity of the network point process}.
Each time we compute the online loss for the observations
in the subsequent window of length $dT = 1$. We show this in 
Figure~\ref{fig_ocd:online_loss}, where each black line denotes 
the average cumulative loss over time for one 
run of our online inference scheme. $\bar{l}$ is shown with the 
red horizontal line. 
We see that in sparse network settings, our method is 
unsurprisingly outperformed by the best average cumulative
loss in hindsight. However, as the density of the network
increases, we obtain similar loss in the online setting,
and our online procedure is actually better in a fully connected network.

\subsection{Experiments on Real Data}
\label{Realdata}

To evaluate our online algorithms on real data, 
we consider the problem of link prediction, using large temporal 
networks from the literature.
We consider three such networks, available from the Stanford Large Network Dataset 
collection \citep{leskovec2014snap}. They consist of the timestamps of:
\begin{itemize} 
    \item A collection of emails sent by users in a large university. 
    This consists of 300k emails between approximately 1000 users over 803 days.
    \item Messages sent between 2000 students on an online college social 
    network platform over 193 days, consisting of 60k messages. 
    \item Interactions from the Math Overflow website over 2350 days. 
    Here we have 25k users and 500k directed interactions, where an 
    interaction from user $i$ to user $j$ means that user $i$ responded 
    to a question posed by user $j$.
\end{itemize}

The temporal component in these networks changes over the observed time, 
with interactions much sparser towards the end of the observed time period. 
This makes link prediction a challenging problem in this setting.
For each of these networks, we fix $K$, the number of communities, 
based on knowledge of the network structure, as we aim to compare 
link prediction for a given $K$. We use $K$ as considered elsewhere 
for these examples \citep{miscouridou2018modelling}. 
We partition the events into training and test periods
which contain 85\% and 15\% of events respectively. 
Note that we consider the edge structure, $A$, known in advance, 
although we could also learn this from the training data 
and use that as our estimate of the overall edge list. 
Given the events observed initially, the goal is to 
predict the number of events that will occur between 
a directed pair over the test period.

To fit these models, we consider $dT$ such that
$M = \frac{T}{dT}\approx 100$ for the online estimators, 
with the same maximum number of iterations for our corresponding batch versions. 
For the inhomogeneous models, we consider $7$ step functions as our basis 
functions, aiming to capture day of the week effects present in our event 
streams. We take the average of these basis functions as an estimate our 
baseline rate. The results for this link prediction problem are shown in 
Table~\ref{table:real data}, with the corresponding 
computation times (in seconds) shown in Table~\ref{table:real_time}. 
Our online procedure obtains comparable estimates to more expensive batch 
estimates, and is better suited to estimation for the large networks 
considered here, obtaining comparable predictions generally quicker.

\begin{table}[ht]
  \centering
  \begin{tabular}{c |  c c c} 
   \hline
   \hline
   Method & Email & College & Math \\ [0.5ex] 
   \hline
   Poisson            & 11.73/12.9  & 5.16/13.96  & 2.13/1.99 \\
   Hawkes             & 19.42/12.74 & 5.32/5.09   & 2.06/2.14 \\
   In-Poisson ($H=7$) & 15.09/18.92 & 5.57/5.67   & 2.14/2.14 \\
   In-Hawkes ($H=7$)  & 14.84/12.9  & 5.58/5.44   & 2.14/2.14 \\
   \hline
  \end{tabular}
  \caption{Median RMSE of predicted event counts vs true event counts 
  in held out test set across 50 simulations. Online/Non-online estimates.}
  \label{table:real data}
  \end{table}

  \begin{table}[ht]
  \centering
  
  \begin{tabular}{c |  c c c} 
   \hline
   \hline
   Method & Email & College & Math \\ [0.5ex] 
   \hline
   Poisson            & 0.7/24.6  & 0.1/1.5  & 7.4/4.5     \\
   Hawkes             & 1.4/14.9  & 1.5/0.4  & 235.9/314.1 \\
   In-Poisson ($H=7$) & 2.0/20.4  & 1.9/5.1  & 257.7/28.1  \\
   In-Hawkes ($H=7$)  & 2.3/52.4  & 2.1/4.5  & 253.5/327.1 \\
   \hline
  \end{tabular}
  \caption{Median computation time for Online/Full
  Model fitting (seconds)
  across 50 simulations.}
  \label{table:real_time}
  \end{table}

\section{Discussion and Extensions}
\label{Discussion}
In this paper we propose a scalable online framework for 
learning the structure of large networks which are observed in the 
form of event streams between nodes in this network.
We develop a scalable online algorithm to 
estimate network models for this data, which
uncover community structure 
using point process models on the network, considering both computational 
speed and memory requirements. 
In both simulations and experiments utilising real data,
we observe that 
our method is scalable compared with batch methods,
especially for large networks 
where both $n$, the number of nodes and $T$, the total time for which 
events are observed, grow. 
We also provide theoretical results regarding the proposed 
online estimation procedure, in terms of convergence, regret and community
recovery, under mild conditions.

There are many ways this work could be extended. There are several aspects of 
community detection which we have not addressed. 
Further investigation could
indicate better methods for initializing algorithms
of this form in the online setting, or consider incorporating 
more heterogeneous network models \citep{zhao2012consistency, sengupta2018block}.
Similarly, selecting the number of communities is an important problem in these 
models and it is not immediate how to approach this with an online algorithm.
Our algorithm also assumes that the edge structure $A$ does not vary in time 
and it is of interest to consider a model where $A$ can also evolve over time. 
It is of interest to also estimate $A$ in an online 
fashion, along with deriving properties of estimators for this updated and
more challenging model.

One natural setting where online estimation procedures for networks
have meaningful applications concern identifying changes in the 
structure of networks. For example, social and computer networks 
are frequently the target malicious actors, aiming to disrupt the
nodes in a network. Previous work has considered this question in 
discretized time. \citet{heard2010bayesian} proposed a Bayesian model for anomaly detection,
to first identify a subset of nodes whose communication patterns changed
and then examine flagged nodes directly. This method can
identify anomalous nodes in discretized real time. A similar framework is proposed 
by \citet{lee2022anomaly}, which identifies departures from 
a fitted dynamic logistic model as anomalous.
Alternatively, changepoint models have been proposed for discretized 
dynamic networks. Under a Stochastic Block modeling framework, 
\citet{bhattacharjee2020change} describes identifying 
a changepoint in this model in the offline setting.
However, there is no existing work in the context of continuous 
time event data on networks, or to identify changepoints 
in an online fashion. The online procedure considered here
provides one potential avenue to consider these problems, 
with the aim of identifying anomalous nodes and changes 
in the network in an online fashion. 
{\color{black}For the class of models in this paper, 
changes could be in terms of the blockwise conditional 
intensity and or the community assignment of the nodes.
As highlighted by 
\citet{matias2017statistical}, it can be challenging to
identify changes in a model with community structure, with 
restrictions required on how the node community assignments can vary. We believe this
is a fruitful direction for future work.

An alternative challenge with data of this form would be to allow nodes to enter
or leave the network during the observation period. We do not consider such a 
possibility here, but this is an important next step for such models.
}

We also wish to point out the potential connections
between the framework we propose here  
and other popular longitudinal models for network
data (e.g. the dynamic latent space model 
\citep{sewell2015latent}, the temporal exponential random graph model 
\citep{leifeld2018temporal}, and the varying coefficient model for dynamic 
networks \citep{lee2017varying}) which can be viewed as the discrete 
time event processes. With suitable modifications, our results can be 
incorporated into
these related settings, and could be use to scale these methods 
to further large network data.

\printbibliography

\newpage



\begin{appendices}

\setcounter{figure}{0}
\setcounter{table}{0}
\setcounter{equation}{0}
\setcounter{theorem}{0}
\makeatletter
\renewcommand{\theequation}{S\arabic{equation}}
\renewcommand{\thefigure}{S\arabic{figure}}
\renewcommand{\thelemma}{S\arabic{lemma}}
\renewcommand{\thetheorem}{S\arabic{theorem}}

\section{Algorithm Details}

We include Algorithm~\ref{alg:hawkes} for the online Hawkes process as 
mentioned in the main text, along with Algorithm~\ref{alg:trim}, 
which is a key step for storing useful information in this procedure.
Some supporting functions used in Algorithm~\ref{alg:hawkes} are given below.
\begin{itemize}
    \item $a+=b$ represents $a = a+b$; $a-=b$ represents $a = a - b$.
    \item Formula for $impact(t)$ is 
    $\sum_{t_1 \in timevec} \lambda \exp\{- \lambda (t - t_1)\}$.
    \item Formula for $I_1$ is  
    $\sum_{t_1 \in timevec} \exp\{- \lambda (t - t_1)\}$.
    \item Formula for $I_2$ is  
    $\sum_{t_1 \in timevec} (t - t_1) \lambda \exp\{- \lambda (t - t_1)\}$.
    \item Formula for $integral(t,t_{end},\lambda)$ is 
    $ 1 - \exp\{-\lambda (t_{end} - t)\}$.
    \item Formula for $integral(t,t_{start},t_{end},\lambda)$ is 
    $ \exp\{-\lambda (t_{start} - t)\} - \exp\{-\lambda (t_{end} - t)\}$.
\end{itemize}

\begin{algorithm*}[t]
	\caption{Online-Hawkes}
	\begin{algorithmic}[1]
	    \STATE Input: $data$, number of groups $K$, window size $dT$, edge list $A$.
	    \STATE Output: $\hat \mu$, $\hat B$, $\hat \lambda$, $\hat \pi$.
	    \STATE Initialization: $S$, $\tau$, $\pi$, $B$, $\mu$, $\lambda$.
	    \STATE Set $M = T/dT$ and create an empty map $\mathcal D$.
		\FOR{window $m = 1$ to $M$}
		\STATE Read new data between $[(m-1)\cdot dT, m \cdot dT]$ and apply \textbf{Trim}.
		\STATE Create temporary variables: $\mu_{p1}$, $\mu_{p2}$, $B_{p1}$, $B_{p2}$, $S_p$.
		\STATE Set learning speed: $\eta = \frac{K^2}{\sqrt{m} m_t}$, 
    where $m_t$ is the number of events between $[(m-1)\cdot dT, m  \cdot dT]$.
		\FOR{key $(i,j)$ in $\mathcal D$}
		\STATE Create sub temporary $K$ by $K$ matrix variables: $\mu_{p1,tp}$, $B_{p1,tp}$, $B_{p2,tp}$, $S_{p,tp}$ and $\lambda_{st}$.
		\STATE Update $\mu_{p2}$ by setting $\mu_{p2}(k,l) \mathrel{{+}{=}} \tau_{ik} \tau_{jl} dT$ for $k,l \in [K]$.
		\STATE Update $S_p$ by setting $S_p(i,k) \mathrel{{-}{=}} \tau_{jl} \mu_{kl} dT$.
		\STATE Get time stamps, $timevec$, corresponding to $(i,j)$.
		\FOR {$t$ in $timevec$}
		\IF{$t > (m-1)dT$ }
		\STATE Compute the impact function value, $impact(t)$.
		\STATE Compute $I_1$ and $I_2$.
		\STATE Compute $\Lambda$, where $\Lambda(k,l) = \mu_{kl} + B_{kl}~impact(t)$.
		\STATE $\lambda_{st} \mathrel{{+}{=}} B \cdot (I_1 - I_2)/\Lambda - B\cdot (T_e - t) \exp\{-\lambda (T_e - t) \}$.
		\STATE $\mu_{p1,tp}(k,l) \mathrel{{+}{=}} 1/ \Lambda(k,l)$.
		\STATE $B_{p1,tp}(k,l) \mathrel{{+}{=}} impact(t)/ \Lambda(k,l)$.
		\STATE $S_{p,tp}(k,l) \mathrel{{+}{=}} \log(\Lambda(k,l))$.
		\STATE $B_{p2,tp}(k,l) \mathrel{{+}{=}} integral(t,t_{end},lam)$.
		\ENDIF
		\IF{$t \leq (m-1)dT$}
		\STATE $B_{p2,tp} \mathrel{{+}{=}} integral(t, t_{start}, t_{end},lam)$.
		\STATE $\lambda_{st} \mathrel{{+}{=}} B_{kl} (T_s - t) \exp\{- \lambda (T_s - t)\} - (T_e - t) \exp\{- \lambda (T_e - t)\}$.
		\ENDIF
		\ENDFOR
		\STATE $\mu_{p1}(k,l) \mathrel{{+}{=}} \tau_{ik} \tau_{jl} \mu_{p1,tp}(k,l)$.
		\STATE $B_{p1}(k,l) \mathrel{{+}{=}} \tau_{ik} \tau_{jl} B_{p1,tp}(k,l)$.
		\STATE $B_{p2}(k,l) \mathrel{{+}{=}} \tau_{ik} \tau_{jl} B_{p2,tp}(k,l)$.
		\STATE $S_{p}(i,k) \mathrel{{+}{=}} \sum_l \tau_{jl} (S_{p,tp}(k,l) - B_{kl} B_{p2,tp}(k,l))$.
		\ENDFOR
		\STATE $S \mathrel{{+}{=}} S_p$.
		\STATE Compute the negative gradients:
		$grad_B = B_{p1} - B_{p2}$, $grad_{\mu} = \mu_{p1} - \mu_{p2}$, $grad_{\lambda} = \sum_{kl} \tau_{ik}\tau_{jl}\lambda_{st}(k,l)$.
		\STATE Update parameters: $B = B + \eta \cdot grad_B$, 
		 $\mu = \mu + \eta \cdot grad_{\mu}$, $\lambda = \lambda + \eta \cdot grad_{\lambda}$.
		\STATE Update $\tau$ by setting $\tau_{ik} = \frac{\pi_k S_{ik}}{\sum_{k}\pi_k S_{ik} }$ for $i \in [n]$ and $k \in [K]$.
		\STATE Update $\pi$ by setting 
		$\pi_k = \frac{1}{n} \sum_i \tau_{ik}$
		for $k = 1,\ldots,K$.
		\ENDFOR
	\end{algorithmic} 
	\label{alg:hawkes}
\end{algorithm*} 

\begin{algorithm*}[t]
	\caption{Trim}
	\begin{algorithmic}[1]
	    \STATE Input: $\mathcal D$, truncated length $R$, current time $t_{current}$, $data_{new}$. 
	    \STATE Output: $\mathcal D$.
		\FOR{$event$ in $data_{new}$}
		\STATE Get node pair $(i,j)$ and time stamp $t$.
		\IF{key $(i,j)$ is already in $\mathcal D$}
		\STATE We get the corresponding queue. We then push $t$ at the back of this queue and update $\mathcal D$.
		\ENDIF
		\IF{key $(i,j)$ does not exist in $\mathcal D$}
		\STATE We create an empty queue, push $t$ to it and update $\mathcal D$.
		\ENDIF
		\ENDFOR
		\FOR{key $(i,j)$ in $\mathcal D$}
		\STATE Get the queue $timeque$ corresponding to key $(i,j)$ and let $t_{front}$ be the first element of $timeque$.
		\WHILE{$t_{current} - t_{front} > R$}
		\STATE Pop the first element of $timeque$.
		\STATE Set $t_{front}$ be the first element of current $timeque$.
		\ENDWHILE
		\ENDFOR
	\end{algorithmic} 
	\label{alg:trim}
\end{algorithm*} 

As discussed in main paper, we only need to store the 
sufficient statistics of the particular model in each setting. 
We show two examples in Table~\ref{table:storage}. 
In the homogeneous Poisson setting, we only need to store the cumulative 
counts for each pair of sender and receiver ($l_{user1, user2}$). 
In the Hawkes setting, we only need to store the recent historical events 
since the old information decays exponentially fast and thus has 
vanishing impact on the current event.   

\begin{table}[h]
\caption{The Data Structure for Storing History Events. The left diagram 
shows the structure under Poisson model, where the key is the pair of nodes 
and the value is its corresponding cumulative number of all past events. 
The right diagram shows the structure under Hawkes model, where the key is 
still the nodes and the value is its corresponding time sequence 
between $t_{current} - R$ and $t_{current}$ stored in \textbf{queue} structure.}
\label{datastrcutre}
\centering
\begin{tabular}{c | c}
    \multicolumn{2}{c}{Poisson} \\
    \hline
    Key & Value \\
    \hline
   (User1, User3) & $l_{user1, user3}$ \\
   (User3, User8) & $l_{user3, user8}$ \\
   (User3, User1) & $l_{user3, user1}$ \\
   (User2, User4) & $l_{user2, user4}$ \\
   (User3, User5) & $l_{user3, user5}$ \\
   $\vdots$ &  $\cdots$ \\
   (User5, User3) & $l_{user5, user3}$ \\
   (User8, User3) & $l_{user8, user3}$ \\
   (User9, User2) & $l_{user9, user2}$ \\
   (User7, User1) & $l_{user7, user1}$ \\
   \hline
\end{tabular}
\quad
\bigskip
\begin{tabular}{c | c}
\multicolumn{2}{c}{Hawkes} \\
\hline
Key & Value \\
\hline
(User1, User3) & $t_{user1,user3}^{(start)}$, \ldots , $t_{user1,user3}^{(end)}$ \\
   (User3, User8) & $t_{user3,user8}^{(start)}$, \ldots , $t_{user3,user8}^{(end)}$ \\
   (User3, User1) & $t_{user3,user1}^{(start)}$, \ldots , $t_{user3,user1}^{(end)}$ \\
   (User2, User4) & $t_{user2,user4}^{(start)}$, \ldots , $t_{user2,user4}^{(end)}$ \\
   (User3, User5) & $t_{user3,user5}^{(start)}$, \ldots , $t_{user3,user5}^{(end)}$ \\
   $\vdots$ &  $\cdots$ \\
   (User5, User3) & $t_{user5,user3}^{(start)}$, \ldots , $t_{user5,user3}^{(end)}$ \\
   (User8, User3) & $t_{user8,user3}^{(start)}$, \ldots , $t_{user8,user3}^{(end)}$ \\
   (User9, User2) & $t_{user9,user2}^{(start)}$, \ldots , $t_{user9,user2}^{(end)}$ \\
   (User7, User1) & $t_{user7,user1}^{(start)}$, \ldots , $t_{user7,user1}^{(end)}$ \\
   \hline
\end{tabular}
\label{table:storage}
\end{table}

\section{Additional Simulation Results and Details}
\label{appendix:add_sims}

Here we 
{\color{black} provide additional details regarding the simulations in the main
text, and we also
include additional simulations 
experiments which were omitted from the main manuscript.
Alongside this, we provide the results of similar experiments in the  
Hawkes process setting, similar to those considered in the 
main text for the inhomogeneous Poisson model. The code used to create all 
results in this work is available at \url{https://github.com/OwenWard/OCD_Events}}.
We discuss the regret properties of our online procedure.
We demonstrate community recovery and other properties of our
online inference procedure for Hawkes 
process models. We also 
expand on some components of the 
inference procedure discussed in Section~\ref{Simulation}.

{ \color{black}
\paragraph{Illustrative simulation in Section 1.}
We first provide exact simulation settings for the
illustrative example in the introduction.
We consider a network of $n=100$ nodes, with 2 communities, 
with $40\%$ of nodes in the first community.
In particular, we consider an intensity function from nodes in 
group $k_1$ to nodes in group $k_2$ of the form

\begin{equation*}
    a_{k_{1}k_{2}}(t) = a_{k_1 k_2}^{1} \mathbbm{1}\{0 \leq t < T/3\} \\
    + a_{k_1 k_2}^{2} \mathbbm{1}\{T/3 \leq t < 2T/3\} +
a_{k_1 k_2}^{3} \mathbbm{1}\{2/3T \leq t < T\}
\end{equation*}

where the coefficient vectors 
$\bm{a_{k_1 k_2}}=(a^1_{k_1 k_2},a^2_{k_1 k_2},
a^3_{k_1 k_2})$ 
for each
block pair 
are of the form

$$
\begin{pmatrix}
  \bm{a_{11}}\\
  \bm{a_{12}} \\
  \bm{a_{21}} \\
  \bm{a_{22}}
  \end{pmatrix}
= 
\begin{pmatrix}
  0.25 & 0.5 & 1\\
  0.75 & 1 & 1 \\
  0.5 & 0.25 & 1 \\
  1 & 0.75 & 1
  \end{pmatrix}.
$$

This leads to a dense network, where we observe events between each node pair for the specified
choice of $T=100$.
If we wish to identify the community structure, classical network models require a single
adjacency matrix, encoding the relationship between each node pair. The simplest way to 
form such a matrix is to consider the count matrix with $A_{ij}=N_{ij}(T)$, the number of
events observed between a given node pair. Spectral clustering of this adjacency matrix 
would then lead to estimated community memberships for each node. However, for this choice of conditional
intensity function the event counts do not preserve the underlying community structure.

Rather than aggregating this data to form a single adjacency matrix, an alternative approach to cluster
network incorporates dynamics through the observation of a sequence of adjacency matrices
at some (equally spaced) time intervals. 
If we wanted to apply such a method to this event data the challenge is how to form the
adjacency matrices. An observation window must be chosen, with an edge between two nodes present 
for the corresponding adjacency matrix for that window if there is one (or more) events between them.
}

{ \color{black}
\paragraph{Simulation settings}
We first expand on the simulation settings used in Section~\ref{Simulation}.
Unless otherwise specified we consider a network of $n=200$ nodes and
$K=2$ equally sized communities. The block inhomgoeneous Poisson process model 
is given by 

$$
\lambda_{ij}(t) = \sum_{h=1}^{H=2}a_{z_i z_j}(h)f_h(t).
$$
Here we consider $f_h(t)$ to be step functions of common fixed length.
We consider the following coefficients for these basis functions
$$
\begin{pmatrix}
  \bm{a_{11}}\\
  \bm{a_{12}} \\
  \bm{a_{21}} \\
  \bm{a_{22}}
  \end{pmatrix}
= 
\begin{pmatrix}
  0.8 & 0.4 \\
  0.6 & 0.2 \\
  0.2 & 0.7 \\
  0.4 & 0.7
  \end{pmatrix}.
$$

When we vary $K$ we consider different coefficients $K$ which are multiplied by a constant which
varies with the community. Full details of the choices of parameters are given in 
the associated code repository.

}

\paragraph{Window Size}
Throughout the simulation studies 
in Section~\ref{Simulation} we have used a fixed window size
such that $dT=1$. Here we wish to investigate the 
effect that varying this window size has on the performance of our 
algorithm. We compare community recovery for varying window
sizes from $dT=0.25$ to $dT=5$,
{\color{black} keeping all other parameters fixed as in the default simulation setting.
We show boxplots of the community recovery performance, in terms of ARI, in Figure~\ref{fig_ocd:window}.
We see that in this scenario community recovery is possible, for all window sizes considered.
For each choice of $dT$ almost all simulations correctly recover the communities.
We note that the choice of appropriate $dT$ will depend somewhat on the data considered.
For sufficiently small $dT$ we might not observe any events in a given window, which
would not provide any update of the model parameters. Similarly, a $dT$ leading to a single event
in each window would mirror standard Stochastic Variational inference 
\citep{hoffman2013stochastic}. It appears that $dT$ should be small enough to avoid getting
stuck in local optima of the current estimate of the ELBO, while being large enough to ensure events have been
observed.

We also wish to investigate the role of $dT$ in the computation time required for this
procedure. In Figure ~\ref{fig_ocd:dt} we show the computation time for our inference 
procedure (in seconds) as we vary $dT$, the window size. It appears that the 
window size is not clearly related to the computation time, however we would recommend against 
extremely small values of $dT$, which may lead to too many windows where no events occur.
} 


\begin{figure}[ht]
\centering
\includegraphics[width=0.75\textwidth]{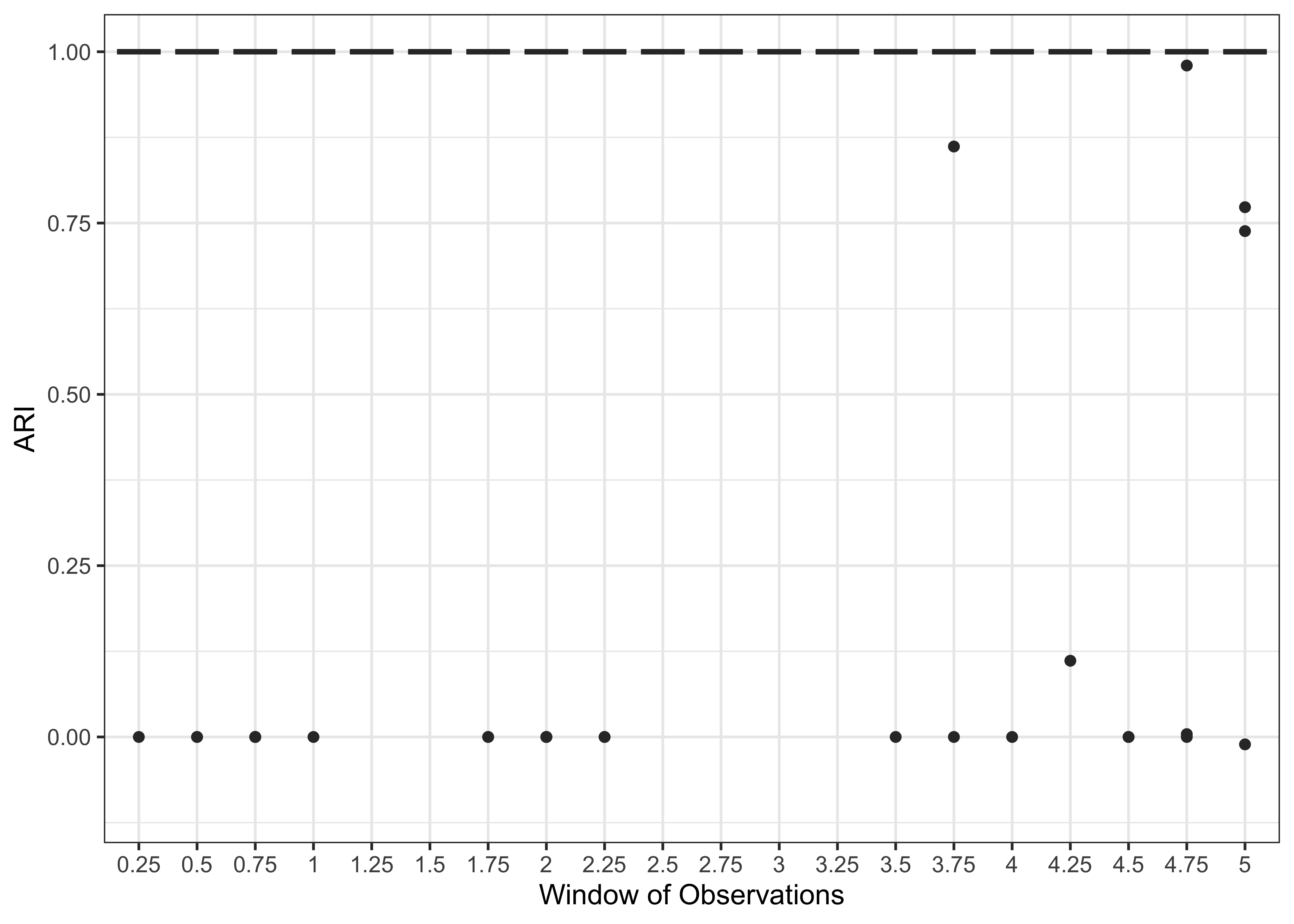}
\caption{Community recovery under block inhomogeneous Poisson simulated data for varying 
window size.}
\label{fig_ocd:window}
\end{figure}

\begin{figure}[ht]
\centering
\includegraphics[width=0.75\textwidth]{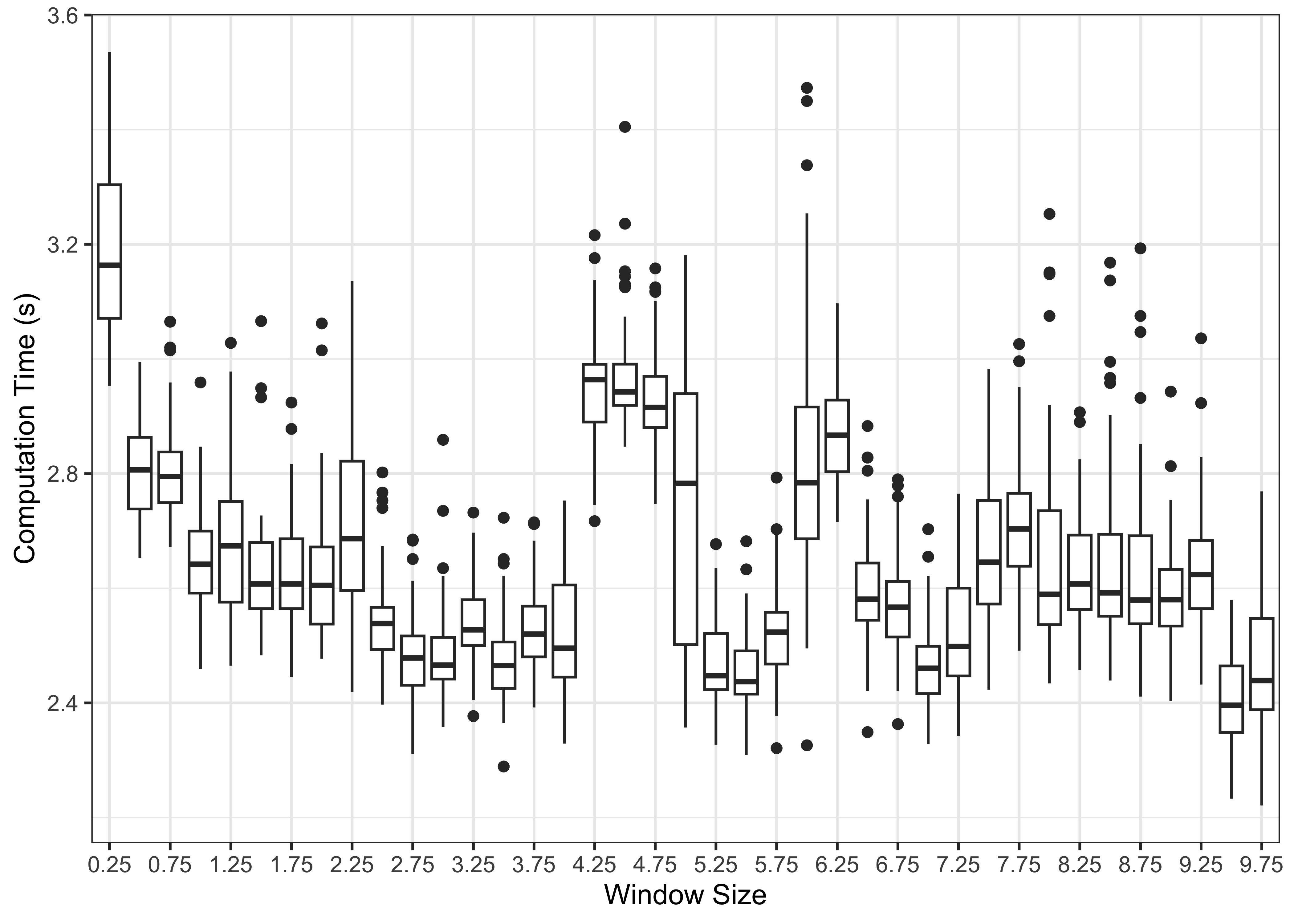}
\caption{Computation time for a fixed network setting as we vary $dT$, across 50 
replications.}
\label{fig_ocd:dt}
\end{figure}

\paragraph{Regret Rate.}

Quantifying the regret of an online estimation scheme is an important
tool in the analysis of such a procedure.
We can investigate the empirical performance of 
the regret for our method in simulation settings,
mirroring the theoretical results.
To do this, we need 
to consider a loss function. 
We define the loss function over the $m$-th time window as the negative
normalized log-likelihood, i.e.
\begin{multline*}
\tilde l_m(\theta \vert z) = - \frac{1}{ \vert A \vert }\sum_{(i,j)\in A} 
\bigg\{ \int_{(m-1)dT}^{m dT} \log \lambda_{ij}(t \vert z)dN_{ij}(t)  
 - \int_{(m-1)dT}^{m dT} \lambda_{ij}(t  \vert  z)dt \bigg\},
\end{multline*}
and define the regret as 
\begin{equation*}
\textrm{Regret}(T) = 
\inf_{\theta^{(m)}\in \Pi(\theta)}\left\{
\sum_{m=1}^M \tilde l_m(\theta^{(m)}  \vert  z^{\ast}) \right\} 
- \sum_{m=1}^M \tilde l_m(\theta^{\ast}  \vert  z^{\ast}),
\end{equation*}
where $M = T/dT$.
This regret function quantifies the gap of the 
conditional likelihood, given the true latent membership $z^{\ast}$,
between the online estimator and the true optimal value.
We note that this regret function is conditional on the true latent 
assignment being known, and as such, we need to account for 
possible permutations of the inferred parameters,
which is done using $\Pi(\theta)$.
While this regret quantity may be of theoretical interest,
in practice it may be more appropriate to 
instead look at the empirical regret, using the
estimated latent community memberships. We shall define this 
as 
\begin{equation*}
  \textrm{Regret}_{EMP}(T) = 
  \sum_{m=1}^M \tilde l_m(\theta^{(m)}  \vert  z^{(m)}) 
  - \sum_{m=1}^M \tilde l_m(\theta^{\ast}  \vert  z^{\ast}),
 \end{equation*}
measuring the cumulative difference between the estimated 
and true log likelihood, as we learn both $z$ and $\theta$ over 
time.

Given these two regret definitions,
we can simulate networks 
{\color{black}
as in the default setting} and compute the empirical 
regret for a fixed network, varying the 
range of time over which events are observed
{\color{black} from $T=50$ to $T=200$.}
This is
shown in Figure~\ref{fig_ocd:regret}.
Here we compute each quantity across 100 simulations, showing
loess smoothed estimates 
and their standard error
over time. We see that as we observe these
networks for a longer period this regret grows slowly.

\begin{figure}[ht]
  \centering
  \includegraphics[width=0.75\textwidth]{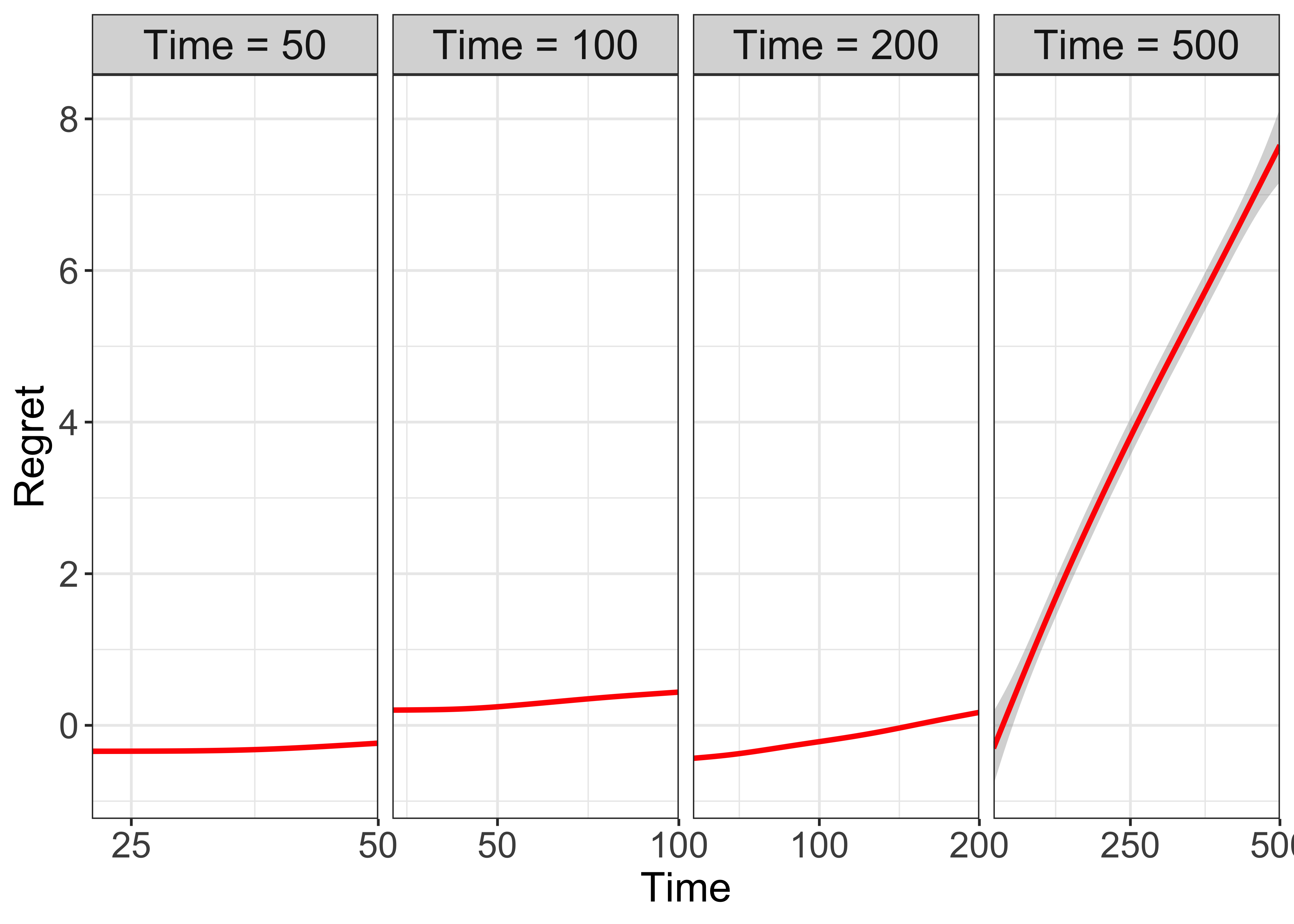}
  \caption{Smoothed regret estimates as a function of time for a 
  fixed network structure observed for varying lengths of time.}
  \label{fig_ocd:regret}
\end{figure}

\subsection*{Hawkes Models}

\paragraph{Hawkes Community Recovery}
The main experiments in Section~\ref{Simulation}
demonstrate the performance of our online learning
algorithm where the intensity function follows an 
inhomogeneous Poisson 
process. Here we demonstrate the performance
of this procedure for Hawkes block models also. 
{\color{black} We consider the same defaults in terms of network and community size and observation time.
The specific parameter settings are given in the associated code repository, 
in each case considering 50 simulations.}
In Figure~\ref{fig_ocd:hawkes_n} we 
first investigate the performance of our procedure
for community recovery, as we increase the number of nodes.
As the number of nodes increase, we can more consistently
recover the true community structure.

\begin{figure}[ht]
\centering
\includegraphics[width=0.75\textwidth]{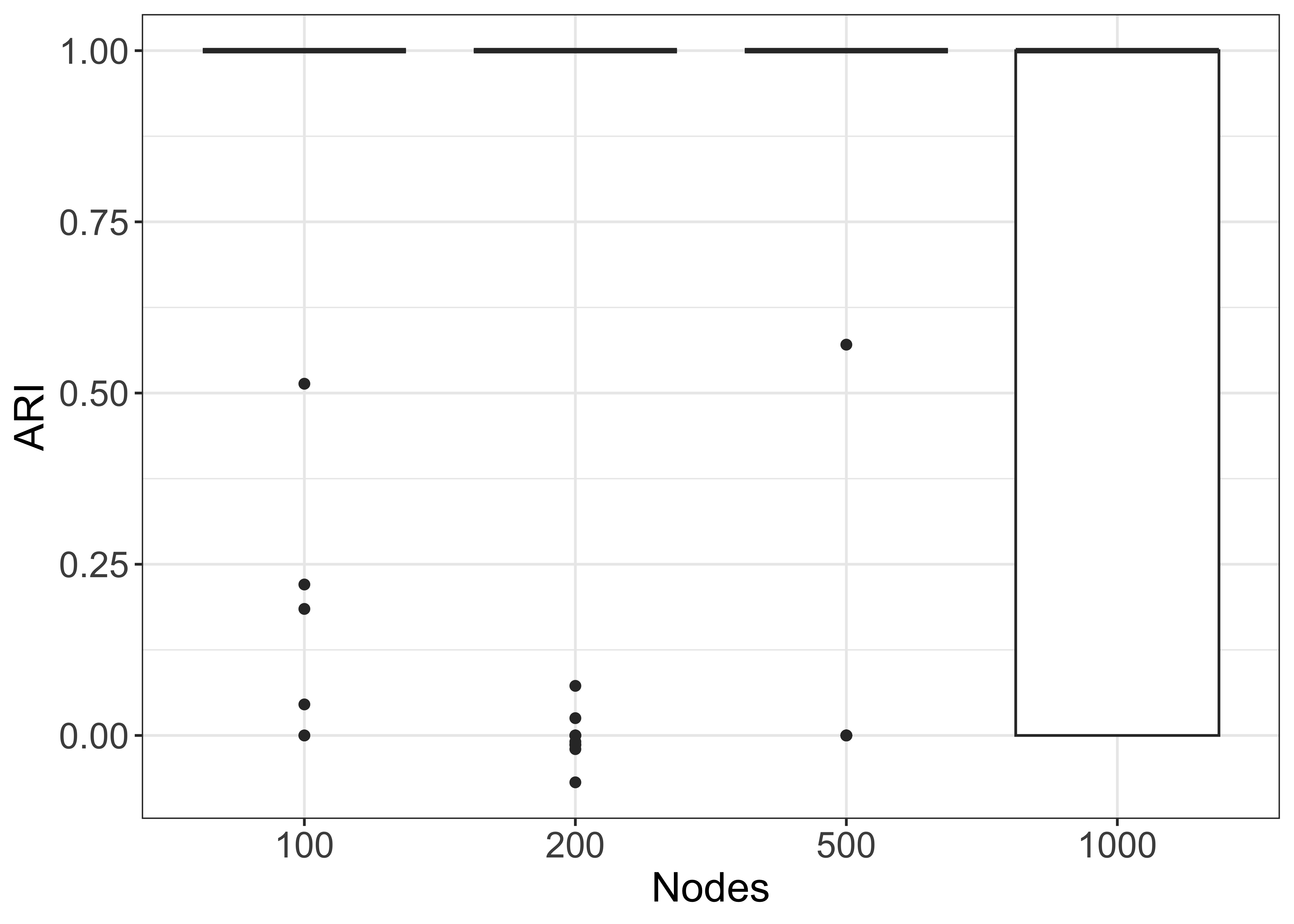}
\caption{Community recovery under the Hawkes model,
in terms of ARI, as we increase the number of nodes. }
\label{fig_ocd:hawkes_n}
\end{figure}

\paragraph{Hawkes, Varying Number of Communities}
We can also investigate the performance of our
procedure for community recovery under the Hawkes model as 
we increase $K$, the number of communities.
In Figure~\ref{fig_ocd:hawkes_k} we show the performance
as we consider more communities for a fixed number of
nodes. As $K$ increases, we are less able to recover the 
true community structure, which is seen by a decrease in the
ARI.

\begin{figure}[ht]
\centering
\includegraphics[width=0.75\textwidth]{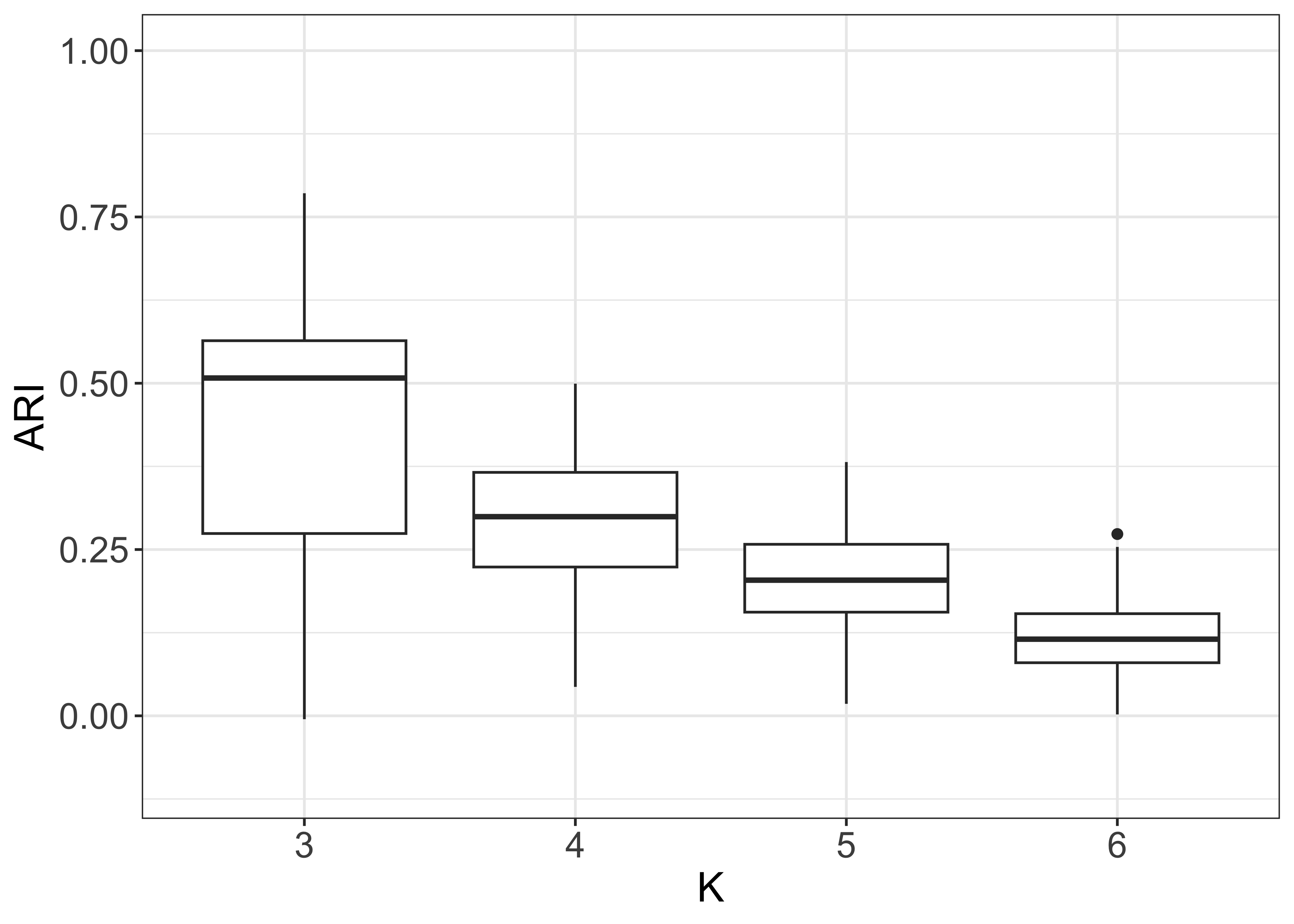}
\caption{Community recovery for the Hawkes model 
as we vary the number of communities.}
\label{fig_ocd:hawkes_k}
\end{figure}

\paragraph{Hawkes, Online Community Recovery}
In Figure~\ref{fig_ocd:hawkes_online} we illustrate 
how the community structure is learned as events are 
observed on the network, varying the number of nodes.
Community recovery is harder than in the Poisson setting,
but the average ARI increases quickly in time, before 
stabilising. There is considerably more uncertainty 
in this estimate than in the Poisson setting.

\begin{figure}[ht]
\centering
\includegraphics[width=0.75\textwidth]{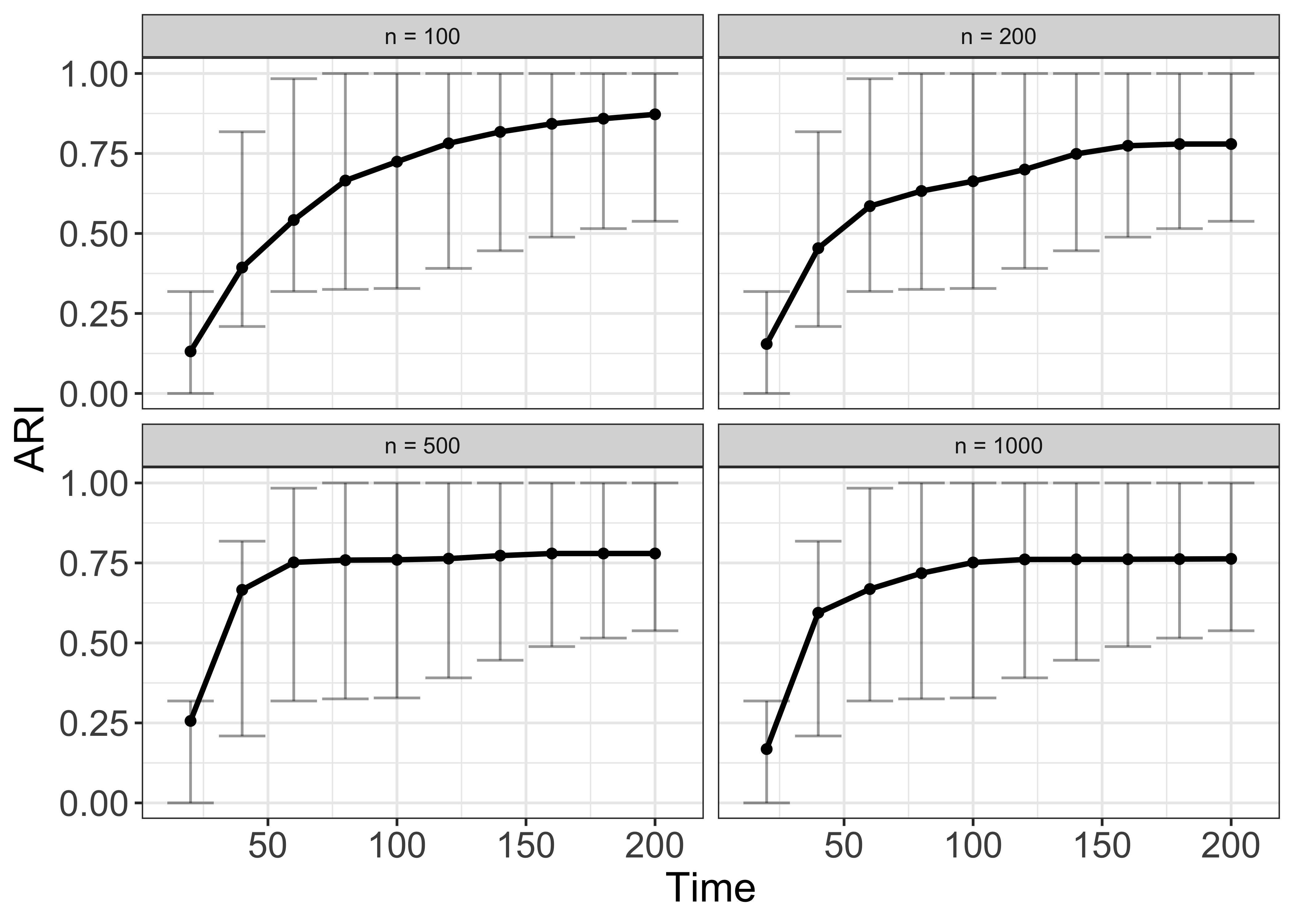}
\caption{Online community recovery under the Hawkes model
as the number of nodes increases, for a fixed observation period.}
\label{fig_ocd:hawkes_online}
\end{figure}

\paragraph{Hawkes, Parameter Recovery}
We can also look at how we recover the true parameters 
of our Hawkes process as events are observed in time 
over the network, as was considered for the Poisson process model 
previously.
Here we measure the recovery of both the baseline rate
matrix $M$ and the 
excitation parameter, $B$, along with the scalar decay parameter $\lambda$.
Figure~\ref{fig_ocd:hawkes_param} shows the {\color{black}GAM} smoothed difference 
between the true and estimated parameters across 50 simulations, {\color{black} with
included standard errors shown}.
For each of these three parameters, we see that the difference between the 
true and estimated parameter decreases, with the difference decreasing as 
we observe events for a longer time period.

\begin{figure}[ht]
\centering
\includegraphics[width=0.75\textwidth]{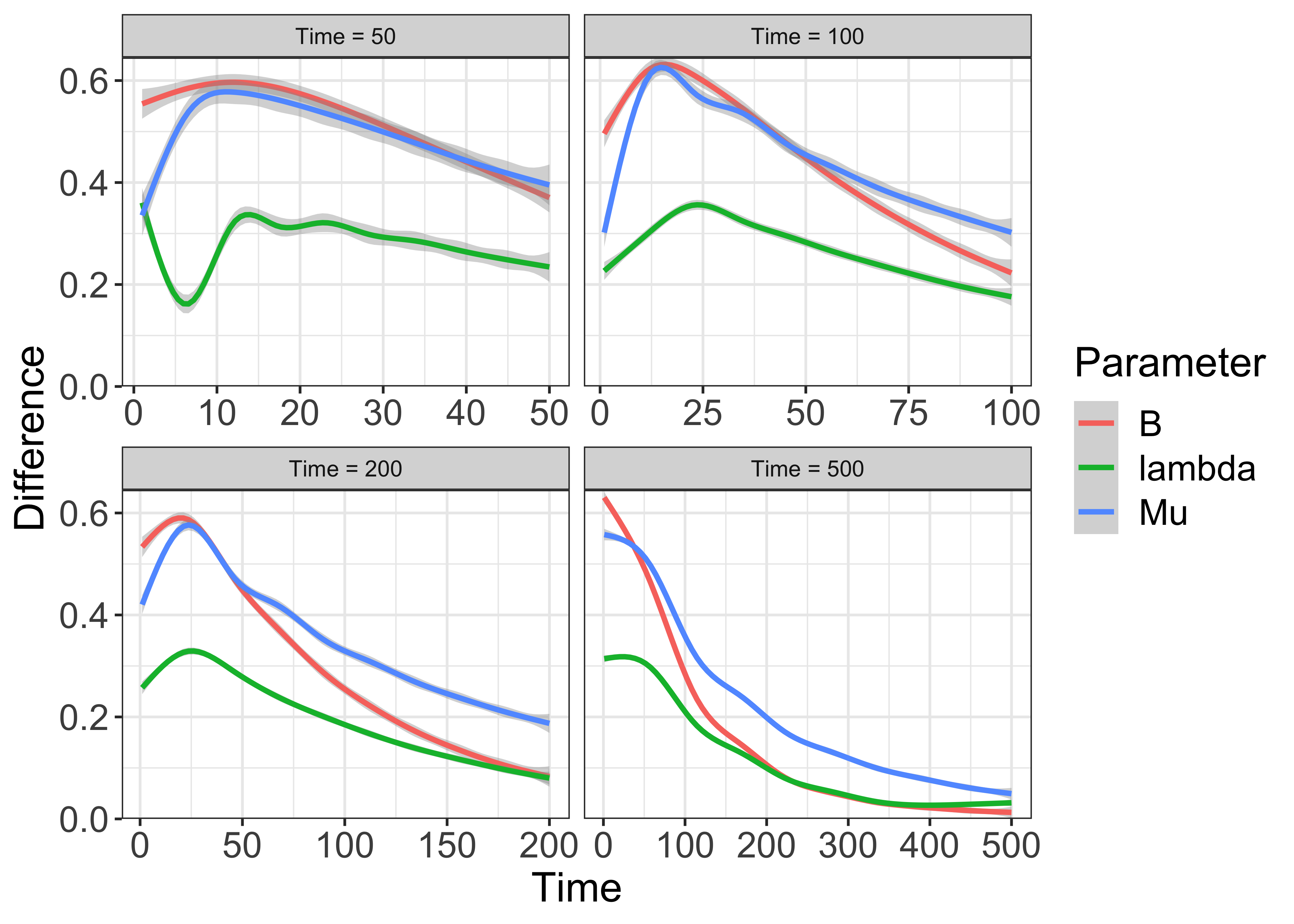}
\caption{Parameter Recovery for block Hawkes model as the 
observation time increases for a fixed network size.}
\label{fig_ocd:hawkes_param}
\end{figure}

\section{Technical Conditions}\label{app:condition}
\label{proof}
In this section, we provide the details of theoretical analyses of our
proposed algorithm under dense event setting, that is, 
integration of intensity function $\lambda_{ij}(t)$ over window length $dT$ is $\Theta(1)$.

Different from the analysis of regular online algorithms, the key difficulties 
in our setting are (1) the model we consider is a latent class network 
model with complicated dynamics, (2) the proposed algorithm 
involves approximation steps. 
Before the proof of the main results, we first introduce some 
notation and definitions. In the following, 
we use variables $c_0 - c_3$, $C$, and $\delta$ to denote some constants 
which may vary from the place to place. 
$\theta^{\ast}$, $z^{\ast}$ represents the true parameter 
and latent class membership, respectively.
\begin{itemize}
    \item[C0] [\textbf {Window Size}] Assume time window $dT$ is some 
    fixed constant which is determined a priori. 
    \item[C1] [\textbf {Expectation}]
    Define the normalized log likelihood over a single time window,
    \begin{equation*}
    l_{w}(\theta \vert z) = \frac{1}{ \vert A \vert } \sum_{(i,j) \in A} \big \{ \int_{0}^{dT} \log \lambda_{ij}(t  \vert  z) dN_{ij}(t)  
     - \int_{0}^{dT} \lambda_{ij}(t  \vert  z) dt \big \}. 
    \end{equation*}
    For simplicity, we assume the expectation of data process is stationary, 
    i.e, $\bar{l}_w (\theta \vert z) = \mathbb E^{\ast} l_{\omega}(\theta \vert z)$ 
    does not depend on the window number $w$.
    Here the expectation is taken with respect to all observed data 
    under the true process.
    \item[C2] [\textbf {Latent Membership Identification}]
    Assume 
    $$ \bar l_w(\theta  \vert  z) \leq \bar l_w(\theta  \vert  z^{\ast}) - 
    c \frac{d_m  \vert z - z^{\ast} \vert _0}{ \vert A \vert },$$
    for any $z \neq z^{\ast}$ and $\theta \in B(\theta^{\ast}, \delta)$.
    Here $B(\theta^{\ast}, \delta)$ is the $\delta$-ball around the 
    true parameter $\theta^{\ast}$;
    $d_m = m^{r_d} (r_d > 0)$ represents the graph connectivity and 
    $ \vert z - z^{\ast} \vert _0$ is the number of individuals such that $z_i \neq z^{\ast}_i$. 
    \item[C3] [\textbf{Continuity}]
    Define $Q$ function, $Q(\theta, q) = \mathbb E_{q(z)} l_w(\theta  \vert  z)$ 
    and $\bar Q(\theta, q) = \mathbb E^{\ast} Q(\theta, q)$. Suppose 
    \begin{eqnarray}
    \bar Q(\theta, q) - \bar l (\theta  \vert  z^{\ast}) \leq c d(q, \delta_{z^{\ast}})
    \end{eqnarray}
    holds, where $\delta_{z^{\ast}}$ is the probability function that 
    put all its mass on the true label vector $z^{\ast}$.
    The distance $d(q_1, q_2) \equiv TV(q_1, q_2)$, 
    where $TV(q_1, q_2)$ is the total variance between two distribution functions. 
    
    Let $\theta(q)$ be the maximizer of $\bar Q(\theta, q)$. Assume that 
    $ \vert \theta(q) - \theta^{\ast} \vert  \leq c d(q, \delta_{z^\ast})$ holds for 
    any $q$ and some constant $c$.
    \item[C4] [\textbf{Gradient Condition}]
     Assume that there exists a $\delta$ such that
    \begin{enumerate}
        \item    
    \begin{eqnarray*}
    \frac{\partial \bar Q(\theta, q)}{\partial \theta}^T (\theta - \theta(q)) < 
    - c \ \vert \theta - \theta(q)\ \vert ^2 <  0 \label{21}
    \end{eqnarray*}
    and
    \begin{eqnarray*}
    \ \vert \frac{\partial \bar Q(\theta, q)}{\partial \theta}^T (\theta - \theta(q))\ \vert  \geq
     c \ \vert \frac{\partial \bar Q(\theta, q)}{\partial \theta}\ \vert ^2  \label{22}
    \end{eqnarray*}
    hold for $\theta \in B(\theta(q), \delta)$ and any $q$ with $c$ being a universal constant.
        \item
        \begin{eqnarray}
        \mathbb E^{\ast} \frac{\partial Q(\theta, q)}{\partial \theta}^T 
        \frac{\partial Q(\theta, q)}{\partial \theta} \leq C
        \end{eqnarray}
        holds for any $\theta \in B(\theta(q), \delta)$ and any $q$.
    \end{enumerate}
    \item[C5] [\textbf{Boundedness}] For simplicity, we assume the functions 
    $\lambda_{ij}(t \vert z)$, $\log \lambda_{ij}(t \vert z)$ and their derivatives are 
    continuous bounded function of parameter $\theta$ for all $z$ and $t$.  
    \item[C6] [\textbf{Network Degree}] Let $d_{i}$ be the number nodes 
    that individual $i$ connects to. 
    We assume that $d_i \asymp d_n$ for all $i$, with $d_n = n^{r_d}$ ($0 <  r_d < 1$) (Here $a \asymp b$ means 
    $a$ and $b$ are in the same order.)
    \item[C7] [\textbf{Initial Condition}]
    Assume $\theta^{(0)} \in B(\theta^{\ast}, \delta)$ for a sufficiently small radius $\delta$ and $q^{(0)}$ satisfies 
    \begin{multline}\label{q:initial}
     \mathbb E_{q^{(0)}(z_{-i})} \bar l_w(\theta^{\ast}  \vert  z_i = z, z_{-i}) \\ 
     \leq  \mathbb E_{q^{(0)}(z_{-i})} \bar l_w(\theta^{\ast}  \vert  z_i = z_i^{\ast}, z_{-i}) \\ - c d_i /  \vert A \vert 
    \end{multline}
    for all $i$ and $z \neq z_i^{\ast}$.
\end{itemize}

These are the regularity conditions required for the proofs of 
Theorem 2 and 3 in the main text.
We first note some important comments on the above conditions. 
Here the window size $dT$ is assumed to be any fixed constant. 
It can also grow with the total number of windows (e.g. $\log T$), 
the result will still hold accordingly. 
Condition C1 assumes the stationarity of process for ease of the proof. 
This condition can also be further relaxed for non stationary processes 
as long as Condition C2 holds for any time window.
In Condition C2, we assume that there is a positive gap between 
log-likelihoods when the latent profile is different from the true one, 
which plays an important role in identification of latent profiles.
Condition C3 postulates the continuity of the $Q$ function. 
In other words, the difference between $Q$ and the true conditional 
likelihood is small, when the approximate posterior $q$ 
concentrates around the true latent labels.
Condition C4 characterizes the gradient of $Q$ function, along 
with the local quadratic property and boundedness.
Condition C5 requires the boundedness of the intensity function. 
It can be easily checked that it holds for Poisson process.
By using truncation techniques, the results can be naturally extended 
to the Hawkes process setting \citep{yang2017online, xu2020network}.
We also note that $d_n$ can be viewed as
the network connectivity,
the degree to which nodes in network connect with each other. 
Condition C6 puts the restriction on the network structure that the 
degrees should not diverge too much across different nodes.
Then $ \vert A \vert  \asymp m d_n$ controls the overall sparsity of the 
network. The network gets sparser when $r_d \rightarrow 0$. 
Here we do not consider the regime where $r_d = 0$ 
(in which case the network is super sparse, i.e. each individual 
has only a finite number of connections on average), 
which could be of interest in future work.
Condition C7 puts the requirement on the initialization of 
model parameters and approximate $q$ function.
Note that \eqref{q:initial} is satisfied when $q$ is close 
to the multinomial distribution which puts mass probability on the true label $z^{\ast}$.
Equation \eqref{q:initial} could also automatically hold when true model parameters of difference classes are well separated.
(That is, we take $q_i^{(0)}(z) = \text{multinom}(\frac{1}{K}, \ldots, \frac{1}{K})$ as non-informative prior so that \eqref{q:initial} holds.)

\section{Useful Lemmas}
In the main proof, we depend on the following Lemmas to ensure the 
uniform convergence of random quantities (e.g. likelihood, ELBO, etc.) 
to their population versions.
\begin{lemma}\label{concentration}
Under Conditions C0, C1 and C5, it holds that 
    \begin{eqnarray}
    & &P( \sup_z  \vert g(\theta  \vert  z) - \mathbb E g(\theta  \vert  z) \vert  \geq  x )  \nonumber \\
    &\leq& C K^m \exp\{ - \frac{1/2  \vert A \vert  x^2}{v^2 + 1/3 M x} \}, 
    \end{eqnarray}
where $g(\theta \vert z)$ is some known functions which could be taken 
as weighted log likelihood or its derivatives; $v$ and $M$ are some constants. ''$\mathbb E$" here is the conditional expectation given fixed label $z$.
\end{lemma}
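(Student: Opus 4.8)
The plan is to fix the parameter $\theta$ and the latent configuration $z$, control the deviation of $g(\theta|z)$ from its mean by a Bernstein-type concentration inequality, and then take a union bound over the finitely many ($K^m$) configurations $z$. First I would observe that, by the definition in Condition C1, $g(\theta|z)$ — whether it is the weighted log-likelihood $l_w(\theta|z)$ or one of its derivatives — is a normalized sum of the form $\frac{1}{|A|}\sum_{(i,j)\in A} X_{ij}$, where $X_{ij}$ depends only on the event stream $N_{ij}(\cdot)$ restricted to the time window. Since interactions across distinct ordered pairs $(i,j)$ are independent point processes (conditional on $z^\ast$, and here we condition on the fixed $z$ as well), the summands $\{X_{ij}\}_{(i,j)\in A}$ are independent, so a Bernstein inequality for independent (not necessarily bounded) random variables applies once we verify a moment/tail condition.

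The key steps, in order, would be: (i) write $g(\theta|z) - \mathbb E g(\theta|z) = \frac{1}{|A|}\sum_{(i,j)\in A}(X_{ij} - \mathbb E X_{ij})$ and note independence across $(i,j)$; (ii) establish that each centered summand has variance bounded by some $v^2$ and satisfies a Bernstein moment condition $\mathbb E|X_{ij}-\mathbb E X_{ij}|^p \le \frac{1}{2}p!\, v^2 M^{p-2}$ — this is where Condition C5 (boundedness and continuity of $\lambda_{ij}(t|z)$, $\log\lambda_{ij}(t|z)$ and their derivatives in $\theta$, uniformly in $z$ and $t$) together with the fact that $N_{ij}$ over a fixed window $dT$ (Condition C0) is Poisson-like with controlled moments does the work; the integrals $\int_0^{dT}\log\lambda_{ij}(t|z)dN_{ij}(t)$ have all moments finite and controlled by a geometric/Poisson moment bound, so the relevant constants $v, M$ exist and are uniform in $z$ by C5; (iii) apply the classical Bernstein inequality to get, for each fixed $z$,
\begin{eqnarray*}
P\big(|g(\theta|z)-\mathbb E g(\theta|z)| \ge x\big) \le 2\exp\Big\{-\frac{\tfrac12 |A| x^2}{v^2 + \tfrac13 M x}\Big\};
\end{eqnarray*}
(iv) take a union bound over the $K^m$ possible values of $z$, which produces the factor $K^m$ and, absorbing the $2$ into the generic constant $C$, yields the claimed bound.

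The main obstacle I expect is step (ii): verifying the Bernstein moment condition for the summands when $g$ is the log-likelihood or a derivative thereof, because $\int_0^{dT}\log\lambda_{ij}(t|z)\,dN_{ij}(t)$ is an unbounded random variable (the number of events in a window is unbounded). One must argue that, although unbounded, it has sub-exponential tails: the event count $N_{ij}[0,dT)$ is stochastically dominated by a Poisson (or, under Hawkes with truncation as noted in C5's remark, a suitably controlled) variable whose moment generating function is finite in a neighborhood of $0$, and since $\log\lambda$ and its derivatives are bounded by C5, the integral inherits sub-exponential concentration. Pinning down the constants $v^2$ and $M$ uniformly over $z \in [K]^m$ and over $\theta$ in the relevant ball — and making sure the dominating-Poisson argument is legitimate for each of the model classes in Section \ref{Background} — is the delicate part; everything after that (the union bound and algebraic rearrangement) is routine.
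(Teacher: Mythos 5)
Your proposal follows essentially the same route as the paper: decompose $g(\theta|z)-\mathbb E g(\theta|z)$ into a normalized sum of per-pair terms $X_{ij}$, apply Bernstein's inequality for each fixed $z$, and union bound over the $K^m$ configurations. The only difference is that the paper simply invokes Condition C5 to assert $|X_{ij}-\mathbb E X_{ij}|\le M$ and $\mathrm{var}(X_{ij})\le v^2$ and uses the bounded form of Bernstein, whereas you correctly flag that the event count in a window is unbounded and propose the moment-condition (sub-exponential) form instead --- a point on which your treatment is, if anything, more careful than the paper's.
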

\textbf{Proof of Lemma \ref{concentration}}
    Without loss of generality, we take $g(\theta  \vert  z) = l_w(\theta  \vert  z)$. 
    Define $X_{ij} = \int_0^{dT} \log \lambda_{ij}(t \vert z) dN_{ij}(t) - \int_0^{dT} \lambda_{ij}(t \vert z)dt$ 
    for any pair $(i,j) \in A$. According to Condition C5, we know that 
    there exists $M$ and $v^2$ such that 
    $ \vert X_{ij} - \mathbb E X_{ij} \vert  \leq M$ and $\textrm{var}(X_{ij}) \leq v^2$. 
    Then we apply the Bernstein
    inequality and get that 
    \begin{eqnarray}
    & & P( \vert \sum_{(i,j) \in A} X_{ij} - \mathbb E X_{ij} \vert  \geq  \vert A \vert x)  \nonumber \\
    &\leq& 2 \exp\{\frac{-\frac{1}{2} \vert A \vert ^2 x^2}{ \vert A \vert v^2 + 1/3 M  \vert A \vert x}\}
    \end{eqnarray}
    By taking union bound over all possible $z$, we then have 
    \begin{eqnarray}
    & & P( \sup_z  \vert g(\theta  \vert  z) - \mathbb E g(\theta  \vert  z) \vert  \geq  x )  \nonumber \\
    &\leq& C K^m \exp\{ - \frac{- 1/2  \vert A \vert ^2 x^2}{ \vert A \vert v^2 + 1/3 M  \vert A \vert x} \}.
    \end{eqnarray}
    Thus, we conclude the proof. 

\bigskip

One immediate result from Lemma \ref{concentration} is that
\begin{corollary}\label{cor:q}
Under the same setting stated in Lemma \ref{concentration}, it holds that
\begin{eqnarray}
& & P( \vert \mathbb E_{q(z)} g(\theta  \vert  z) - \mathbb E_{q(z)} \mathbb E g(\theta  \vert  z) \vert  \geq x) \nonumber \\
&\leq& C K^m \exp\{ - \frac{1/2  \vert A \vert  x^2}{v^2 + 1/3 M x} \},
\end{eqnarray}
for any $q$.
\end{corollary}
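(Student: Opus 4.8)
\textbf{Proof proposal for Corollary \ref{cor:q}.}
The plan is to reduce the statement to Lemma \ref{concentration} by a one-line domination argument. First I would note that the distribution $q$ appearing in the corollary is an arbitrary but \emph{fixed} probability mass function on the latent configurations, carrying no dependence on the randomness of the observed data (in the algorithmic application it is $q^{(w-1)}$, which is measurable with respect to the data from earlier windows, so the bound is invoked conditionally on that past, with the fresh randomness coming from the $w$-th window). Consequently the operator $\mathbb{E}_{q(z)}[\cdot]$ is linear in its argument and commutes with the data-expectation $\mathbb{E}$, giving
\begin{eqnarray*}
\mathbb{E}_{q(z)} g(\theta \mid z) - \mathbb{E}_{q(z)} \mathbb{E} g(\theta \mid z)
= \mathbb{E}_{q(z)}\big[\, g(\theta \mid z) - \mathbb{E} g(\theta \mid z) \,\big].
\end{eqnarray*}

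Next I would apply Jensen's inequality (equivalently, the triangle inequality for the average over $z \sim q$) followed by the trivial bound of an average by a supremum:
\begin{eqnarray*}
\big| \mathbb{E}_{q(z)}\big[ g(\theta \mid z) - \mathbb{E} g(\theta \mid z) \big] \big|
\;\le\; \mathbb{E}_{q(z)} \big| g(\theta \mid z) - \mathbb{E} g(\theta \mid z) \big|
\;\le\; \sup_z \big| g(\theta \mid z) - \mathbb{E} g(\theta \mid z) \big|.
\end{eqnarray*}
Therefore the event $\{\, |\mathbb{E}_{q(z)} g(\theta \mid z) - \mathbb{E}_{q(z)} \mathbb{E} g(\theta \mid z)| \ge x \,\}$ is contained in the event $\{\, \sup_z |g(\theta \mid z) - \mathbb{E} g(\theta \mid z)| \ge x \,\}$, and monotonicity of probability together with Lemma \ref{concentration} yields exactly the claimed bound, uniformly over all $q$.

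There is essentially no hard step here: the only thing to be careful about is the status of $q$ — it must be treated as non-random (or conditioned upon) so that the interchange of $\mathbb{E}_{q(z)}$ and $\mathbb{E}$ in the first display is legitimate and so that the sup over $z$ genuinely dominates the $q$-average pointwise in the sample. Once that is observed, the corollary is immediate, and the constants $C$, $v^2$, $M$ are inherited without change from Lemma \ref{concentration}.
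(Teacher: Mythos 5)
Your proposal is correct and matches the paper's own (much terser) argument: the paper likewise takes the $q$-expectation of $g(\theta\mid z)-\mathbb{E}g(\theta\mid z)$ and invokes Lemma \ref{concentration}, relying implicitly on the domination of the $q$-average by the supremum over $z$ that you spell out. Your added remark about $q$ needing to be non-random (or conditioned upon) is a sensible clarification but does not change the substance.
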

\textbf{Proof of Corollary \ref{cor:q}}
For any distribution function $q(z)$, we can observe the following relation,
\[ \vert \mathbb E_{q(z)} g(\theta  \vert  z) - \mathbb E_{q(z)} \mathbb E g(\theta  \vert  z) \vert  \leq \sup_z  \vert g(\theta \vert z) - \mathbb E g(\theta \vert z) \vert ,\]
and 
get the desired result by Lemma \ref{concentration}. QED.

\bigskip

The following
Lemma \ref{prob:diff} and Lemma \ref{lem:concentration:ini} ensure 
the identification of latent memberships.

\begin{lemma}\label{prob:diff}
Under Conditions C0 - C2, C5 - C6, with probability $1 -  \exp\{ - C d_n \}$, it holds that 
\begin{eqnarray}
\sum_{z \neq z^{\ast}} L(\theta  \vert  z) = L(\theta  \vert  z^{\ast}) \cdot O(\exp\{- c_1 d_n\})
\end{eqnarray}
for any $\theta \in B(\theta^{\ast}, \delta)$ for some constants $c_1$ and $\delta$. 
Here, $L(\theta  \vert  z) = \exp\{ \vert A \vert  l_w (\theta  \vert  z)\}$.
\end{lemma}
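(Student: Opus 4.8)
The plan is to control the ratio $\sum_{z \neq z^{\ast}} L(\theta|z)/L(\theta|z^{\ast}) = \sum_{z\neq z^{\ast}} \exp\{|A|\,(l_w(\theta|z) - l_w(\theta|z^{\ast}))\}$ and show it is $O(\exp\{-c_1 d_m\})$ uniformly over $\theta \in B(\theta^{\ast},\delta)$ on a high-probability event. The natural bookkeeping is to stratify the sum by the Hamming distance $s = |z - z^{\ast}|_0$ and the mismatch set $S = \{i: z_i \neq z_i^{\ast}\}$. The key structural observation is that, because the intensities have block form $\lambda_{ij}(t|z) = \lambda_{z_i z_j}(t)$ and the pairwise point processes are independent given $z$, the difference $|A|\,(l_w(\theta|z) - l_w(\theta|z^{\ast})) = \sum_{(i,j)\in A}\{X_{ij}(z) - X_{ij}(z^{\ast})\}$ only receives contributions from pairs $(i,j)$ with $i\in S$ or $j\in S$, and by Condition C6 there are $\asymp s\,d_m$ such pairs. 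Hence this difference is a sum of $\asymp s\,d_m$ independent, bounded (Condition C5) summands, not $|A|\asymp m\,d_m$ of them.

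For fixed $z$ at distance $s$ and fixed $\theta$, decompose $|A|\,(l_w(\theta|z) - l_w(\theta|z^{\ast})) = |A|\,(\bar l_w(\theta|z) - \bar l_w(\theta|z^{\ast})) + R(z,\theta)$, where $R(z,\theta)$ is the centered fluctuation of the restricted sum. By Condition C2 the first term is $\le -c\,d_m s$. For $R(z,\theta)$ apply a Bernstein inequality as in the proof of Lemma~\ref{concentration}, but to the $\asymp s\,d_m$-term restricted sum, whose variance is $\asymp s\,d_m v^2$ and whose range is $O(1)$ by Condition C5: with probability at least $1 - \exp\{-c'd_m s\}$ one gets $R(z,\theta) \le \tfrac{c}{2}d_m s$, so $L(\theta|z)/L(\theta|z^{\ast}) \le \exp\{-\tfrac{c}{2}d_m s\}$.

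It remains to make this hold simultaneously over all $z\neq z^{\ast}$ and all $\theta\in B(\theta^{\ast},\delta)$. For fixed $s$ there are at most $\binom{m}{s}(K-1)^s \le (mK)^s$ configurations; for the parameter, $\theta$ is finite-dimensional and the restricted sum is Lipschitz in $\theta$ with constant $\asymp s\,d_m$ by Condition C5 (up to a logarithmic-in-$m$ factor coming from the per-pair event count on a further high-probability event, which is harmless since $d_m = m^{r_d}\gg\log m$), so a fixed constant-radius $\varepsilon$-net of $B(\theta^{\ast},\delta)$ of size $O(1)$ extends the bound from net points to the whole ball at the cost of an extra $\tfrac{c}{4}d_m s$ in the deviation. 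Taking a union bound over the net and the configurations, the failure probability at distance $s$ is at most $\exp\{s\log(mK) + O(1) - c''d_m s\}$; since $d_m = m^{r_d}$ with $r_d>0$ eventually dominates $\log(mK)$, this is $\le \exp\{-c'''d_m s\}$ for $m$ large. Summing the geometric series over $s=1,\dots,m$ bounds the bad event by $\exp\{-Cd_m\}$, and on its complement $\sum_{z\neq z^{\ast}} L(\theta|z)/L(\theta|z^{\ast}) \le \sum_{s\ge1}(mK)^s\exp\{-\tfrac{c}{4}d_m s\} = O(\exp\{-c_1 d_m\})$ uniformly over the ball, which is the claim.

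The main obstacle is the sharpness of the concentration step. Lemma~\ref{concentration} as stated pays a $K^m$ union factor against a tail in $|A|$, which is far too lossy to beat the per-node gap $c\,d_m$: the fluctuation of the full un-normalized single-window log-likelihood is of order $m\sqrt{d_m\log K}\gg d_m$. The remedy is exactly the restriction argument above — flipping $s$ labels perturbs only $O(s\,d_m)$ of the $|A|$ pairwise processes, so the relevant variance is $O(s\,d_m)$ and Bernstein yields a tail $\exp\{-c'd_m s\}$ that survives the $(mK)^s$ union bound thanks to $d_m\gg\log m$. The uniformity over $\theta$ is then routine given the boundedness/Lipschitz content of Condition C5.
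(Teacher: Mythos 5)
Your proof follows essentially the same route as the paper's: stratify by the Hamming distance $s=|z-z^{\ast}|_0$, observe that flipping $s$ labels perturbs only $O(s\,d_m)$ of the pairwise terms so that Bernstein applied to the restricted sum gives a tail $\exp\{-c' d_m s\}$, combine with the Condition C2 gap, union-bound over configurations using $d_m=m^{r_d}\gg\log m$, and sum the resulting geometric series. You are in fact somewhat more careful than the paper on two points it glosses over, namely the correct $\binom{m}{s}(K-1)^s$ count of configurations at distance $s$ (the paper writes $K^{n_0}$) and the uniformity over $\theta\in B(\theta^{\ast},\delta)$ via an $\varepsilon$-net, which the paper does not address at all.
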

\textbf{Proof of Lemma \ref{prob:diff}}
The main step of the proof is to show that
\begin{eqnarray}\label{target:lemma2}
l_w(\theta  \vert  z) \leq l_w(\theta  \vert  z^{\ast}) - c/2 \frac{d_n \vert z - z^{\ast} \vert _0}{ \vert A \vert } 
\end{eqnarray}
holds for all $z$ with high probability. We take $g(\theta  \vert  z)$
as $l_w(\theta  \vert  z) - l_w(\theta  \vert  z^{\ast})$. Similar to the 
proof of Lemma \ref{concentration}, we have that 
\begin{eqnarray*}
& & P(  \vert  l_w(\theta  \vert  z) - l_w(\theta  \vert  z^{\ast}) - 
\mathbb E \{l_w(\theta  \vert  z) - l_w(\theta  \vert  z^{\ast})\} \vert  \nonumber \\
& & \geq x \frac{d_n \vert z - z^{\ast} \vert _0}{ \vert A \vert }) \nonumber \\
&\leq& 
\exp\{ - \frac{d_n^2  \vert z - z^{\ast} \vert _0^2 x^2}{ \vert z - z^{\ast} \vert _0 d_{max} (v^2 + 1/3 Mx)}\}
\end{eqnarray*}
by noticing that there are at most $O( \vert z - z^{\ast} \vert _0 d_{n})$ 
number of non-zero $X_{ij}$'s in $l_w(\theta  \vert  z) - l_w(\theta \vert z^{\ast})$.
By taking $x = c/2$, we have
\begin{eqnarray*}
& & P(  \vert  l_w(\theta  \vert  z) - l_w(\theta  \vert  z^{\ast}) - 
\mathbb E \{l_w(\theta  \vert  z) - l_w(\theta  \vert  z^{\ast})\} \vert   \nonumber \\
& & \geq c/2 \frac{d_n \vert z - z^{\ast} \vert _0}{ \vert A \vert }) \nonumber \\ 
&\leq& 
\exp\{ - \frac{\tilde c d_n^2  \vert z - z^{\ast} \vert _0}{ d_{n} (v^2 + 1/6 M c)}\},
\end{eqnarray*}
by using the fact that $d_{max} \asymp d_n$ and adjusting the constant $\tilde c$.
Hence, by union bound, we get 
\begin{multline*}
P( \sup_z  \vert  l_w(\theta  \vert  z) - l_w(\theta  \vert  z^{\ast}) - \\
\mathbb E \{l_w(\theta  \vert  z) - l_w(\theta  \vert  z^{\ast})\} \vert   \geq  c/2 \frac{d_n \vert z - z^{\ast} \vert _0}{ \vert A \vert })
\end{multline*}
\begin{eqnarray}\label{concentration2}
&\leq& \sum_{z} \exp\{ - \frac{\tilde c d_n^2  \vert z - z^{\ast} \vert _0}{ d_{n} (v^2 + 1/6 M c)}\} \nonumber \\ 
&=& 
\sum_{m_0 = 1}^n \sum_{ \vert z - z^{\ast} \vert _0 = m_0} \exp\{ - \frac{\tilde c d_n m_0}{ v^2 + 1/6 M c}\}
\end{eqnarray}

By Condition C2, $d_n = n^{r_d} (r_d > 0)$, \eqref{concentration2} becomes
\begin{multline*}
P\left( \sup_z  \vert  l_w(\theta  \vert  z) - l_w(\theta  \vert  z^{\ast}) - \right. \\
\left. \mathbb E \{l_w(\theta  \vert  z) - l_w(\theta  \vert  z^{\ast})\} \vert   
\geq 
c/2 \frac{d_m \vert z - z^{\ast} \vert _0}{ \vert A \vert } \right) 
\end{multline*}
\begin{eqnarray*}
&\leq& \sum_{m_0 = 1}^m K^{m_0} \exp\{ - \frac{\tilde c d_n m_0}{ v^2 + 1/6 M c}\}\\
&=& \sum_{m_0 = 1}^n  \exp\{ - \frac{\tilde c d_n m_0}{ v^2 + 1/6 M c} + n_0 \log K \}\\
&\leq& \sum_{m_0 = 1}^n  \exp\{ - \frac{\tilde c d_n m_0}{ 2(v^2 + 1/6 M c)} \}\\
&\leq& \exp\{ - C d_n\}
\end{eqnarray*}
for adjusting constant $C$. Together with Condition C2, 
\eqref{target:lemma2} holds with probability $1 - \exp\{-C d_n\}$.

By definition of $L(\theta  \vert  z)$ and \eqref{target:lemma2}, we get that 
$L(\theta  \vert  z) \leq L(\theta  \vert z^{\ast}) \cdot \exp\{ - c/2 \cdot d_m  \vert z - z^{\ast} \vert _0\}$ 
holds for any $z$ with probability $1 - \exp\{- C d_m\}$. Thus, 
\begin{eqnarray}
& &\sum_{z \neq z^{\ast}} L(\theta  \vert  z) \nonumber \\
&\leq& \sum_{z \neq z^{\ast}} L(\theta  \vert  z^{\ast}) \exp\{-c/2 d_n  \vert z - z^{\ast} \vert _0\} \nonumber  \\
&\leq& \sum_{m_0 = 1}^n \sum_{z:  \vert z - z^{\ast} \vert _0 = m_0} \exp\{-c/2 d_n m_0\} \nonumber \\
&\leq& \exp\{- c_1 d_n\} \nonumber,
\end{eqnarray}
by adjusting constant $c_1$. This completes the proof.

\begin{lemma}\label{lem:concentration:ini}
For approximate function $q^{(1)}$, it holds that
\begin{eqnarray}
\sum_{z_i \neq z_i^{\ast}} q_i^{(1)}(z_i) = q_i^{(1)}(z_i^{\ast}) O(\exp\{-\tilde c d_i\}).
\end{eqnarray}
\end{lemma}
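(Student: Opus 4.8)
The plan is to mimic the structure of the proof of Lemma~\ref{prob:diff}, but working at the level of a single node $i$ rather than the full configuration $z$, and using the initialization condition C7 in place of the identification condition C2. Recall from the algorithm that $q_i^{(1)}(z_i) \propto \pi_{z_i}^{(0)} \exp\{\mathbb E_{q^{(0)}(z_{-i})} l_1(\theta^{(0)}|z)\} \cdot S^{(0)}(z_i)$, and since $S^{(0)} \equiv 1/K$ and $\pi^{(0)}$ is (up to constants) uniform, the ratio $q_i^{(1)}(z_i)/q_i^{(1)}(z_i^{\ast})$ is controlled by the difference $\mathbb E_{q^{(0)}(z_{-i})}\{l_1(\theta^{(0)}|z_i,z_{-i}) - l_1(\theta^{(0)}|z_i^{\ast},z_{-i})\}$. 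I would first write this out explicitly and split it into its population (expectation under the true process) part plus a fluctuation part.

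For the population part, I would combine the initialization condition \eqref{q:initial} in C7 with the continuity/boundedness conditions (so that $\theta^{(0)}\in B(\theta^{\ast},\delta)$ may be replaced by $\theta^{\ast}$ up to an $O(\delta d_i)$ error that is absorbed into the constant), giving $\mathbb E_{q^{(0)}(z_{-i})}\mathbb E^{\ast}\{l_1(\theta^{(0)}|z_i,z_{-i}) - l_1(\theta^{(0)}|z_i^{\ast},z_{-i})\} \leq -c\, d_i$ for $z_i \neq z_i^{\ast}$. For the fluctuation part, I would apply Corollary~\ref{cor:q} (the $q$-averaged version of the Bernstein bound) to $g(\theta|z) = l_1(\theta|z_i,z_{-i}) - l_1(\theta|z_i^{\ast},z_{-i})$, noting as in the proof of Lemma~\ref{prob:diff} that only $O(d_i)$ of the summands $X_{i'j'}$ actually change when the single coordinate $z_i$ is altered (the terms involving node $i$ as sender or receiver), so the effective variance and range scale with $d_i$ rather than $|A|$. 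This yields that the fluctuation exceeds $(c/2)\, d_i$ only with probability $\exp\{-\tilde c\, d_i\}$ after adjusting constants, and taking a union bound over the $K-1$ choices of $z_i \neq z_i^{\ast}$ costs only a factor $K$, harmless against $\exp\{-\tilde c d_i\}$ since $d_i \asymp d_m = m^{r_d}\to\infty$.

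Putting these together, on the high-probability event we get $\mathbb E_{q^{(0)}(z_{-i})}\{l_1(\theta^{(0)}|z_i,z_{-i}) - l_1(\theta^{(0)}|z_i^{\ast},z_{-i})\} \leq -(c/2)\, d_i$ for each $z_i \neq z_i^{\ast}$, hence $q_i^{(1)}(z_i)/q_i^{(1)}(z_i^{\ast}) \leq \exp\{-(c/2) d_i\}$, and summing over the at most $K$ such $z_i$ gives $\sum_{z_i \neq z_i^{\ast}} q_i^{(1)}(z_i) = q_i^{(1)}(z_i^{\ast})\, O(\exp\{-\tilde c\, d_i\})$ after relabeling the constant, which is exactly the claim.

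The main obstacle I anticipate is bookkeeping around the shift from $\theta^{(0)}$ to $\theta^{\ast}$ in the population term: condition \eqref{q:initial} is stated at $\theta^{\ast}$, so I need C5 (boundedness of $\lambda_{ij}$, $\log\lambda_{ij}$ and their derivatives) plus a Lipschitz argument to guarantee that perturbing $\theta$ within the $\delta$-ball changes $\mathbb E_{q^{(0)}(z_{-i})} \bar l_w$ by at most $O(\delta d_i)$, so that for $\delta$ small enough the $-c\,d_i$ gap is preserved (with a slightly smaller constant). The second, more routine, care point is making precise the claim that only $O(d_i)$ of the Bernstein summands are affected by flipping $z_i$, and confirming that the resulting tail bound $\exp\{-\tilde c d_i^2 / (d_i(v^2 + \ldots))\} = \exp\{-\tilde c d_i/(v^2+\ldots)\}$ dominates the $K$-fold union bound — this is identical in spirit to the corresponding step in Lemma~\ref{prob:diff}.
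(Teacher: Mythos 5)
Your proposal is correct and follows essentially the same route as the paper's own proof: establish the high-probability gap $\mathbb E_{q^{(0)}(z_{-i})}\{l_1(\theta^{(0)}|z_i,z_{-i}) - l_1(\theta^{(0)}|z_i^{\ast},z_{-i})\} \leq -(c/2)d_i$ via the initialization condition C7 plus the Bernstein-type concentration of Lemma~\ref{concentration} (with effective scale $d_i$ since only terms involving node $i$ change), then translate this through the proportionality defining $q_i^{(1)}$ and sum over the finitely many $z_i \neq z_i^{\ast}$. Your version is in fact more careful than the paper's about the shift from $\theta^{\ast}$ to $\theta^{(0)}$ and about the size of the final union bound, but these are refinements of the same argument, not a different one.
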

\textbf{Proof of Lemma \ref{lem:concentration:ini}}
We first have that 
\begin{multline}
\mathbb E_{q^{(0)}(z_{-i})} l_0(\theta^{0} \vert z_i, z_{-1})\} \\
 \leq \mathbb E_{q^{(0)}(z_{-i})} l_0(\theta^{0} \vert z_i^{\ast}, z_{-1})\} - \frac{c}{2} d_i \label{approx:q1}
\end{multline}
with high probability for any $z_i \neq z_i^{\ast}$.
Under initial condition C7, \eqref{approx:q1} can be proved via the same technique used in Lemma \ref{concentration}.
Secondly, we note that 
\begin{eqnarray}
q_i^{(1)}(z_i) \propto \exp\{\mathbb E_{q^{(0)}(z_{-i})} l_0(\theta^{0} \vert z)\}.
\end{eqnarray}
We then have
\begin{eqnarray}
q_i^{(1)}(z_i) \leq
q_i^{(1)}(z_i^{\ast}) \exp\{- \frac{c}{2} d_i\}.
\end{eqnarray}
By summing over all $z_i$, it gives that 
\begin{eqnarray*}
\sum_{z_i \neq z_i^{\ast}} q_i^{(1)}(z_i) \leq m q_i^{(1)}(z_i^{\ast}) \exp\{-\frac{c}{2} d_i\}
\leq q_i^{(1)}(z_i^{\ast}) \exp\{- \tilde c d_i\},
\end{eqnarray*}
by adjusting the constant $\tilde c$. This concludes the proof.

\section{Proofs of Theorem 2 and Theorem 3} \label{app:thm_proof}
With aid of useful lemmas stated in previous sections, 
we are ready for the proof of main theorems.

\noindent \textbf{Proof of Theorem \ref{regret:thm}}
According to definition of \textrm{Regret}, we have 
\begin{eqnarray*}
\textrm{Regret}(T) &=& 
\sum_{m=1}^M \tilde l_m(\theta^{(m)}  \vert  z) - \sum_{m=1}^M \tilde l_m(\theta^{\ast}  \vert  z^{\ast}) \nonumber \\
&=& \sum_{m=1}^M \{\mathbb E_{q^{(m)}} \tilde l_m(\theta^{(m)}  \vert  z) - \mathbb E_{q^{(m)}} \tilde l_m(\theta^{\ast}  \vert  z)\} \nonumber \\
& & - \sum_{m=1}^M \{ \mathbb E_{q^{(m)}} \tilde l_m(\theta^{(m)}  \vert  z) - 
\tilde l_m(\theta^{(m)}  \vert  z^{\ast}) \} \nonumber \\
& & +  \sum_{m=1}^M \{ \mathbb E_{q^{(m)}} \tilde l_m(\theta^{\ast}  \vert  z) - 
\tilde l_m(\theta^{\ast}  \vert  z^{\ast}) \}.
\end{eqnarray*}
Next we prove the result by the following three steps.
\begin{itemize}
    \item[] \textbf{Step 1.} With high probability, it holds that $\theta^{(m)} \in B(\theta^{\ast}, \delta)$ and 
    \begin{eqnarray}\label{q:diff}
    q^{(m)} (z^{\ast}) \geq 1 - C \exp\{-c_1 m d_n\}
    \end{eqnarray}
    for $m = 1, 2, \ldots$. 
    \item[] \textbf{Step 2.} With high probability, it holds that
    \begin{eqnarray}
    & & \sum_{m=1}^M \{ \mathbb E_{q^{(m)}} \tilde l_m(\theta^{(m)}  \vert  z) - 
    \mathbb E_{q^{(m)}} \tilde l_m(\theta^{\ast}  \vert  z) \} \nonumber \\ 
    &\leq& C \sqrt{M} \log(M  \vert A \vert )^2, 
    \end{eqnarray} 
    for some constant $C$.
    \item[] \textbf{Step 3.} With high probability, it holds that
    \begin{eqnarray}
    & &  \vert \sum_{m=1}^M \{ \mathbb E_{q^{(m)}} \tilde l_m(\theta^{(m)}  \vert  z) - 
    \tilde l_m(\theta^{(m)}  \vert  z^{\ast}) \} \vert   \nonumber \\
    &\leq& M \exp\{ - c d_n\},
    \end{eqnarray}
    and 
    \begin{eqnarray}
    & &  \vert \sum_{m=1}^M \{ \mathbb E_{q^{(m)}} \tilde l_m(\theta  \vert  z) - \tilde l_m(\theta  \vert  z^{\ast}) \} \vert 
     \nonumber \\ 
    &\leq& M \exp\{ - c d_n\}
    \end{eqnarray}
    for any $\theta \in B(\theta^{\ast}, \delta)$ and some constant $c$.
\end{itemize}

\textbf{Proof of Step 1.}
We prove this by mathematical induction. 
When $m = 0$, it is obvious that $\theta^{(0)} \in B(\theta^{\ast}, \delta)$ 
according to the assumption on initialization.
By Lemma \ref{lem:concentration:ini}, we have that 
$\sum_{z_i \neq z_i^{\ast}} q_i^{(1)}(z_i) = q_i^{(1)}(z_i^{\ast}) O(\exp\{- \tilde c d_i\})$. 
Then $q^{(1)}(z^{\ast}) = \prod_i q_i^{(1)}(z_i^{\ast}) \geq 1 - n O(\exp\{-c_1 d_n\})$.
That is, \eqref{q:diff} holds for $q^{(1)}$ by adjusting constant $c_1$.
Next we assume that $\theta^{(m)} \in B(\theta^{\ast}, \delta)$ 
and \eqref{q:diff} holds for any $m \leq m_1$ and need to show 
that $\theta^{(m_1+1)} \in B(\theta^{\ast}, \delta)$ 
and \eqref{q:diff} holds for $m = m_1 + 1$.

We consider the following two scenarios,\\ (1) $0 \leq \ \vert \theta^{(m_1)} - \theta(q^{(m_1)})\ \vert  < \frac{\delta}{2}$ and \\
(2)
$\frac{\delta}{2} \leq \ \vert \theta^{(m_1)} - \theta(q^{(m_1)})\ \vert $.
We can compute that 
\begin{multline}
\ \vert \theta^{(m_1 + 1)} - \theta(q^{(m_1)})\ \vert ^2 = \\ 
\ \vert \theta^{(m_1)} + \eta_{m_1} \frac{ \partial \mathcal Q_{m_1+1}(\theta, q)}{\partial \theta} - \theta(q^{(m_1)}) \ \vert ^2 \label{one-step} \\
= \ \vert \theta^{(m_1)} - \theta^{\ast}\ \vert ^2\\ + 2 \eta_{m_1}
\frac{ \partial \mathcal Q_{m_1+1}(\theta, q)}{\partial \theta}(\theta^{(m_1)} - \theta(q^{(m_1)})) \\
+ \eta_{m_1}^2 \ 
\vert \frac{ \partial \mathcal Q_{m_1+1}(\theta, q)}{\partial \theta}\ \vert ^2. 
\end{multline}
By Lemma \ref{concentration}, we know that 
$ \frac{ \partial \mathcal Q_{m_1+1}(\theta, q)}{\partial \theta} = 
\frac{\partial \bar{\mathcal Q}(\theta, q)}{\partial \theta} + \epsilon$ where $\epsilon = o(1)$ 
for all $q$. 
Therefore, the
right hand side of \eqref{one-step} 
becomes 
\begin{multline}
\ \vert \theta^{(m_1)} - \theta(q^{(m_1)})\ \vert ^2 + 2 \eta_{m_1}
\frac{\partial \bar{\mathcal Q}(\theta, q)}{\partial \theta}(\theta^{(m_1)} - \theta(q^{(m_1)}) ) \\
+ 2 \epsilon \eta_{m_1} \ \vert \theta^{(m_1)} - \theta(q^{(m_1)})\ \vert 
+ 2(\eta_{m_1}^2 + \epsilon^2)\ \vert \frac{\partial \bar{\mathcal Q}(\theta, q)}{\partial \theta}\ \vert ^2.
\label{one-step-2}
\end{multline}
In the first scenario, by Condition C4, \eqref{one-step-2} implies that 
\begin{multline}
\ \vert \theta^{(m_1 + 1)} - \theta(q^{(m_1)})\ \vert  \leq \ \vert \theta^{(m_1)} 
- \theta(q^{(m_1)})\ \vert 
+ c \delta \epsilon \\ + O(\epsilon^2)
< \frac{3}{4} \delta
\label{one-step-3}
\end{multline}
when the step size $\eta_{m_1}$ is small (e.g.$\eta_{m_1} \leq c/2$).

In the second scenario, by Condition C4, \eqref{one-step} implies that 
\begin{multline}
\vert \theta^{(m_1 + 1)} - \theta(q^{(m_1)})\ \vert  
\leq  \vert \theta^{(m_1)} 
- \theta(q^{(m_1)})\ \vert \\
- 2(c + o(1)) \eta_{m_1} \ \vert \theta^{(m_1)} - \theta(q^{(m_1)})\ \vert ^2 \\
+ \eta^2(1 + o(1))c^2 \ \vert \theta^{(m_1)} - \theta(q^{(m_1)})\ \vert ^2 \\
\leq \ \vert \theta^{(m_1)} 
- \theta(q^{(m_1)})\ \vert  - \frac{1}{2}c \eta_{m_1} \ \vert \theta^{(m_1)} 
- \theta(q^{(m_1)})\ \vert ^2
\label{one-step-4}
\end{multline}
for $\eta_{m_1} \leq 1/2c$.

According to \eqref{q:diff} via induction, we have that 
$d(q^{(m_1)}(z^{\ast}), \delta_{z^{\ast}}) = C \exp\{-c_1 m_1 d_n\}$. 
This further implies that $\ \vert \theta(q^{(m_1)}) - \theta^{\ast}\ \vert  = O(\exp\{-c_1 m_1 d_n\})$ by Condition C3. 
By above facts, in the first scenario, we 
have 
\begin{eqnarray}
\ \vert \theta^{(m_1 + 1)} - \theta^{\ast}\ \vert  &\leq& \ \vert \theta^{(m_1 + 1)} - 
\theta(q^{(m_1)})\ \vert  + \ \vert \theta(q^{(m_1)}) -
\theta^{\ast}\ \vert  \nonumber \\
&\leq& 
\ \vert \theta^{(m_1)} - 
\theta(q^{(m_1)})\ \vert  + \ \vert \theta(q^{(m_1)}) -
\theta^{\ast}\ \vert  \nonumber \\
&\leq& 3\delta/4
+
 2 \ \vert \theta(q^{(m_1)}) - \theta^{\ast}\ \vert  < \delta.
\end{eqnarray}

 In the second scenario,
 we have
 \begin{multline*}
  \vert \theta^{(m_1 + 1)} - \theta^{\ast}\ \vert  \leq \ \vert \theta^{(m_1 + 1)} - 
\theta(q^{(m_1)})\ \vert  +  \vert \theta(q^{(m_1)}) -
\theta^{\ast}\ \vert  \\
\leq \ \vert \theta^{(m_1)} - 
\theta(q^{(m_1)})\ \vert  \\
- \frac{1}{2}c \eta_{m_1} \ \vert \theta^{(m_1)} 
- \theta(q^{(m_1)})\ \vert ^2 + \ \vert \theta(q^{(m_1)}) - 
\theta^{\ast}\ \vert  \\
 < 
\delta - \frac{c \eta_{m_1} \delta}{4} + O(\exp\{-c_1m_1d_n\})
< \delta,
 \end{multline*}
 where the last inequality holds since 
 $\eta_{m_1} = c m_1^{-1/2} >> \exp\{-c_1m_1d_n\}$.
Hence, we conclude that $\theta^{(m_1 + 1)} \in B(\theta^{\ast}, \delta)$.

Next, we turn to study approximate distribution $q^{(m+1)}$ to show that \eqref{q:diff} holds for $m = m_1 + 1$. According to Condition C2 and Lemma \ref{prob:diff}, we have that
$ \vert A \vert  l_{m_1}(\theta^{(m_1)} \vert z) \leq  \vert A \vert  l_{m_1}(\theta^{(m_1)} \vert z^{\ast}) - c d_n$
 for any $z \neq z^{\ast}$. This implies that 
\begin{eqnarray*}
\mathbb E_{q_{-z_i}^{(m_1)}} 
 \vert A \vert  l_{m_1}(\theta^{(m_1)} \vert z_i , z_{-i}) \leq 
 \vert A \vert  l_{m_1}(\theta^{(m_1)} \vert z^{\ast}) - c d_n 
\end{eqnarray*}
for any $z_i \neq z_i^{\ast}$ and 
\begin{eqnarray*}
& & \mathbb E_{q_{-z_i}^{(m_1)}}  \vert A \vert  l_{m_1}(\theta^{(m_1)} \vert z_i^{\ast}, z_{-i}) \\
&\geq&  \vert A \vert  l_{m_1}(\theta^{(m_1)} \vert z^{\ast}) - C  \vert A \vert  \exp\{-c m_1 d_n\}. 
\end{eqnarray*}
Combining these two facts, we have that 
\begin{equation*}
E_{q_{-z_i}^{(m_1)}}  \vert A \vert  l_{m_1}(\theta^{(m_1)} \vert z_i , z_{-i})
< 
\mathbb E_{q_{-z_i}^{(m_1)}}  \vert A \vert  l_{m_1}(\theta^{(m_1)} \vert z_i^{\ast}, z_{-i})
- c_2 d_n,
\end{equation*}
holds for any $z_i \neq z_i^{\ast}$ and some adjusted constant $c_2$.

By recursive formula 
$$
S^{(m_1 + 1)}(z_i) = S^{(m_1)}(z_i) \exp\{\mathbb E_{q_{-z_i}^{(m_1)}}
 \vert A \vert  l_m(\theta^{(m_1)}  \vert  z_i, z_{-i}) \},
$$ 
we then have 
$$
\sum_{z_i \neq z_i^{\ast}} S^{(m_1 + 1)}(z_i) = S^{(m_1 + 1)}(z_i^{\ast})
O(\exp\{- (c_1 m_1 + c_2) d_n\}),$$ 
which indicates that 
$$
q^{(m_1 + 1)}(z_i^{\ast}) \geq 1 - \exp\{- (c_1 m_1 + c_2) d_n\}.
$$ 
Finally, noting $q^{(m_1 + 1)}(z^{\ast}) = \prod_{i=1}^n q_i^{(m_1 + 1)}(z_i^{\ast})$ 
gives us 
\begin{align*}
     & q^{(m_1 + 1)}(z^{\ast}) \geq 1 - n \exp\{-(c_1 m_1 + c_2) d_n\} \\
     & \geq 1 - \exp\{-c_1(m_1 + 1) d_n\}.\\
\end{align*}
Hence, we complete \textbf{Step 1} by induction.
\textbf{Proof of Step 2.}
For notational simplicity, we denote 
$\mathbb E_{q^{(m)}} \tilde l_m(\theta  \vert  z)\}$ 
as $h_m(\theta)$ in the remaining part of the proof.
By local convexity, we have 
\begin{eqnarray*}
h_m(\theta^{\ast}) - h_m(\theta^{(m)}) \geq \nabla h_m(\theta^{(m)})^T 
(\theta^{\ast} - \theta^{(m)}),
\end{eqnarray*}
which is equivalent to 
\begin{equation}
\label{expand:grad}
h_m(\theta^{(m)}) - h_m(\theta^{\ast}) \leq \
\nabla h_m(\theta^{(m)})^T (\theta^{(m)} - \theta^{\ast}).
\end{equation}

We know that 
\begin{eqnarray*}
& & d(\bar \theta^{(m+1)}, \theta^{\ast}) - d(\theta^{(m)}, \theta^{\ast}) \nonumber \\
&\leq& \ \vert \theta^{(m)} -  \eta_m \nabla h_m(\theta^{(m)}) - 
\theta^{\ast}\vert ^2 - \ \vert  \theta^{(m)} - \theta^{\ast}\vert ^2 \nonumber \\
&\leq& \eta_m^2 \ \vert \nabla h_m(\theta^{(m)}) \vert ^2 - 
2 \eta_m \nabla h_m(\theta^{(m)})^T (\theta^{(m)} - \theta^{\ast}),
\end{eqnarray*}
where $\bar \theta^{(m+1)} = \theta^{(m)} - \eta_m \nabla h_m(\theta^{(m)})$.
By summing over $m$ and the fact that 
$d(\theta^{(m)}, \theta^{\ast}) \leq d(\bar \theta^{(m)}, \theta^{\ast})$, 
we have
\begin{eqnarray}\label{expand:dist}
& & \sum_m \{ d(\theta^{(m+1)}, \theta^{\ast}) - d(\theta^{(m)}, \theta^{\ast})\} \nonumber \\
&\leq& \sum_m \{ d(\bar \theta^{(m+1)}, \theta^{\ast}) - d(\theta^{(m)}, \theta^{\ast})\} \nonumber \\
&\leq& \sum_m \{ \eta_m^2 \ \vert \nabla h_m(\theta^{(m)})\ \vert ^2 \\
&    & - 2 \eta_m \nabla h_m(\theta^{(m)})^T (\theta^{(m)} - \theta^{\ast}) \}.
\end{eqnarray}
By equation \eqref{expand:grad}, we then have 
\begin{eqnarray}\label{main:regret}
\textrm{regret} &\leq& \sum_m \nabla h_m(\theta^{(m)})^T (\theta^{(m)} - \theta^{\ast}) \nonumber \\
&\leq& \frac{1}{2 \eta_m} (d(\theta^{(0)}, \theta^{\ast}) - d(\theta^{(m+1)}, \theta^{\ast})) \nonumber \\ 
& & + \sum_m \frac{\eta_m}{2} \ \vert \nabla h_m(\theta^{(m)}) \vert ^2,
\end{eqnarray}
where the second inequality uses \eqref{expand:dist}.

Next, we prove that $\nabla h_m(\theta^{(m)})$ is bounded 
with probability going to $1$ for any $m$. Note that,
\begin{eqnarray*}
\nabla l_m(\theta  \vert  z) &=& \nabla \left( \frac{1}{ \vert A \vert } \left[ \sum_{(i,j) \in A} 
\int_{(m - 1)\omega}^{m \omega} \log \lambda_{ij}(s \vert z) dN_{ij}(s) \right. \right. \nonumber \\
& & \left. \left. - \int_{(m - 1)\omega}^{m\omega} \lambda_{ij}(s \vert z) ds\} \right] \right) \\
&\leq& \frac{1}{ \vert A \vert } \left\{ \sum_{(i,j) \in A} \int_{(m - 1)\omega}^{m \omega}
 \frac{\lambda_{ij}^{'}(s \vert z)}{\lambda_{ij}(s \vert z)} dN_{ij}(s) \nonumber \right. \\
& & \left. - \int_{(m - 1)\omega}^{m\omega} \lambda_{ij}^{'}(s \vert z) ds\right\}.
\end{eqnarray*}
Let $B_1 = \sup_{t,z} \frac{\lambda_{ij}^{'}(t \vert z)}{\lambda_{ij}(t \vert z)}$ 
and $B_2 = \sup_{t, z} \lambda_{ij}^{'}(s \vert z)$. 
Both $B_1$ and $B_2$ are bounded according to Condition C5.
We know that the number of events, $M_w$, in each time window follows a 
Poisson distribution with mean $\int_0^{\omega} \lambda(s) ds$. 
Therefore, we get $P(M_w \geq m_w) \leq \exp\{-c m_w\}$ for some constant $c$.
We then have that $\nabla l_m(\theta  \vert  z) \leq C(B_1 m_w + B_2)$ 
with probability at least $1 - M  \vert A \vert  \exp\{ - c m_w\}$.

By letting $\eta_m = \frac{1}{\sqrt{M}}$, \eqref{main:regret} becomes
\begin{eqnarray*}
\textrm{regret} &\leq& C (\sqrt{M} d(\theta^{(0)}, \theta^{\ast}) + \sqrt{M} (B_1 m_W + B_2)^2) \\
&\leq& C \sqrt{M} \log(M  \vert A \vert )^2,
\end{eqnarray*}
where we set $m_w = c \log (M  \vert A \vert )$.

\textbf{Proof of Step 3.}
We only need to show that for each $m$, it holds that 
\begin{eqnarray*}
 \vert \mathbb E_{q^{(m)}} \tilde l_m(\theta \vert z) - \tilde l_m(\theta  \vert  z^{\ast}) \vert  \leq C \exp\{-c d_n\}.
\end{eqnarray*}
We know that 
\begin{eqnarray*}
& & \mathbb E_{q^{(m)}} \tilde l_m(\theta  \vert  z^{\ast}) \\
&=& q^{(m)}(z^{\ast}) \tilde l_m(\theta  \vert  z^{\ast}) + \sum_{z \neq z^{\ast}} q^{(m)}(z)
  \tilde l_m(\theta  \vert  z) \\
&\leq& q^{(m)}(z^{\ast}) \tilde l_m(\theta  \vert  z) + 
\sum_{z \neq z^{\ast}} q^{(m)} B_0 m_W.
\end{eqnarray*}
This implies that 
\begin{eqnarray*}
 \vert \mathbb E_{q^{(m)}} \tilde l_m(\theta \vert z) - 
\tilde l_m(\theta  \vert  z^{\ast}) \vert  &\leq& (1 - q^{(m)}(z^{\ast}))  
\tilde l_m(\theta  \vert  z^{\ast})  \\
& & + \sum_{z \neq z^{\ast}} q^{(m)}(z)  \tilde l_m(\theta  \vert  z) \\
&\leq& C B_0 m_w \exp\{-c d_n\} .
\end{eqnarray*}
This completes the proof.

\noindent \textbf{Proof of Theorem \ref{rate:thm}}
By the update rule, we know that $\theta^{(m+1)} = \theta^{(m)} - 
\eta_m \nabla h_m(\theta^{(m)})$, so
\begin{equation*}
\ \vert \theta^{(m+1)} - \theta^{\ast} \vert ^2 \leq \ \vert \theta^{(m)} - 
\eta_m \nabla h_m(\theta^{(m)}) - \theta^{\ast} \vert ^2 \nonumber\\
\end{equation*}
\begin{multline}
= \ \vert \theta^{(m)} - \theta^{\ast}\ \vert ^2 
- \eta_m \nabla h_m(\theta^{(m)}) (\theta^{(m)} - \theta^{\ast}) \\ 
+ \eta_m^2 \nabla h_m^2(\theta^{(m)}).     
\end{multline}
Furthermore, 
\begin{eqnarray}\label{update:ineq}
& & \ \vert \theta^{(m+1)} - \theta^{\ast}\ \vert ^2  \nonumber \\
&\leq& 
\ \vert \theta^{(m)} - \theta^{\ast}\ \vert ^2 - 
\eta_m \nabla {h}_m(\theta^{(m)}) (\theta^{(m)} - \theta^{\ast}) \nonumber \\
& & + \eta^2_n \nabla h_n^2(\theta^{(n)})  \nonumber \\
&=& \ \vert \theta^{(m)} - \theta^{\ast}\ \vert ^2 - 
\eta_m (\nabla {h}_m(\theta^{(m)}) - 
\nabla {\tilde{l}}(\theta^{(m)}) \nonumber \\
& & + \nabla {\bar \tilde{l}}(\theta^{(m)})) (\theta^{(m)} - \theta^{\ast}) + \eta^2_m
  \nabla h_m^2(\theta^{(m)})  \nonumber \\
&\leq& \ \vert \theta^{(m)} - \theta^{\ast}\ \vert ^2 - 
\eta_m \nabla \bar {\tilde{l}}(\theta^{(m)} \vert z^{\ast}) (\theta^{(m)} - \theta^{\ast})  \nonumber \\
& & + \eta^2_m  \nabla h_m^2(\theta^{(m)}) + 
c \eta_m \delta d(q^{(m)}, \delta_{z^{\ast}}) \nonumber \\
& & + c \delta \eta_m O_p\left(\frac{1}{\sqrt{ \vert A \vert }}\right),
\end{eqnarray} 
where the term $\frac{1}{\sqrt{ \vert A \vert }}$ comes from the probability bound in Lemma \ref{concentration}.
Notice that $\theta(q) = \theta^{\ast}$ when $q = \delta_{z^{\ast}}$, we have that 
$- \nabla \bar {\tilde{l}}(\theta^{(m)} \vert z^{\ast})
 (\theta^{(m)} - \theta^{\ast}) \leq 
 - c \ \vert \theta^{(m)} - \theta^{\ast}\ \vert ^2,$
according to Condition C4. 
Furthermore, we know that $d(q^{(m)}, \delta_{z^{\ast}}) \leq \exp\{-m d_n\}$ 
(see \eqref{q:diff}). Therefore, $\eta_m \delta d(q^{(m)}, \delta_{z^{\ast}})$ 
can be absorbed into $\eta^2_m  \nabla h_m^2(\theta^{(m)})$.
In summary, \eqref{update:ineq} becomes 
\begin{eqnarray*}
\ \vert \theta^{(m+1)} - \theta^{\ast} \vert ^2  &\leq& (1 - c \eta_m) \vert \theta^{(m)} 
- \theta^{\ast} \vert ^2 \\
& & + C \eta_m^2 \nabla h_m^2(\theta^{(m)})
+ C \frac{1}{\sqrt{ \vert A \vert }} \eta_m ,
\end{eqnarray*}
which further gives,
\begin{equation}\label{update:final}
\ \vert \theta^{(m+1)} - \theta^{\ast}\ \vert ^2 \leq 
(1 - c \eta_m) \ \vert \theta^{(m)} - \theta^{\ast}\vert ^2 \nonumber 
+ B \left(\eta_m^2 (\log(M \vert A \vert ))^2 + 
\eta_m \frac{1}{\sqrt{ \vert A \vert }}\right), 
\end{equation}
by adjusting constants and noticing that 
$\nabla h_m^2(\theta)$ is bounded by $(\log(M \vert A \vert ))^2$. 
After direct algebraic calculation, we have 
\begin{eqnarray}\label{update:recursive}
& & \ \vert \theta^{(m+1)} - \theta^{\ast} \vert ^2 \nonumber \\
&\leq& \ \vert \theta^{(0)} - \theta^{\ast}\ \vert ^2 \prod_{t=0}^m (1 - c \eta_t) \nonumber \\ 
& & + B \sum_{t = 0}^m \eta_t \left(\eta_t (\log(N \vert A \vert ))^2 + 
\frac{1}{\sqrt{A}}\right) \prod_{s = t+1}^m (1 - c \eta_t). \nonumber \\
\end{eqnarray}
For the first term in \eqref{update:recursive}, 
we have that 
\begin{eqnarray*}
\prod_{t=0}^m (1 - c \eta_t) &\leq& \prod_{t=0}^m \exp\{-c \eta_t\} \\
&= & \exp\{-c\sum_{t=0}^m \eta_t\} \\
&\leq& C \exp\{- m^{1-\alpha}\}.
\end{eqnarray*}
Next, we define 
$x_{1t} = \eta_t (\log(M \vert A \vert ))^2$ and 
$x_2 = 1/\sqrt{ \vert A \vert }$, to simplify the remainder of the proof.
For the second term in \eqref{update:recursive}, we have that 
\begin{eqnarray*}
& & \sum_{t=0}^m \eta_t (x_{1t} + x_2) \prod_{s = t+1}^m (1 - c\eta_s) \nonumber \\
& = & \sum_{t=0}^{m/2} \eta_t (x_{1t} + x_2)  \prod_{s = t+1}^m (1 - c\eta_s) \\
& & + \sum_{t=m/2 + 1}^{m} \eta_t (x_{1t} + x_2)  \prod_{s = t+1}^m (1 - c\eta_s) \nonumber \\
& = &  \sum_{t=0}^{m/2} \eta_t (x_{1t} + x_2)  \prod_{s = t+1}^m (1 - c\eta_s) \\
& & + \sum_{t=m/2 + 1}^{m} (x_{1t} + x_2)  (1 - (1 - c\eta_t))/c \times \\
& & \prod_{s = t+1}^m (1 - c\eta_s) \nonumber \\
& \leq &  \sum_{t=0}^{m/2} \eta_t (x_{1t} + x_2)  \prod_{s = t+1}^m (1 - c\eta_s) \\
& & + \frac{1}{c} (x_{1m/2} + x_2)  \sum_{t=m/2 + 1}^{m}  
(1 - (1 - c\eta_t)) \times \\
& & \prod_{s = t+1}^m (1 - c\eta_s) \nonumber \\
&\leq& \sum_{t=0}^{m/2} \eta_t (x_{1t} + x_2)  
\prod_{s = t+1}^m (1 - c\eta_s) + 
\frac{1}{c} (x_{1m/2} + x_2)  \nonumber \\
&\leq& \exp\left\{ -c \sum_{t = m/2 + 1}^m \eta_t\right\} (\sum_{t=0}^{m/2} \eta_t (x_{1t} + x_2) ) \\
& & + \frac{1}{c} (x_{1m/2} + x_2)  \nonumber \\
&\leq& m \exp\{ - c m^{1-\alpha}\} + 1/c (x_{1 m/2} + x_2) \\
&\leq& c_0 (m^{-\alpha} (\log(M \vert A \vert ))^2 + \frac{1}{\sqrt{ \vert A \vert }}), 
\end{eqnarray*}
by adjusting the constants. 
Combining the above inequalities, 
we have 
$$
\ \vert \theta^{(m)} - \theta^{\ast} \vert ^2 
= O_p \left(m^{-\alpha} (\log(M \vert A \vert ))^2 +
\frac{1}{\sqrt{ \vert A \vert }}\right).
$$ 
This concludes the proof.

\section{Community Recovery under Relaxed Conditions}
\label{app:extended}
In this section, we relax the conditions mentioned in previous section by considering the case of uneven degree distribution. 
Let $d_i$ be the number of nodes that $i$-th individual connects to.
Then uneven degree distribution means that $d_i$ are not in the same order. 
Degree $d_i$ goes to infinity for some node $i$'s and is bounded for other $i$'s.
Under this setting, we establish results for consistent community recovery.
We start with introducing a few more modified conditions.
\begin{itemize}
    \item[C2'] [\textbf {Latent Membership Identification}]
    Assume 
    $$ \bar l_w(\theta  \vert  z_{\mathcal N}, z_{-\mathcal N}^{\ast}) 
    \leq \bar l_w(\theta  \vert  z^{\ast}) - c \frac{\sum_{i \in \mathcal N} d_i}{ \vert A \vert },$$
    for any subset $\mathcal N \subset \{1,\ldots,n\}$ and $z_{\mathcal N}$ 
    and $z_{-\mathcal N}$ are the sub-vectors of $z$ with and without elements in $\mathcal N$ respectively.  
    \item[C3'] [\textbf{Continuity}]
    Assume 
    \begin{eqnarray}
    \bar Q(\theta, q) - \bar l_w (\theta  \vert  z^{\ast}) 
    \leq c \frac{1}{ \vert A \vert } \sum_i d_i \cdot d(q_i, \delta_{z_i^{\ast}})
    \end{eqnarray}
    holds. Also assume
    $ \vert \theta(q) - \theta^{\ast} \vert  \leq c \frac{1}{ \vert A \vert } 
    \sum_i d_i \cdot d(q_i, \delta_{z_i^{\ast}})$ 
    holds for any $q$ and some constant $c$.
    \item[C6'][\textbf{Network Degree}]
    Suppose $\{1,\ldots,n\}$ can be partitioned into 
    two sets $\mathcal N_u$ and $\mathcal N_b$. 
    $\mathcal N_u$ is the set of nodes with degree 
    larger than $d_n$ and $N_b$ is the set of nodes with bounded degree. 
    $d_n = m^{r_d} (r_d > 0)$.
    
    Let $d_{i, \mathcal N_b}$ be the number of nodes within 
    $\mathcal N_b$ that individual $i$ connects to. 
    We assume $d_{i, \mathcal N_b}$ is bounded for all $i$.  
    
    In addition, the cardinality of $N_b$ satisfies $ \vert \mathcal N_b \vert  /  \vert A \vert  = o(1)$.
\end{itemize}

\begin{lemma}\label{prob:diff:extend}
With probability $1 -  \exp\{ - C d_n \}$, it holds that 
\begin{eqnarray}
\sum_{z: z_i \neq z_i^{\ast} } L_i(\theta  \vert  z) = 
L_i(\theta  \vert  z^{\ast}) \cdot O(\exp\{- c_0 d_n\})
\end{eqnarray}
for any $i \in \mathcal N_{u}$ and any $\theta \in B(\theta^{\ast}, \delta)$ 
for some constants $c_1$ and $\delta$. 
Here $L_i(\theta  \vert  z) := \exp\{ \vert A \vert  l_i (\theta  \vert  z)\}$ and 
$$
l_i(\theta\vert  z):= \frac{1}{ \vert A \vert }(\sum_{(i,j) \in A} l_{ij}(\theta \vert z_i, z_j) +
 \sum_{(j,i) \in A} l_{ji}(\theta \vert z_j, z_i)).
 $$
\end{lemma}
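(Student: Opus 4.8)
The plan is to imitate the proof of Lemma~\ref{prob:diff}, but now localized to the single node $i\in\mathcal N_u$ and with the degree-aware identification condition C2$'$ used in place of C2. Here $|A|\,l_i(\theta\mid z)=\sum_{(i,j)\in A}l_{ij}(\theta\mid z_i,z_j)+\sum_{(j,i)\in A}l_{ji}(\theta\mid z_j,z_i)$ is a sum of $\Theta(d_i)$ terms, each bounded by C5, and when $z_i$ is moved off $z_i^\ast$ (holding the remaining coordinates at the truth, which is the relevant comparison entering $\sum_{z:\,z_i\neq z_i^\ast}$) all of these terms change. The target finite-sample inequality is
\begin{equation}
l_i(\theta\mid z)\ \le\ l_i(\theta\mid z^\ast)-\frac{c}{2}\,\frac{d_i}{|A|},
\label{plan:target}
\end{equation}
holding simultaneously for every such $z$, every $\theta\in B(\theta^\ast,\delta)$ and every $i\in\mathcal N_u$, on an event of probability at least $1-\exp\{-Cd_m\}$. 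Granting \eqref{plan:target}, $L_i(\theta\mid z)=\exp\{|A|\,l_i(\theta\mid z)\}\le L_i(\theta\mid z^\ast)\exp\{-\tfrac c2 d_i\}\le L_i(\theta\mid z^\ast)\exp\{-\tfrac c2 d_m\}$, and summing over the at most $K-1$ admissible values of $z_i$ gives $\sum_{z:\,z_i\neq z_i^\ast}L_i(\theta\mid z)\le (K-1)\,L_i(\theta\mid z^\ast)\exp\{-\tfrac c2 d_m\}=L_i(\theta\mid z^\ast)\cdot O(\exp\{-c_0 d_m\})$ after renaming constants, which is the claim.

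To prove \eqref{plan:target} I would first record the population gap: since $\bar l_w(\theta\mid z_i,z_{-i}^\ast)-\bar l_w(\theta\mid z^\ast)$ involves only edge terms incident to $i$, it equals $\bar l_i(\theta\mid z_i,z_{-i}^\ast)-\bar l_i(\theta\mid z^\ast)$, so Condition C2$'$ with $\mathcal N=\{i\}$ yields $\bar l_i(\theta\mid z_i,z_{-i}^\ast)\le\bar l_i(\theta\mid z^\ast)-c\,d_i/|A|$ for all $z_i\neq z_i^\ast$ and $\theta\in B(\theta^\ast,\delta)$. Next I would bound the fluctuation of $l_i-\bar l_i$ exactly as in Lemma~\ref{concentration}: writing $l_i(\theta\mid z)-l_i(\theta\mid z^\ast)=\tfrac1{|A|}\sum_k Y_k$ with $Y_k$ the mean-shifted, bounded (by C5) increments over the $\Theta(d_i)$ edges touching $i$, Bernstein's inequality gives a deviation bound $P\bigl(\bigl|\,l_i(\theta\mid z)-l_i(\theta\mid z^\ast)-\mathbb E[\cdot]\,\bigr|\ge \tfrac c2\tfrac{d_i}{|A|}\bigr)\le 2\exp\{-\bar c\,d_i\}$, the key point being that the effective sample size in the exponent is the \emph{local} degree $d_i$, not $|A|$. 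A union bound over the $\le K$ choices of $z_i$, over a $\delta/|A|$-net of $B(\theta^\ast,\delta)$ (polynomially many points, with the discretisation error absorbed by the Lipschitz bound from C5), and over the $|\mathcal N_u|\le m$ nodes, multiplies this failure probability by a factor subexponential in $d_m$; since $d_i\ge d_m$ throughout $\mathcal N_u$ and $d_m=m^{r_d}$ with $r_d>0$ (so $\log m=o(d_m)$), the total remains $\exp\{-Cd_m\}$. Combining the population gap with the fluctuation bound gives \eqref{plan:target}.

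The main obstacle is getting the concentration to be exponentially small in $d_m$ rather than in the far weaker order $d_m^2/|A|$: this needs the number of nonzero increments in $l_i(\theta\mid z)-l_i(\theta\mid z^\ast)$ to be genuinely $\Theta(d_i)$ and the degree heterogeneity inside $\mathcal N_u$ not to break uniformity --- which is where Condition C6$'$ enters (each node of $\mathcal N_u$ has only $O(1)$ neighbours of bounded degree and $|\mathcal N_b|/|A|=o(1)$, so the $\mathcal N_b$ part contributes negligibly to both the gap and the variance, and the signal scales with $d_i$). Once \eqref{plan:target} holds on the stated event, the remaining algebra is identical to the end of the proof of Lemma~\ref{prob:diff} and delivers the bound uniformly over $\theta\in B(\theta^\ast,\delta)$ and $i\in\mathcal N_u$.
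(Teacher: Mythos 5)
There is a genuine gap: you have restricted the sum $\sum_{z:\,z_i\neq z_i^{\ast}}L_i(\theta\mid z)$ to the $K-1$ configurations obtained by perturbing only the $i$-th coordinate while ``holding the remaining coordinates at the truth.'' That is not the comparison the lemma requires. The index set is all label vectors $z$ whose $i$-th entry differs from $z_i^{\ast}$, with $z_{-i}$ arbitrary, and this generality is essential downstream: in the proof of Theorem~\ref{thm:dense} the lemma is invoked to bound $\mathbb E_{q^{(n_1)}_{-z_i}}\,|A|\,l_{n_1}(\theta^{(n_1)}\mid z_i,z_{-i})$, where $z_{-i}$ is averaged over a distribution that is not a point mass at $z_{-i}^{\ast}$, so one needs $l_i(\theta\mid z)\le l_i(\theta\mid z^{\ast})-c\,d_i/|A|$ uniformly over the neighbours' labels as well. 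Accordingly, your invocation of C2$'$ only with $\mathcal N=\{i\}$ is insufficient; the paper uses C2$'$ with $\mathcal N=\{i\}\cup\{j:z_j\neq z_j^{\ast}\}$, pays a deviation probability $\exp\{-C\sum_{j\in\mathcal N}d_j\}$ for each such configuration, and then performs the union bound in \eqref{eqn:compute}--\eqref{decomp}. That union bound is the actual crux: flipping $n_0$ labels inside $\mathcal N_u$ has entropy roughly $(mK)^{n_0}$, beaten by $\exp\{-Cd_mn_0\}$ since $d_m=m^{r_d}\gg\log m$, while the bounded-degree neighbours of $i$ contribute entropy only $K^{|d_{i,\mathcal N_b}|}$ (this, not a variance argument, is where C6$'$ is used --- $l_i(\theta\mid z)$ depends on only finitely many $\mathcal N_b$ labels), beaten by $\exp\{-Cd_i\}$. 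Your proposal, if completed, proves only the weaker statement with $z_{-i}=z_{-i}^{\ast}$ fixed, which does not support the application.

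Your local Bernstein step (effective sample size $\Theta(d_i)$ rather than $|A|$) is correct and matches the mechanism in Lemmas~\ref{concentration} and~\ref{prob:diff}, and your explicit net over $B(\theta^{\ast},\delta)$ and union over $i\in\mathcal N_u$ are points the paper glosses over; but the combinatorial control over all configurations of $z_{-i}$ must be restored for the argument to prove the stated lemma.
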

\textbf{Proof of Lemma \ref{prob:diff:extend}}
Similar to the proof of Lemma \ref{prob:diff}, we can prove that
\begin{eqnarray}\label{target:lemma3}
l_i(\theta  \vert  z_{\mathcal N}, z_{-\mathcal N}^{\ast}) 
\leq l_i(\theta  \vert  z^{\ast}) - 
c/2 \frac{\sum_{j \in \mathcal N} d_j}{ \vert A \vert } 
\end{eqnarray}
holds for any fixed $z_{\mathcal N}$ with probability at least 
$1 - \exp\{- C(\sum_{j \in \mathcal N} d_j)\}$. 
Then, we can compute
\begin{eqnarray}\label{eqn:compute}
& &P(l_i(\theta  \vert  z_{\mathcal N}, 
z_{-\mathcal N}^{\ast}) \geq l_i(\theta  \vert  z^{\ast}) - 
c/2 \frac{\sum_{j \in \mathcal N} d_j}{ \vert A \vert } \nonumber \\
& & ~ \textrm{for some} ~  z_{N}) \nonumber \\
&\leq& \sum_{z_{-i} \neq z_{-i}^{\ast}} 
\exp\{ - C (d_i + \sum_{j: z_j \neq z_j^{\ast}} d_j) \} \nonumber \\
&\leq& \sum_{m_0=1}^n \sum_{ \vert z_{\mathcal N_u} - z_{\mathcal N_u}^{\ast} \vert _0 = m_0} 
\exp\{ - C d_n m_0 \} \nonumber \\
& & + K^{ \vert d_{i,\mathcal N_b} \vert } \exp\{-C d_i\} \label{decomp} \\
&\leq& \exp\{ - c_0 d_n\}  \nonumber,
\end{eqnarray}
by adjusting the constants.
\eqref{decomp} uses the fact that $l_i(\theta \vert z)$ only depends 
on a finite number of nodes in $\mathcal N_{b}$.
This completes the proof.

We define the estimator of latent class membership 
as $\hat z_i := \arg\max_{z} q_i^{(M)}(z)$.
The following result says that we can consistently 
estimate the latent class membership of those individuals with large degrees exponentially fast.

\begin{theorem}\label{thm:dense}
Under Conditions C1, C4, C5, C7 and C2', C3', C6', with 
probability $1 - M \exp\{-C d_n\}$, we have 
$q_i^{(m)}(z_i = z_i^{\ast}) \geq 1 - C \exp\{\-c_1 m d_n\}$ for all $i \in \mathcal N_u$ and $m = 1, \ldots, M$.
Especially when $N_u = \{1, \ldots, n\}$, we can recover true labels of all nodes.
\end{theorem}
\textbf{Proof of Theorem \ref{thm:dense}}
To prove this, we only need to show that 
$q_i^{(m)}(z_i^{\ast}) \geq 1 - C \exp\{-c_1 m d_n\}$
for all $i \in \mathcal N_u$ with probability $1 - \exp\{-Cd_n\}$. ($m = 1,2,\ldots$ and $c_1$ is some small constant.)
Without loss of generality, we can assume 
$\theta^{(m)}$ is always in $\mathcal B(\theta^{\ast}, \delta)$. 
(The proof of this argument is almost same as that in the proof of Theorem \ref{regret:thm}.)

Take any $i \in \mathcal N_u$. 
We first prove that 
$q_i^{(1)}(z_i^{\ast}) \geq 1 - C \exp\{-c_1 d_n\}$ for $i \in \mathcal N_u$.
This is true by applying Condition C7. 
In the following, we prove the result by induction.

According to Lemma \ref{prob:diff:extend} and Condition C2', we have that 
$ \vert A \vert  l_{i}(\theta^{(m_1)} \vert z) \leq  \vert A \vert  l_{i}(\theta^{(m_1)} \vert z^{\ast}) -
c d_n$ for any $z$ with $z_i \neq z_i^{\ast}$ with probability $1 - \exp\{-C d_n\}$. 
This implies that 
\begin{eqnarray*}
\mathbb E_{q_{-z_i}^{(m_1)}}  \vert A \vert  l_{m_1}(\theta^{(m_1)} \vert z_i , z_{-i})
\leq  \vert A \vert  l_{m_1}(\theta^{(m_1)} \vert z^{\ast}) - c d_n 
\end{eqnarray*}
for any $z_i \neq z_i^{\ast}$ and 
\begin{eqnarray*}
& & \mathbb E_{q_{-z_i}^{(m_1)}}  \vert A \vert  l_{m_1}(\theta^{(m_1)} \vert z_i^{\ast}, z_{-i}) \\
&\geq&  \vert A \vert  l_{n_1}(\theta^{(m_1)} \vert z^{\ast}) - C  \vert A \vert  \exp\{-c_1 m_1 d_n\} - 
C  \vert d_{i, N_b} \vert . 
\end{eqnarray*}
Combining these two facts, we have that 
\begin{eqnarray*}
& &E_{q_{-z_i}^{(m_1)}}  \vert A \vert  l_{m_1}(\theta^{(m_1)} \vert z_i , z_{-i}) \\
&<& \mathbb E_{q_{-z_i}^{(m_1)}}  \vert A \vert  
l_{m_1}(\theta^{(m_1)} \vert z_i^{\ast}, z_{-i}) - c_2 d_n
\end{eqnarray*}
holds for any $z_i \neq z_i^{\ast}$ and some adjusted constant $c_2 > c_1$.

By the recursive formula 
$$
S^{(m_1 + 1)}(z_i) = S^{(m_1)}(z_i) 
\exp\{\mathbb E_{q_{-z_i}^{(m_1)}}  \vert A \vert  l_m(\theta^{(m_1)}  \vert  z_i, z_{-i}) \},
$$
we then have 
$$\sum_{z_i \neq z_i^{\ast}} S^{(m_1 + 1)}(z_i) = 
S^{(m_1 + 1)}(z_i^{\ast}) O(\exp\{- (c_1 m_1 + c_2) d_n\}),$$ 
which indicates that 
\begin{eqnarray}
q^{(m_1 + 1)}(z_i^{\ast}) &\geq& 1 - \exp\{- (c_1 m_1 + c_2) d_n\} \nonumber \\
&\geq& 1 - \exp\{- (m_1 + 1) c_1 d_n \}.
\end{eqnarray}
Hence, we complete the proof by induction.
\label{app:theorem}

\noindent Lastly, it is easy to see that proof of Theorem 
\ref{thm:q:community} is the special case of Theorem \ref{thm:dense}.

\end{appendices}

\end{document}